\def\Id{{\openone}}
\newcommand{\be}{\begin{equation}}
	\newcommand{\ee}{\end{equation}}
\newcommand{\bea}{\begin{eqnarray}}
	\newcommand{\eea}{\end{eqnarray}}
\newcommand{\bse}{\begin{subequations}}
	\newcommand{\ese}{\end{subequations}}
\theoremstyle{plain}
\newtheorem{thm}{Theorem}
\newtheorem{lem}{Lemma}
\newtheorem{defn}{Definition}
\newcommand{\tr}{\mathrm{Tr}}
\newcommand{\mc}[1]{\mathcal{#1}}
\newcommand{\prlsection}[1]{{\em {#1}.---~}}
\newcommand{\xmark}{\ding{55}}%
\def\HH{\mathcal{H}}
\def\CC{\mathbb{C}}
\definecolor{tensor}{rgb}{1,1,1}
\definecolor{isometry}{rgb}{0.8,0.8,1}
\definecolor{unitary}{rgb}{0.8,0.5,.5}
\definecolor{gate}{rgb}{1.0,1.0,1.0}
\newcommand{\ATensor}[2]{
	\begin{scope}[shift={(#1)}]
		\draw (-1,0) -- (1,0);
		\draw (0,1) -- (0,0);
        \filldraw[fill=tensor] (0,0) circle [radius=0.5];

		\draw (0,0) node {\scriptsize #2};
	\end{scope}
}
\newcommand{\ADaggTensor}[2]{
	\begin{scope}[shift={(#1)}]
		\draw (-1,0) -- (1,0);
		\draw (0,-1) -- (0,0);
                \filldraw[fill=tensor] (0,0) circle [radius=0.5];
		\draw (0,0) node {\scriptsize #2};
	\end{scope}
}
\newcommand{\BTensor}[2]{
	\begin{scope}[shift={(#1)}]
		\draw (-1,0) -- (1,0);
		\foreach \x in {0,1,...,3}{
			\draw[shift={(-.3+0.2*\x,0)}] (0,1) -- (0,0);
		}
                \filldraw[fill=tensor] (0,0) circle [radius=0.5];
		\draw (0,0) node {\scriptsize #2};
	\end{scope}
}
\begin{document}

\title{Long-range nonstabilizerness and phases of matter}

\begin{abstract}
Long-range nonstabilizerness can be defined as the amount of nonstabilizerness which cannot be removed by shallow local quantum circuits. In this work, we study long-range nonstabilizerness in the context of many-body quantum physics, a task with possible implications for quantum-state preparation protocols and implementation of quantum-error correcting codes. After presenting a simple argument showing that long-range nonstabilizerness is a generic property of many-body states, we restrict to the class of ground states of gapped local Hamiltonians. We focus on one-dimensional systems and present rigorous results in the context of translation-invariant matrix product states (MPSs). By analyzing the fixed points of the MPS renormalization-group flow, we provide a sufficient condition for long-range nonstabilizerness, which depends entirely on the local MPS tensors. Physically, our condition captures the fact that the mutual information between distant regions of stabilizer fixed points is quantized, and this fact is not changed after applying shallow quantum circuits. We also discuss possible ramifications in the classification of phases of matter and quantum error correction.
\end{abstract}

\author{David Aram \surname{Korbany}}
\affiliation{Dipartimento di Fisica e Astronomia, Università di Bologna and INFN, Sezione di Bologna, via Irnerio 46, I-40126 Bologna, Italy}

\author{Michael J. \surname{Gullans}}
\affiliation{Joint Center for Quantum Information and Computer Science, University of Maryland and NIST, College Park, MD 20742}

\author{Lorenzo \surname{Piroli}}
\affiliation{Dipartimento di Fisica e Astronomia, Università di Bologna and INFN, Sezione di Bologna, via Irnerio 46, I-40126 Bologna, Italy}

\maketitle

\prlsection{Introduction} Stabilizer states and Clifford operations are fundamental tools in quantum information theory~\cite{gottesman1997stabilizer,nielsen2011quantum}. Clifford operators form a special set of quantum operations that generate the stabilizer states and can be simulated efficiently on classical computers~\cite{gottesman1998theory,gottesman1998heisenberg,aaronson2004improved}. Thus, stabilizer states and Clifford operations provide very useful toy models for quantum computation. In addition, due to their special properties, stabilizer states provide ideal building blocks for the construction of quantum error-correcting codes~\cite{kitaev2003fault,eastin2009restriction}.

A closely related notion is that of nonstabilizerness~\cite{bravyi2005universal,veitch2014resource} (also known as magic), which, roughly speaking, quantifies the degree to which a certain state deviates from a stabilizer state. Nonstabilizerness has received increasing attention in the past few years, also due to the recent progress in fault-tolerant quantum computing~\cite{bluvstein2024logical,acharya2024quantum}. Indeed, fault-tolerant implementation of Clifford operations is efficient and often considerably less demanding than other operations~\cite{eastin2009restriction, preskill1998fault,shor1996fault, litinski2019magic,orts2023efficient, orts2023efficient}. Quantifying nonstabilizerness is thus important from the point of view of quantum simulation: wavefunctions characterized by large nonstabilizerness will generally require more experimental resources to be prepared on a fault-tolerant quantum computer~\cite{howard2017application}. 

\begin{table}[t]
    \renewcommand{\arraystretch}{2}
	\begin{ruledtabular}
			\begin{tabular}{c |c  c } 
				& SRE   &LRE    \\
        \hline
        SRN & \hspace{0.3cm} $\ket{0}^{\otimes N}$ \hspace{0.3cm} & \hspace{0.3cm} $(\ket{0}^{\otimes N}+\ket{1}^{\otimes N})/\sqrt{2}$ \hspace{0.3cm} \\
        \hline
        LRN & \xmark & Theorem~\ref{th:main_result}\\
			\end{tabular}
		    \caption{The table shows the relation between short- and long-range entanglement (SRE and LRE, respectively) and short- and long-range nonstabilizerness (SRN and LRN, respectively) in MPSs. While shot-range nonstabilizerness is compatible with both short- and long-range entanglement, long-range nonstabilizerness cannot be realized in short-range entangled states. Our main result, Theorem~\ref{th:main_result}, provides a sufficient condition for long-range nonstabilizerness.}
    \label{tab:LRN_vs_LRE}
		\end{ruledtabular}
	\end{table}

Recently, nonstabilizerness has also been studied from a theoretical perspective in many-body physics~\cite{white2021conformal, ellison2021symmetry,sarkar2020characterization,sewell2022mana}, and much progress has been made in the problem of finding corresponding computable measures~\cite{oliviero2022magic,liu2022many,haug2022quantifying, tarabunga2023many,tarabunga2024critical,falcao2024nonstabilizerness,tarabunga2024nonstabilizerness2}. An emerging theme~\cite{white2021conformal} is that nonstabilizerness could potentially be a useful tool to characterize many-body states and phases of matter~\cite{white2021conformal,ellison2021symmetry,sarkar2020characterization,liu2022many,haug2022quantifying,tarabunga2023many,tarabunga2024critical,falcao2024nonstabilizerness}, as well as quantum dynamics~\cite{leone2022stabilizer,leone2021quantum,haferkamp2022random,haug2024probing,lopez2024exact,turkeshi2024magic,dowling2024magic}. Several works have also investigated the interplay between nonstabilizerness and other physical properties, such as entanglement~\cite{tirrito2024quantifying,gu2024magic,fux2023entanglement,frau2024nonstabilizerness,bejan2024dynamical,tarabunga2024magictransition,iannotti2025entanglement} and quantum chaos~\cite{lami2024quantum,turkeshi2023measuring,leone2021quantum,leone2023nonstabilizerness,leone2023phase,garcia2023resource,turkeshi2025pauli,bera2025non}.

An interesting development has been the introduction of long-range (LR) nonstabilizerness~\cite{white2021conformal,ellison2021symmetry}. Originally, it was studied in the context of conformal field theories~\cite{white2021conformal} where it was defined in terms of certain long-range correlations, see also Ref.~\cite{sarkar2020characterization}. It was shown in particular that such long-range correlations are non-vanishing in the critical ground state of a $q=3$ Potts model, thus displaying LR nonstabilizerness. Similar non-local correlation functions, written in terms of the so-called stabilizer R\'enyi entropy~\cite{leone2022stabilizer}, were later studied in different critical spin chains~\cite{tarabunga2023many,frau2024stabilizer} and quantum dynamics~\cite{lopez2024exact,tarabunga2024magic}.

We can give a more transparent definition of LR nonstabilizerness, in analogy with the notion of long-range entanglement~\cite{white2021conformal,ellison2021symmetry,haug2022quantifying}. Specifically, LR nonstabilizerness can be defined as the amount of nonstabilizerness which cannot be removed by shallow local quantum circuits. This definition is useful from the point of view of quantum simulation. Indeed, in many-body physics, one is often interested in long-range correlations, which are not changed by local unitary transformations such as shallow quantum circuits. Therefore, when a target state does not have LR nonstabilizerness, it is possible to find another state with the same long-range correlations but which is a stabilizer state and thus easier to implement fault tolerantly. Additionally, one can define symmetry-protected LR nonstabilizerness~\cite{ellison2021symmetry} by imposing a symmetry on the circuits and the states. As an interesting result, it was proven that certain symmetry-protected topological (SPT) phases necessarily display symmetry-protected LR nonstabilizerness~\cite{ellison2021symmetry}.

Despite this recent work, the notion of LR nonstabilizerness remains largely unexplored. For instance, in the context of ground-state physics, an important open question pertains to the possibility of finding signatures of LR nonstabilizerness in a state wavefunction. This question is naturally inspired by the established characterization of certain long-range entangled states, which are known to display a non-zero topological entanglement entropy~\cite{hamma2005ground,levin2006detecting, kitaev2006topological,zeng2015quantum,zeng2016topological,fromholz2020entanglement}. 

In this work, we formalize and tackle this problem. After presenting a simple argument showing that LR nonstabilizerness is a generic property of many-body states, we provide rigorous results in the simplest case of one-dimensional $(1D)$ systems described by translation-invariant (TI) matrix product states (MPSs)~\cite{perez2007matrix,cirac2017matrix,cirac2020matrix}. By analyzing the fixed points of the MPS renormalization-group
flow, we provide a sufficient condition for LR nonstabilizerness that depends entirely on the local MPS tensors. In addition to shedding further light into the structure of nonstabilizerness in many-body wavefunctions, our work may have ramifications in the context of quantum error correction and in the preparation of encoded states of stabilizer codes.

\prlsection{Long-range nonstabilizerness} We begin by introducing Clifford operations and stabilizer states~\cite{nielsen2011quantum}. We consider a system of $N$ qubits associated with the Hilbert space $\mathcal{H}=\otimes_{j=1}^N\mathcal{H}_j$, where $\mathcal{H}_j\simeq \mathbb{C}^2$. We denote by $\sigma_j^\alpha$ the Pauli matrices acting on site $j$, with $\alpha=0,1,2,3$ ($\sigma^0=\openone$ being the identity) and by $\{|0\rangle, |1\rangle\}$ the local computational basis. We also denote by $\mathcal{P}_N$ the set of all $N$-qubit Pauli strings including a global phase $\phi\in [\pm 1,\pm i]$. The Clifford group is the set of
unitaries $U$ such that $UPU^\dagger \in \mathcal{P}_N$ for all $P \in \mathcal{P}_N$. The pure stabilizer states are the states
generated by applying elements of the Clifford group to the reference state $\ket{0}^{\otimes N}$.

Since we will be interested in the thermodynamic limit, we will consider sequences of states $\{\ket{\Psi_N}\in \mathcal{H}_N\}_{N \in \mathbb{N}}$ defined on systems of increasing size $N$. Assuming that the qubits are arranged over a regular lattice of dimension $D$ (we will mostly focus on $D=1$), we can define the family of local quantum circuits (QCs) as the unitaries $ Q_\ell =  V_{\ell} \ldots V_2 V_1$, where each ``layer'' $V_{n}$ contains quantum gates acting on disjoint sets of two neighboring qubits (\emph{i.e.} we require local gates). The integer $\ell$ is the depth of the circuit. We will not restrict the gates to a given gate set. We are now in a position to define LR nonstabilizerness.

\begin{defn}\label{def:sm}
A family of states $\{\ket{\psi_N}\}_{N \in \mathbb{N}}$, has short-range (SR) nonstabilizerness if for all $\varepsilon_0>0$ and $\alpha>0$, there exists a local QC $Q_{D_N}$ of depth $D_N =O( {\rm polylog}(N))$ and a stabilizer state $\ket{S_N}$ such that for large enough $N$
\be \label{eq:sm}
\Delta(Q_{D_N} \ket{\psi_N},\ket{S_N})\leq  \frac{\varepsilon_0}{ N^\alpha} = \varepsilon_N,
\ee
where $\Delta(\ket{\psi},\ket{\phi})=\sqrt{1-|\braket{\psi|\phi}|^2}$ is the trace distance. If the sequence $\{\ket{\psi_N}\}_{N \in \mathbb{N}}$ does not have SR nonstabilizerness, we say that it has LR nonstabilizerness.
\end{defn}
Compared to previous definitions~\cite{ellison2021symmetry}, Eq.~\eqref{eq:sm} allows for some error $\varepsilon_N$ that vanishes (in a mild way) in the thermodynamic limit. Allowing for a small error is needed in order to capture states with correlations decaying exponentially with the distance (shallow QCs cannot introduce exponential tails). This definition is analogous to that of LR entanglement, with the difference that a long-range entangled state cannot be mapped to a product state by a shallow circuit $Q_N$. Since any product state is locally equivalent to the stabilizer state $\ket{0}^{\otimes N}$, LR nonstabilizerness is stronger than LR entanglement. In fact, LR nonstabilizerness is strictly stronger, in the sense that there are states with LR entanglement but not LR nonstabilizerness, cf. Table~\ref{tab:LRN_vs_LRE}. We also note that Def.~\ref{def:sm} differs from the one given in Refs.~\cite{cao2024gravitational,cao2024non,qian2025quantum}. For a bipartition of the system into regions $A$ and $B$, these references considered unitary operations of the form $U_A \otimes U_B$ and not QCs (which alter the entanglement between $A$ and $B$), leading to a notion of non-local nonstabilizerness which is of interest in different contexts. It is also worth repeating that Def.~\ref{def:sm} is a statement about the thermodynamic limit.

\prlsection{Typicality of LR nonstabilizerness} The first natural question is whether LR nonstabilizerness exists at all. We give an heuristic counting argument showing that, in fact, LR nonstabilizerness is a \emph{typical} property of many-body quantum states. To this end, we count the number of quantum states in a Hilbert space of dimension $\mathcal{D}=2^N$ that can be distinguished up to an error $\varepsilon_N$. This is equivalent to the number of $\varepsilon_N$-balls in the Hilbert space, scaling as~\cite{mele2024introduction} $n_B\sim e^{(1-\varepsilon^2_N) 2^{N-1}}$. Next, we count the number of stabilizer states to which we apply a circuit of depth $D_N$. First, we recall that the number of stabilizer states $n_S$ scales as~\cite{singal2023counting} $n_S\sim 2^{N^2/2}$. Second, we count the number of circuits of depth $D_N$. To this end, we recall that a quantum circuit of $m$ two-qubit gates can be approximated to an error $\varepsilon_N$ (in operator norm) by a quantum circuit of $O(m \operatorname{polylog}(m/\varepsilon_N))$ gates chosen from a universal gate set~\cite{nielsen2011quantum} of a finite number of elements, which we denote by $n_g$. Since the total number of space-time positions for a gate is $\sim ND_N \operatorname{polylog}(N D_N/\varepsilon_N)$, and in each position we can choose $n_g$ gates, the total number of such local quantum circuits is $n_C\sim e^{N D_N\operatorname{polylog}(N D_N/\varepsilon_N)}$. In conclusion, for $D_N\sim {\rm polylog}(N)$, $\varepsilon_N=\varepsilon_0/N^\alpha$, we have
\begin{equation}
\lim_{N\to \infty}\frac{ n_C n_S}{n_B}\to 0\,,
\end{equation}
meaning that LR nonstabilizerness is typical in many-body states. This is the first main result of our work.

This result is analogous to the statement that LR entanglement is a typical property of many-body states. In that case, typicality of LR entanglement is due to the well-known fact that most states in the many-body Hilbert space display an extensive bipartite entanglement entropy~\cite{page1993average}, which cannot be removed, not even approximately, by shallow QCs~\cite{piroli2020quantum}.

\prlsection{LR nonstabilizerness and phases of matter} After establishing the typicality of LR nonstabilizerness, we turn to the more interesting problem of characterizing LR nonstabilizerness in the very important class of ground states of local gapped Hamiltonians. Once again, a strong motivation to tackle this problem comes from the study of entanglement. Indeed, the characterization of LR entanglement in ground states of local gapped Hamiltonians has played a pivotal role in our understanding of topological phases of matter~\cite{chen2010local,hastings2013classifying,zeng2015quantum,zeng2015gapped,chiu2016classification}, revealing important hidden structure of the corresponding wavefunctions.

As anticipated, we will consider the simplest case of $1D$ systems and provide rigorous results in the context of MPSs. This is not a restriction, as MPSs have been shown to approximate arbitrarily well ground states of $1D$ local Hamiltonians~\cite{verstraete2006matrix,hastings2007area,arad2013area,huang2014area}, providing at the same time useful toy models for analytic inspection~\cite{cirac2017matrix,cirac2020matrix}. 

As we are interested in describing quantum phases in the thermodynamic limit, we will focus on TI MPSs. For a lattice of local Hilbert-space dimension $d$, MPSs are defined by a single tensor $A^i_{\alpha,\beta}$, where $i=1,\dots , d$, $\alpha,\beta = 1,\dots ,\chi$, their wavefunction taking the form
\be
\ket{v^{(N)}(A)}=\sum_{i_1, \ldots, i_N=1}^d \operatorname{tr}\left(A^{i_1} \ldots A^{i_N}\right)\left|i_1 \dots i_N\right\rangle\,.
\ee
Here we view $A^i$ as a $\chi$-by-$\chi$ matrix, where $\chi$ is called the bond dimension.

Before proceeding, it is necessary to recall some facts about MPSs. We first introduce the transfer matrix
\begin{equation}
	\mathbb{E}_A = \sum_{i=1}^d A^{i} \otimes A^{\ast i}\,.
\label{eq:transfer_matrix}
\end{equation}
Next, we recall two key concepts: the blocking procedure and the MPS canonical form (CF)~\cite{cirac2017matrix}. Given an MPS with tensor $A^i$, one can construct a new tensor by grouping (blocking) $q$ sites, such that the corresponding matrices are obtained by the product $B^i=A^{i_1}\cdots A^{i_q}$, where $i$ denotes all possible sets of indices $\{i_1,\ldots i_q\}$. The new tensor $B$ has physical dimension $d^q$, the same bond dimension $\chi$ and transfer matrix $\mathbb{E}_B=\mathbb{E}_A^q$.

In order to introduce the CF, we first define normal tensors. A tensor $A$ is called normal, if $(i)$ it is irreducible, \emph{i.e.}  the $A^i$ have
no common nontrivial invariant subspace, and $(ii)$ $\mathbb{E}_A $
has a unique largest eigenvalue $\lambda_1=1$ and no other eigenvalue with the same magnitude. Physically, MPSs defined by a normal tensor display short-range correlation functions (and thus SR entanglement), where the correlation length $\xi$ is determined by the first subleading eigenvalue  $\lambda_2(\mathbb{E}_A)$ via $\xi=-1/\ln |\lambda_2|$.

A fundamental result is that, after blocking and possibly up to a gauge transformation $A^{i}\to W A^{i} W^{-1}$~\cite{cirac2017matrix}, every tensor $A^{i}$ can be expressed in terms of normal tensors in the CF $A^i=\bigoplus_{j=1}^b \operatorname{diag}\left(\mu_{j, 1}, \ldots, \mu_{j, m_j}\right) \otimes A_j^i$, where $\mu_{j,k}$ are complex numbers, with $|\mu_{j,k}|\leq 1$ and at least
one of them having magnitude exactly one. Here, $A^{i}_j$ are normal tensors generating MPSs that are mutually orthogonal in the thermodynamic limit. It follows that any normalized TI MPS can be written as
\begin{equation}\label{eq:CF}
    \left|\phi_N\right\rangle=\frac{1}{c_N} \sum_{j=1}^b \beta^{(N)}_j\ket{v^{(N)}(A_j)}\,,
\end{equation}
where $\beta^{(N)}_j=\sum_{k=1}^{m_j} \mu_{j, k}^N$, while $c_N$ is the normalization. States of the form~\eqref{eq:CF} with $b>1$ may represent long-range entangled states, a typical example being the superposition of symmetry-broken ground states in a degenerate ground space of a symmetric local Hamiltonian.

\prlsection{RG flow and LR nonstabilizerness} A powerful tool in the MPS framework is the possibility to define a real-space renormalization-group (RG) flow~\cite{cirac2017matrix}. The RG iteration consists in blocking $2$ sites, $A^i\mapsto B^i=A^{i_1} A^{i_2}$, followed by a polar decomposition $B=VC$, where V is an isometry $V^\dagger V=\openone_{\chi^2}$. The RG flow maps the tensor $A$ into $C$, implementing a coarse-graining procedure. 

The form of RG fixed points is known~\cite{cirac2017matrix}. In order to define them, consider a system of $N$ sites, each hosting a left and right qudit $L_n$, $R_n$. Then, up to local isometries acting on adjacent sites $L_n,R_n$, they read~\cite{cirac2017matrix}
\begin{equation}\label{eq:fixed_points}
    |\tilde{\phi}_N\rangle=\sum_{j=1}^b \alpha_j^{(N)}\ket{\Omega_j} \,,
\end{equation}
where $\ket{\Omega_j} =\bigotimes_{i=1}^{N} \ket{\omega_j}_{R_i ,L_{i+1}}$ ($\alpha^{(N)}_j$ may depend on $N$). Here $\ket{\omega_j}_{R_i ,L_{i+1}}=\sum_{m_j=1}^{d_j} \sqrt{\lambda_{m_j}}\ket{m_j}_{R_{i}}\otimes \ket{m_{j}}_{L_{i+1}}$, with $\braket{m_j|m_{j^\prime}}_{R_i ,L_{i+1}}=\delta_{j,j'}$
for all $i$. This condition states the local orthogonality of the RG fixed points.

RG fixed points have a simple form, allowing us to make progress on the characterization of LR nonstabilizerness. To this end, we make use of recent results constructing efficient preparation-protocols for MPSs~\cite{piroli2021quantum,malz2024preparation}. Given an MPS $\ket{\phi_N}$, which we can take to be in CF as in ~\eqref{eq:CF}, and any $\varepsilon_N>0$ (possibly depending on $N$), it was shown in Ref.~\cite{malz2024preparation} that there exists an RG fixed-point MPS $\ket{\tilde{\phi}_N}$ and a QC of depth $D_N=O(\log(N/\varepsilon_N))$ such that
$\Delta(\ket{\phi_N},Q_{D_N}\ket{\tilde{\phi}_N})\leq \varepsilon_N$. 

Note that $\ket{\tilde{\phi}_N}$ can be entirely determined based on the knowledge of the local tensors of $\ket{\phi_N}$. Hence, since any MPS is related to an MPS in the form~\eqref{eq:fixed_points} by a shallow QC, Def.~\ref{def:sm} implies that an MPS $\ket{\phi_N}$ has LR nonstabilizerness if and only the corresponding state $\ket{\tilde{\phi}_N}$ does~\footnote{The RG fixed points are typically defined on chains of qudits of dimension $d$. However, we can always view a local $d$-dimensional space as that of $\lceil \log_2(d)\rceil$ qubits. Note that any shallow circuit made of $2$-qudit gates on the qudit chain can always be expressed as a shallow local circuit made of $2$-qubit gates in the associated qubit chain, possibly up to a polylogarithmic overhead.}. We can therefore restrict to studying LR nonstabilizerness in the class of RG fixed points.

\prlsection{A sufficient condition for LR nonstabilizerness} We are left with the task of characterizing LR nonstabilizerness of RG fixed points. Clearly, LR nonstabilizerness requires $b>1$ in Eq.~\eqref{eq:fixed_points}, otherwise the state has SR entanglement and thus SR nonstabilizerness, cf. Table~\ref{tab:LRN_vs_LRE}. When $b>1$, the problem of characterizing LR nonstabilizerness becomes non-trivial. To see this, consider RG fixed points of the form 
\begin{equation}
\label{eq:ghz}
|\tilde{\phi}_N[\alpha,\beta]\rangle=\alpha|0\rangle^{\otimes N}+\beta|1\rangle^{\otimes N}\,,
\end{equation}
where we take $\alpha$ and $\beta$ to be $N$-independent and with $ |\alpha|^2 + |\beta|^2 = 1$~\footnote{One could ask if $|\tilde{\phi}[\alpha,\beta]\rangle$ can be represented as a TI MPS (with $N$-independent tensors and without boundary tensors), for any choice of $\alpha,\beta\in \mathbb{C}$, or at least for a dense set of them. This is indeed the case, possibly restricting to some values of $N$. For example, by summing $n$ product states $\ket{0}^{\otimes N}$ and $m$ product states $\ket{1}^{\otimes N}$, we obtain a TI MPS of the form~\eqref{eq:ghz} with $\alpha=n/\sqrt{n^2+m^2}$ and $\beta=m/\sqrt{n^2+m^2}$, which can approximate any pair of positive real numbers. When $\alpha$ is complex (we can always assume that $\beta\in \mathbb{R}^+$ up to a global phase), we can sum product states defined by local tensors with an additional phase $e^{i\varphi}$, yielding the state $e^{i N\varphi}\ket{0}^{\otimes N}$. Choosing $\varphi=2\pi r/s$  with integers $r, s$ and restricting to system sizes $N\equiv 1$ (mod $s$), we obtain states of the form~\eqref{eq:ghz} where $\alpha$ has a phase angle $\varphi=2\pi r/s$\label{footnote}}. When $\alpha=0, 1$, $\ket{\tilde{\phi}[\alpha,\beta]}$ is a stabilizer state, while for $\alpha= 2^{-1/2}$ it is easy to show that $\ket{\tilde{\phi}[\alpha,\beta]}$ has SR nonstabilizerness. Indeed, in this case $\beta=e^{i\varphi}/\sqrt{2}$ for some $\varphi$. By applying the single-qubit gate $U_1=e^{i\varphi (\sigma^z_1-\openone_1)/2}$ on the first site, we obtain the stabilizer state $(\ket{0}^{\otimes N}+\ket{1}^{\otimes N})/\sqrt{2}$. Therefore, $\ket{\tilde{\phi}[1/\sqrt{2},\beta]}$ has SR nonstabilizerness. However, it is not clear if $\ket{\tilde{\phi}[\alpha,\beta]}$ has LR nonstabilizerness for $\alpha\neq 0, 1, 2^{-1/2}$. This is indeed the case, as follows from our second main result.
\begin{thm}\label{th:main_result}
A sufficient condition for an MPS to have LR nonstabilizerness, according to Def.~\ref{def:sm}, is that its RG fixed point~\eqref{eq:fixed_points} satisfies
\begin{equation}\label{eq:main_eq}
    \lim_{N\to \infty} H(\{|\alpha^{(N)}_j|^2\})\notin \mathbb{N}
\end{equation}
where we introduced the Shannon entropy $H(\{p_j\}) = -\sum_j p_j \log_2(p_j)$ and assumed $\sum_j|\alpha^{(N)}_j|^2=1$.
\end{thm}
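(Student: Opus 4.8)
\emph{Proof strategy.}
The plan is to isolate a region-based quantity that detects the obstruction in \eqref{eq:main_eq}: the quantum mutual information $I_\psi(A\!:\!C)=S(\rho_A)+S(\rho_C)-S(\rho_{AC})$ of a state $\ket{\psi}$ between two regions $A,C$ sitting at opposite ends of the chain and separated by a bulk region much wider than the circuit depth. I would show that this quantity (i)~equals $H(\{|\alpha^{(N)}_j|^2\})$ on the RG fixed point \eqref{eq:fixed_points}, (ii)~is left \emph{exactly} unchanged when a shallow local circuit is applied, and (iii)~is a non-negative integer (in bits) for every pure stabilizer state; a continuity estimate then forces $H_\infty:=\lim_{N\to\infty}H(\{|\alpha^{(N)}_j|^2\})$ to be an integer, contradicting the hypothesis. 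By the discussion preceding the theorem --- an MPS has LR nonstabilizerness iff its RG fixed point does, and the on-site local isometries relating \eqref{eq:fixed_points} to the genuine fixed point can be absorbed into the shallow circuit --- it suffices to treat $\ket{\tilde\phi_N}$ as in \eqref{eq:fixed_points}, with each fixed-point qudit viewed as $O(1)$ qubits (cf.\ the footnote).

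\emph{Step 1: mutual information of the fixed point.}
Here I would use the local orthogonality $\braket{m_j|m_{j'}}_{R_i,L_{i+1}}=\delta_{jj'}$ built into \eqref{eq:fixed_points}. For a region $X$ that is a union of whole sites with nonempty complement, it makes the $j\neq j'$ cross terms in $\tr_{\bar X}(\ket{\Omega_j}\bra{\Omega_{j'}})$ vanish, so $\rho_X=\bigoplus_j|\alpha^{(N)}_j|^2\,\rho_X^{(j)}$ is block diagonal in $j$, with each $\rho_X^{(j)}$ factorizing into pure states on the bonds internal to $X$ times $\operatorname{diag}(\lambda^{(j)}_1,\dots,\lambda^{(j)}_{d_j})$ on each half-bond crossing $\partial X$. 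Hence $S(\rho_X)=H(\{|\alpha^{(N)}_j|^2\})+r_X\,\bar h^{(N)}$, with $r_X$ the number of boundary bonds of $X$ and $\bar h^{(N)}=\sum_j|\alpha^{(N)}_j|^2 H(\{\lambda^{(j)}_m\})$. Choosing $A,C$ at the two ends with one connected bulk region in between gives $r_A=r_C=1$ and $r_{AC}=2$, so the $\bar h^{(N)}$ contributions cancel and $I_{\tilde\phi_N}(A\!:\!C)=H(\{|\alpha^{(N)}_j|^2\})$ --- the same value for \emph{every} such pair of far-apart regions. This is the precise content of the ``quantized mutual information of stabilizer fixed points'' mentioned in the introduction; in particular, for the family \eqref{eq:ghz} it yields $I(A\!:\!C)=-|\alpha|^2\log_2|\alpha|^2-|\beta|^2\log_2|\beta|^2$, a non-integer exactly when $\alpha\notin\{0,1\}$ and $|\alpha|\neq 2^{-1/2}$, which already settles the question raised just above the theorem. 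Verifying the block structure and this cancellation rigorously is, I expect, the main technical step; the remaining steps use standard tools.

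\emph{Step 2: invariance under shallow circuits.}
The idea is a light-cone argument. Take $Q=Q_{D_N}$ of depth $D_N=O(\operatorname{polylog}N)$ and let $A,C$ each comprise $\Theta(D_N)$ sites at the two ends, with constants chosen so that for large $N$ the $O(D_N)$-site neighborhoods $\mc{N}_A,\mc{N}_C$ are disjoint and the inner cores $A^-\subseteq A$, $C^-\subseteq C$ --- sites whose forward light cone stays inside $A$, resp.\ $C$ --- are nonempty. Conjugating a product observable on $A\cup C$ by $Q$ yields a product of observables supported in $\mc{N}_A$ and $\mc{N}_C$, so $\rho_{AC}[Q\tilde\phi_N]$ is the image of $\rho_{\mc{N}_A\mc{N}_C}[\tilde\phi_N]$ under a \emph{product} channel $\Gamma_A\otimes\Gamma_C$; monotonicity of mutual information under local channels gives $I_{Q\tilde\phi_N}(A\!:\!C)\le I_{\tilde\phi_N}(\mc{N}_A\!:\!\mc{N}_C)$. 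The same argument applied to $Q^\dagger$ and the cores gives $I_{\tilde\phi_N}(A^-\!:\!C^-)\le I_{Q\tilde\phi_N}(A\!:\!C)$. By Step~1 both outer quantities equal $H(\{|\alpha^{(N)}_j|^2\})$, so $I_{Q\tilde\phi_N}(A\!:\!C)=H(\{|\alpha^{(N)}_j|^2\})$ exactly. The two-sidedness here is essential: a one-sided bound would not preclude a non-integer $H_\infty$ sitting strictly between consecutive integers.

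\emph{Step 3: closing the argument.}
Suppose $\ket{\tilde\phi_N}$ had SR nonstabilizerness. Fixing $\varepsilon_0=1$ and exponent $\alpha=1$ in \eqref{eq:sm} produces a depth-$O(\operatorname{polylog}N)$ circuit $Q_{D_N}$ and stabilizer states $\ket{S_N}$ with $\Delta(Q_{D_N}\ket{\tilde\phi_N},\ket{S_N})\le 1/N$ for large $N$. Contractivity of the trace distance under partial trace together with the Fannes--Audenaert continuity bound, using that $A,C$ comprise $O(\operatorname{polylog}N)$ sites, give $|S(\rho_X[Q_{D_N}\tilde\phi_N])-S(\rho_X[S_N])|=O(\operatorname{polylog}(N)/N)\to0$ for $X\in\{A,C,AC\}$, hence $I_{S_N}(A\!:\!C)=I_{Q_{D_N}\tilde\phi_N}(A\!:\!C)+o(1)$. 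But any reduced state of a pure stabilizer state is proportional to a projector, so $S(\rho_X[S_N])\in\mathbb{Z}_{\ge0}$ and thus $I_{S_N}(A\!:\!C)\in\mathbb{Z}_{\ge0}$. Combining with Step~2, the integer sequence $I_{S_N}(A\!:\!C)=H(\{|\alpha^{(N)}_j|^2\})+o(1)$ converges to $H_\infty$, forcing $H_\infty\in\mathbb{Z}_{\ge0}\subseteq\mathbb{N}$ --- a contradiction. Hence $\ket{\tilde\phi_N}$, and with it the original MPS, has LR nonstabilizerness. (If one does not assume the limit exists, the same argument gives LR nonstabilizerness whenever $\operatorname{dist}(H(\{|\alpha^{(N)}_j|^2\}),\mathbb{Z})\not\to0$.)
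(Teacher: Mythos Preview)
Your proof is correct and follows the same overall architecture as the paper: (i) the mutual information between two far-separated regions of the RG fixed point equals $H(\{|\alpha_j^{(N)}|^2\})$; (ii) this value is unchanged by a shallow circuit; (iii) for stabilizer states the mutual information is an integer; (iv) Fannes--Audenaert continuity on regions of size $O(\operatorname{polylog}N)$ forces $H_\infty\in\mathbb N$.

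The one substantive difference is in Step~2. The paper proves invariance by an \emph{explicit} structural lemma (their Lemma~\ref{lemma:main_lemma_tripartion}): after tracing out the bulk and conjugating by suitable local unitaries $U_A,U_B$, the reduced state is exactly the original fixed-point reduced state with boundary channels applied, and the mutual information is then computed directly from the block-diagonal form. Your argument instead sandwiches $I_{Q\tilde\phi_N}(A\!:\!C)$ between $I_{\tilde\phi_N}(A^-\!:\!C^-)$ and $I_{\tilde\phi_N}(\mc N_A\!:\!\mc N_C)$ using monotonicity of mutual information under product channels, applied once to $Q$ and once to $Q^\dagger$. This is a genuinely cleaner route: it avoids the explicit circuit decomposition and only needs Step~1 to hold for \emph{all} pairs of well-separated intervals, which you already have. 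The paper's route, on the other hand, gives the explicit form of $\sigma_{AB}$, which is what they reuse in the proof of Theorem~\ref{th:main_result_2}.

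One minor geometric slip: the paper's MPS are periodic, so there are no ``ends'' and every interval has two boundary bonds; your counts $r_A=r_C=1$, $r_{AC}=2$ are for an open chain. On the ring one gets $r_A=r_C=2$, $r_{AC}=4$, but the cancellation $r_A+r_C-r_{AC}=0$ and hence $I=H$ still holds, so the conclusion is unaffected.
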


\begin{figure}
    \centering
    \includegraphics[scale=0.35]{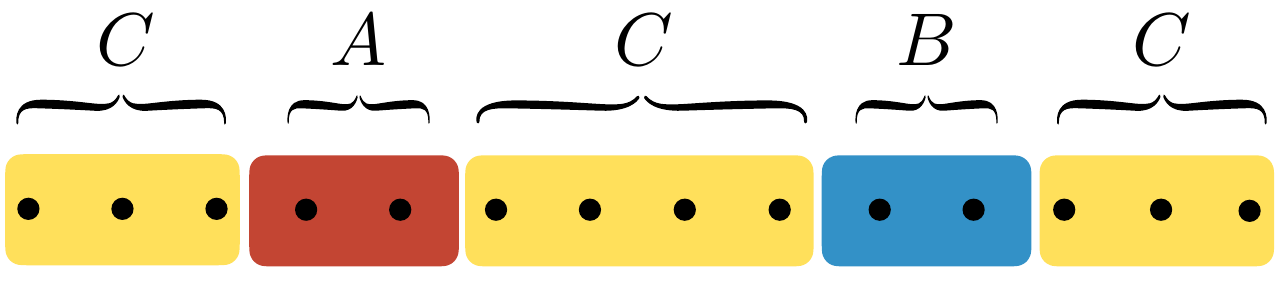}
    \caption{Partition considered in the proof of Theorem~\ref{th:main_result}. The $1D$ periodic chain is
divided into three disjoint regions $A$, $B$, and $C$, where $A$ and $B$ are sufficiently separated intervals, while $C$ is the complement of $A\cup B$.}
    \label{fig:partition}
\end{figure}

Theorem~\ref{th:main_result} relies on the special entanglement features of stabilizer states~\cite{hamma2005bipartite,hamma2005ground}. Let $\ket{\tilde{\phi}_N}$ be a RG fixed point of the form~\eqref{eq:fixed_points} and assume that there exists $Q_{D_N}$ with $D_N=O(\operatorname{polylog}(N))$ such that
\be\label{eq:assumption_contraditction}
\Delta(Q_{D_N}\ket{\tilde{\phi}_N},\ket{S_N}) \leq  \varepsilon_N\,,
\ee
where $\ket{S_N}$ are stabilizer states. Now, consider a partition of the system into three regions, $A$, $B$ $C$, where $A$ and $B$ are disconnected intervals, while $C$ is the complement of $A\cup B$, cf. Fig.~\ref{fig:partition}. We take the size of each connected component of $C$  to be $O(N)$, and $|A|=|B|$ with 
\begin{equation}\label{eq:size_AB}
     4D_N+4\leq   |A\cup B| \leq 8 D_N\,.
\end{equation}
For any state $\ket{\psi}$, we define the mutual information~\cite{nielsen2011quantum} $I_{A,B}[\psi]=S(\rho_A)+S(\rho_B)-S(\rho_{AB})$ where $\rho_{R}={\rm Tr}_{\bar{R}}(\ket{\psi}\bra{\psi})$ ($\bar{R}$ is the complement of the region $R$), while $S(\rho)=-{\rm Tr}[\rho\log_2 \rho]$ is the von Neumann entropy. The proof of Theorem~\ref{th:main_result} requires the following facts, which are rigorously stated and proved in Ref.~\cite{SM}:
\begin{enumerate}
\item \label{point_1}The mutual information of the state $\ket{\psi_N}=Q_{D_N}\ket{\tilde{\phi}_N}$ is $I_{A,B}[\psi_N]=H[\{|\alpha_j^{(N)}|^2\}]$ where $\alpha_j^{(N)}$ are the coefficients appearing in Eq.~\eqref{eq:fixed_points}. This is a consequence of the fact that, for suitably chosen partitions $A\cup C \cup B$, the mutual information  of RG fixed points cannot be changed by shallow circuits;
\item \label{point_2} The mutual information of any stabilizer state $\ket{S_N}$ is quantized, $I_{A,B}[S_N]=r\,, r\in \mathbb{N}$;
\item \label{point_3} $I_{A,B}[\nu]$ is a continuous function of the state $\ket{\nu}$, which follows from the Fannes inequality~\cite{audenaert2007sharp}: given two states $\rho_R$ and $\sigma_R$ in a region $R$, with trace distance $\delta = ||\rho-\sigma ||_1$/2, we have 
\begin{equation}\label{eq:fannes}
    |S(\rho_R) - S(\sigma_R)|\leq \delta |R| + H_{\rm bin}(\delta)\,,
\end{equation}
where $H_{\rm bin}(\varepsilon)=-\varepsilon\log_2 \varepsilon -(1-\varepsilon)\log_2 (1-\varepsilon)$ is the binary entropy.
\end{enumerate}
Now, Eq.~\eqref{eq:assumption_contraditction} implies that the trace distance between the reduced density matrices of $\ket{\psi_N}$ and of $\ket{S_N}$ on any region $R$ vanishes in the thermodynamic limit, due to the contractivity of the trace distance~\cite{nielsen2011quantum}. Then, exploiting the bound~\eqref{eq:fannes} and the fact that $\varepsilon_N|A\cup B|$ also vanishes for $N\to\infty$ [thanks to Eq.~\eqref{eq:size_AB}], it can be shown that the mutual information of $\ket{\psi_N}$ converges to that of $\ket{S_N}$. Using now the facts~\ref{point_1} and \ref{point_2} above, we finally establish Eq.~\eqref{eq:main_eq}, completing the proof~\cite{SM}. 

Restricting to Eq.~\eqref{eq:ghz}, Theorem~\ref{th:main_result} allows us to completely classify the values of $\alpha,\beta$ for which the state has LR nonstabilizerness. Indeed, $H_{\rm bin}[|\alpha|^2] \in \mathbb{N}$ if and only if $|\alpha|^2=0,1,1/2$. Therefore, when restricting to MPSs flowing to the RG fixed points~\eqref{eq:ghz}, Eq.~\eqref{eq:main_eq} is a necessary and sufficient condition for LR nonstabilizerness. However, we expect that this is not always the case. As an example, consider the state 
\begin{align}
\label{eq:counter_example}
   \! \ket{\tilde{\phi}_N(t)}=&\alpha_1(t)\ket{00}^{\otimes N/2}+\alpha_2(t)\ket{01}^{\otimes N/2}\nonumber\\
   +&\alpha_3(t)\ket{10}^{\otimes N/2}+\alpha_4(t)\ket{11}^{\otimes N/2}\,,
\end{align}
where $\alpha^2_1(t)=10^{-1}$, $\alpha^2_2(t)=3^{-1/4}10^{-1}$, $\alpha^2_3(t)=t$ and $\alpha_4(t)^2=1-t-10^{-1}-3^{-1/4}10^{-1}$. By inspection, we see that there is a value $t^\ast\simeq 0.023$ for which $H[\{|\alpha_j(t^\ast)|^2\}]=1$. In this case, despite the mutual information being an integer, we conjecture that $\ket{\tilde{\phi}_N(t^\ast)}$ has LR nonstabilizerness. Although we cannot prove it, we can show a weaker statement, namely that there is no QC $Q_{D_N}$ with $D_N=O(\operatorname{polylog}(N))$ such that $Q_{D_{N}}\ket{\tilde{\phi}_N(t^\ast)}=\ket{S_N}$, where $\ket{S_N}$ is a stabilizer state. This latter condition corresponds to an exact notion of SR nonstabilizerness, where we set $\varepsilon_0=0$ in Eq.~\eqref{eq:sm}. This statement is a consequence of the following result.
\begin{thm}\label{th:main_result_2}
A necessary condition for the RG fixed point~\eqref{eq:fixed_points} to have exact SR nonstabilizerness [i.e. to satisfy Eq.~\eqref{eq:sm} with $\varepsilon_0=0$] is that, for all $i\neq j$,
\begin{equation}\label{eq:main_result_2}
    |\alpha_i|^4/|\alpha_j|^4 \in \mathbb{Q}\,.
\end{equation}
\end{thm}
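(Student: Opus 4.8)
The plan is to argue by contradiction: suppose the RG fixed point $\ket{\tilde{\phi}_N}$ of the form~\eqref{eq:fixed_points} has exact SR nonstabilizerness, so that there is a QC $Q_{D_N}$ with $D_N = O(\operatorname{polylog}(N))$ and a stabilizer state $\ket{S_N}$ with $Q_{D_N}\ket{\tilde{\phi}_N} = \ket{S_N}$ exactly. I would then extract information about the $|\alpha_j|$ not from the mutual information (a single number, which was the right tool for Theorem~\ref{th:main_result}) but from the full \emph{entanglement spectrum} across suitable cuts. The key structural fact is that a stabilizer state has a flat entanglement spectrum across any bipartition: all nonzero Schmidt coefficients across a cut are equal, and equal to $2^{-k/2}$ for some integer $k$ (this is the same ``special entanglement features of stabilizer states'' invoked for Theorem~\ref{th:main_result}, cf. Refs.~\cite{hamma2005bipartite,hamma2005ground}). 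This must remain true for $\ket{S_N} = Q_{D_N}\ket{\tilde{\phi}_N}$.

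Next I would choose the partition carefully. As in the proof of Theorem~\ref{th:main_result}, take regions $A$, $B$, $C$ with $A,B$ well-separated intervals, $|A| = |B|$, and $4D_N + 4 \le |A\cup B| \le 8 D_N$, so that the light-cone of $Q_{D_N}$ cannot connect $A$ to $B$ through the bulk of $C$. The point of this choice, established in the supplement and used for fact~\ref{point_1}, is that the reduced state $\rho_{AB}[\psi_N]$ of $\psi_N = Q_{D_N}\ket{\tilde\phi_N}$ is local-unitarily equivalent to $\rho_{AB}$ of the fixed point itself, which on the relevant subsystem looks like a classical mixture: schematically $\rho_{AB} \sim \bigoplus_j |\alpha_j^{(N)}|^2\,(\text{pure, locally orthogonal block})_j$ up to the local isometries. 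Concretely, for the block form~\eqref{eq:fixed_points} with summands $\ket{\Omega_j}$, cutting out $A\cup B$ produces on the ``outside'' boundary qudits a state whose Schmidt coefficients across the $AB$-versus-$C$ cut come in groups: within the $j$-th block the Schmidt values are $|\alpha_j^{(N)}|\sqrt{\lambda_{m_j}}$-type quantities, and crucially the \emph{ratio of the total weights} of two blocks is $|\alpha_i^{(N)}|^2/|\alpha_j^{(N)}|^2$. Because $Q_{D_N}$ acts as a product of unitaries supported on $AB$ and on $C$ with respect to this particular cut (here one must be slightly more careful and either use the cut separating $C$'s two halves from $AB$, or iterate over a family of cuts), it preserves the multiset of squared Schmidt coefficients. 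Flatness of the stabilizer entanglement spectrum then forces all these squared Schmidt values to be equal, hence forces $|\alpha_i^{(N)}|^2 \sqrt{\lambda_{m}} = |\alpha_j^{(N)}|^2 \sqrt{\lambda_{m'}}$ for the relevant indices, and in particular the ratio $|\alpha_i^{(N)}|^4/|\alpha_j^{(N)}|^4$ must equal a ratio of the fixed (rational-combinatorial) data $\lambda_m/\lambda_{m'}$ — or, after cleaning up, must be rational. Taking $N\to\infty$ and using that $\alpha_i^{(N)}\to\alpha_i$ gives~\eqref{eq:main_result_2}.

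The main obstacle I anticipate is making the ``$Q_{D_N}$ acts as $U_1\otimes U_2$ across the cut I care about, and therefore preserves the Schmidt spectrum'' step fully rigorous \emph{simultaneously} for enough cuts to isolate the cross-block ratios. A single shallow circuit is only a tensor product across one cut at a time (up to the light-cone), and the $AB|C$ cut has two interfaces, so one should work with the cut that puts, say, one half of $C$ on one side and everything else on the other — then the shallowness of $Q_{D_N}$ and the lower bound $|A\cup B|\ge 4D_N+4$ guarantee factorization across that cut, and the fixed-point structure~\eqref{eq:fixed_points} guarantees the reduced state is block-diagonal with block weights $|\alpha_j^{(N)}|^2$ (the $\ket{\Omega_j}$ are locally orthogonal, so they remain distinguishable on any interval once it contains at least one bond). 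Handling the $\lambda_{m_j}$ factors so that only a clean rationality statement survives — rather than an awkward condition entangling the $\alpha$'s with the internal Schmidt data — is the delicate bookkeeping; I expect it works because the $\lambda_{m_j}$ are themselves forced to be of the form (integer)$^{-1}$ by the same flatness argument applied within a block (or else absorbed by grouping), leaving $|\alpha_i|^4/|\alpha_j|^4\in\mathbb{Q}$ as the residual constraint. A secondary, milder point is to confirm that $D_N=O(\operatorname{polylog}(N))$ is compatible with choosing $|A\cup B|$ in the window~\eqref{eq:size_AB} while keeping each component of $C$ of size $O(N)$ and the whole construction translation-compatible, which is exactly the regime already used for Theorem~\ref{th:main_result}.
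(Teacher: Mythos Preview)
Your core idea --- exploiting that reduced density matrices of stabilizer states have \emph{flat} spectra (all nonzero eigenvalues equal) --- is sound, genuinely different from the paper's route, and in fact yields a stronger conclusion. The paper does not use flatness; it uses the partial-transpose identity $(\rho_{AB}^{T_A})^2 \propto (\rho_{AB}^{T_A})^4$ for stabilizer states~\cite{sang2021entanglement}. Passing to $\sigma_{AB} = \sum_k |\alpha_k|^2\, \sigma_A^k \otimes \sigma_B^k$ with mutually orthogonal blocks (the content of Lemma~\ref{lemma:main_lemma_tripartion} and Eq.~\eqref{eq:sigma_ab_mps}, i.e.\ your ``fact~\ref{point_1}''), the paper applies the partial-transpose identity block by block together with an elementary spectral lemma to obtain $\lambda|\alpha_k|^4 = (D_+^k - D_-^k)^2$, hence $|\alpha_i|^4/|\alpha_j|^4 \in \mathbb{Q}$. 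Your flatness argument, run on the very same $\sigma_{AB}$, is cleaner: a flat spectrum of $\sigma_{AB}$ forces each $\sigma_A^k$, $\sigma_B^k$ to be proportional to a projector of some integer rank $n_k^{A}$, $n_k^{B}$, whence $|\alpha_k|^2 = c\, n_k^A n_k^B$ and already $|\alpha_i|^2/|\alpha_j|^2 \in \mathbb{Q}$, which implies the paper's claim.

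Two parts of your write-up, however, should be dropped rather than repaired. First, the worry about whether ``$Q_{D_N}$ acts as $U_1 \otimes U_2$ across the cut'' is a red herring: a shallow brickwork circuit \emph{never} factorizes across a single spatial cut, and your proposed fix (put half of $C$ on one side) does not change that. You do not need any factorization. The only input is that $\rho_{AB}[S_N]$ has flat spectrum because $\ket{S_N}$ is stabilizer; Lemma~\ref{lemma:main_lemma_tripartion} then relates it via local unitaries on $A$ and $B$ to $\sigma_{AB}$, preserving that spectrum. Second, the claim that $\rho_{AB}[\psi_N]$ is local-unitarily equivalent to ``$\rho_{AB}$ of the fixed point itself,'' with block eigenvalues built from $|\alpha_j|\sqrt{\lambda_{m_j}}$, is incorrect: Lemma~\ref{lemma:main_lemma_tripartion} only produces $\sigma_{AB}$ after applying boundary \emph{channels} determined by the gates of $Q_{D_N}$, so $\sigma_A^k$, $\sigma_B^k$ depend on the circuit and their spectra are unrelated to the bond data $\lambda_{m_j}$. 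Fortunately your conclusion does not need those spectra --- only that the $\sigma_{A/B}^k$ are trace-one, mutually orthogonal across $k$, and (by flatness of $\sigma_{AB}$) each proportional to a projector --- so the argument goes through once these inaccurate intermediate claims are removed.
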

\noindent The proof relies on the fact that, for any stabilizer state $\ket{S}$ with reduced density matrix $\rho_{A,B}=\operatorname{Tr}_C[\ket{S}\bra{S}]$, it must be $(\rho_{A,B}^{T_A})^2\propto (\rho_{A,B}^{T_A})^4$~\cite{sang2021entanglement}, where $(\cdot)^{T_A}$ denotes partial transpose over the region $A$. Using this property, we show in Ref.~\cite{SM} that $|\alpha_i(t^\ast)|^4/|\alpha_j(t^\ast)|^4 \in \mathbb{Q}$ for $i\neq j$ is a necessary condition for $Q_{D_N} \ket{\tilde{\phi}_N(t^\ast)} = \ket{S_N}$. 

Since Eq.~\eqref{eq:main_result_2} is not satisfied by the state~\eqref{eq:counter_example} for the coefficients we have chosen, we conclude that there is no shallow QC mapping $\ket{\tilde{\phi}_N(t^\ast)}$ exactly into a stabilizer state. Going beyond, we should prove that there is no QC for which the approximate equality Eq.~\eqref{eq:sm} holds. Technically, this appears to be challenging, as the trace distance generally increases exponentially in system size under partial transpose~\cite{lu2020entanglement}, making it difficult to extend the proof that we used for mutual information~\cite{SM}. 

\prlsection{Outlook} Our work raises several questions. First, while our sufficient condition for LR nonstabilizerness in MPSs appears to be quite powerful, it would be important to find a condition that is also necessary, which would allow us to resolve, for instance, our conjecture about the LR nonstabilizerness of states such as~\eqref{eq:counter_example}. Second, it is natural to ask whether one could rigorously establish a connection between LR nonstabilizerness, as given in Def.~\ref{def:sm}, and the behavior of certain non-local correlation functions, \emph{e.g.} as encoded in the mutual Stabilizer Rényi entropies~\cite{tarabunga2023many}. Here, a non-trivial challenge is the derivation of suitable continuity bounds, which are needed to account for the error $\varepsilon_N$ in Def.~\ref{def:sm}. Finally, it is interesting to ask whether states with SR nonstabilizerness can be viewed as the trivial phase with respect to a suitable equivalence relation, and study the corresponding equivalence classes.

Let us also discuss some possible future directions. An obvious extension of our results would be the characterization of long-range SPT nonstabilizerness~\cite{ellison2021symmetry} in the context of MPSs. Here, we expect that progress could be made by combining some of the ideas introduced in our work with classical results on long-range SPT entanglement of MPSs~\cite{pollmann2010entanglement,chen2011classification,schuch2011classifying}. It is also very natural to ask whether one can define and study LR nonstabilizerness of unitary operators, paralleling the topological classification of quantum cellular automata~\cite{gross2012index,cirac2017MPUs,sahinoglu2018matrix}, or extend the notion of LR nonstabilizerness by allowing for additional operations, such as local measurements and non-local classical communication~\cite{piroli2021quantum,tantivasadakarn2023hierarchy}.

Finally, while we only considered $1D$ systems, we expect a much richer picture in higher dimensions. For instance, since stabilizer Hamiltonians cannot realize non-abelian order~\cite{potter2016symmetry}, it would be interesting to study signatures of LR nonstabilizerness in the ground-state wavefunctions of non-abelian topologically-ordered models. We hope that our work will motivate further research in this direction. 

\prlsection{Acknowledgments}
We thank Marcello Dalmonte, Tyler D. Ellison, Domink Hangleiter, Yaodong Li, David T. Stephen, Ruben Verresen, and  Dominic J. Williamson for inspiring discussions, and Marcello Dalmonte, Tobias Haug, David T. Stephen, and Gerorgios Styliaris for valuable comments on the manuscript. LP and MJG acknowledge hospitality from the Simons Center for Geometry and Physics, Stony Brook University, during the “Workshop on Quantum information dynamics and non-equilibrium quantum matter'', where part of the research for this paper was performed. MJG acknowledges support from the NSF QLCI  grant OMA-2120757. This work was co-funded by the European Union (ERC, QUANTHEM, 101114881). Views and opinions expressed are however those of the author(s) only and do not necessarily reflect those of the European Union or the European Research Council Executive Agency. Neither the European Union nor the granting authority can be held responsible for them.


	\let\oldaddcontentsline\addcontentsline
	\renewcommand{\addcontentsline}[3]{}
	\bibliography{bib}

\begin{thebibliography}{93}%
\makeatletter
\providecommand \@ifxundefined [1]{%
 \@ifx{#1\undefined}
}%
\providecommand \@ifnum [1]{%
 \ifnum #1\expandafter \@firstoftwo
 \else \expandafter \@secondoftwo
 \fi
}%
\providecommand \@ifx [1]{%
 \ifx #1\expandafter \@firstoftwo
 \else \expandafter \@secondoftwo
 \fi
}%
\providecommand \natexlab [1]{#1}%
\providecommand \enquote  [1]{``#1''}%
\providecommand \bibnamefont  [1]{#1}%
\providecommand \bibfnamefont [1]{#1}%
\providecommand \citenamefont [1]{#1}%
\providecommand \href@noop [0]{\@secondoftwo}%
\providecommand \href [0]{\begingroup \@sanitize@url \@href}%
\providecommand \@href[1]{\@@startlink{#1}\@@href}%
\providecommand \@@href[1]{\endgroup#1\@@endlink}%
\providecommand \@sanitize@url [0]{\catcode `\\12\catcode `\$12\catcode `\&12\catcode `\#12\catcode `\^12\catcode `\_12\catcode `\%12\relax}%
\providecommand \@@startlink[1]{}%
\providecommand \@@endlink[0]{}%
\providecommand \url  [0]{\begingroup\@sanitize@url \@url }%
\providecommand \@url [1]{\endgroup\@href {#1}{\urlprefix }}%
\providecommand \urlprefix  [0]{URL }%
\providecommand \Eprint [0]{\href }%
\providecommand \doibase [0]{https://doi.org/}%
\providecommand \selectlanguage [0]{\@gobble}%
\providecommand \bibinfo  [0]{\@secondoftwo}%
\providecommand \bibfield  [0]{\@secondoftwo}%
\providecommand \translation [1]{[#1]}%
\providecommand \BibitemOpen [0]{}%
\providecommand \bibitemStop [0]{}%
\providecommand \bibitemNoStop [0]{.\EOS\space}%
\providecommand \EOS [0]{\spacefactor3000\relax}%
\providecommand \BibitemShut  [1]{\csname bibitem#1\endcsname}%
\let\auto@bib@innerbib\@empty
\bibitem [{\citenamefont {Gottesman}(1997)}]{gottesman1997stabilizer}%
  \BibitemOpen
  \bibfield  {author} {\bibinfo {author} {\bibfnamefont {D.}~\bibnamefont {Gottesman}},\ }\emph {\bibinfo {title} {Stabilizer codes and quantum error correction. Caltech Ph. D}},\ \href@noop {} {Ph.D. thesis},\ \bibinfo  {school} {Thesis, eprint: quant-ph/9705052} (\bibinfo {year} {1997})\BibitemShut {NoStop}%
\bibitem [{\citenamefont {Nielsen}\ and\ \citenamefont {Chuang}(2011)}]{nielsen2011quantum}%
  \BibitemOpen
  \bibfield  {author} {\bibinfo {author} {\bibfnamefont {M.~A.}\ \bibnamefont {Nielsen}}\ and\ \bibinfo {author} {\bibfnamefont {I.~L.}\ \bibnamefont {Chuang}},\ }\href@noop {} {\emph {\bibinfo {title} {Quantum Computation and Quantum Information: 10th Anniversary Edition}}}\ (\bibinfo  {publisher} {Cambridge University Press},\ \bibinfo {year} {2011})\BibitemShut {NoStop}%
\bibitem [{\citenamefont {Gottesman}(1998{\natexlab{a}})}]{gottesman1998theory}%
  \BibitemOpen
  \bibfield  {author} {\bibinfo {author} {\bibfnamefont {D.}~\bibnamefont {Gottesman}},\ }\href {https://doi.org/10.1103/PhysRevA.57.127} {\bibfield  {journal} {\bibinfo  {journal} {Phys. Rev. A}\ }\textbf {\bibinfo {volume} {57}},\ \bibinfo {pages} {127} (\bibinfo {year} {1998}{\natexlab{a}})}\BibitemShut {NoStop}%
\bibitem [{\citenamefont {Gottesman}(1998{\natexlab{b}})}]{gottesman1998heisenberg}%
  \BibitemOpen
  \bibfield  {author} {\bibinfo {author} {\bibfnamefont {D.}~\bibnamefont {Gottesman}},\ }\href {https://arxiv.org/abs/quant-ph/9807006} {\bibfield  {journal} {\bibinfo  {journal} {arXiv quant-ph/9807006}\ } (\bibinfo {year} {1998}{\natexlab{b}})}\BibitemShut {NoStop}%
\bibitem [{\citenamefont {Aaronson}\ and\ \citenamefont {Gottesman}(2004)}]{aaronson2004improved}%
  \BibitemOpen
  \bibfield  {author} {\bibinfo {author} {\bibfnamefont {S.}~\bibnamefont {Aaronson}}\ and\ \bibinfo {author} {\bibfnamefont {D.}~\bibnamefont {Gottesman}},\ }\href {https://doi.org/10.1103/PhysRevA.70.052328} {\bibfield  {journal} {\bibinfo  {journal} {Phys. Rev. A}\ }\textbf {\bibinfo {volume} {70}},\ \bibinfo {pages} {052328} (\bibinfo {year} {2004})}\BibitemShut {NoStop}%
\bibitem [{\citenamefont {Kitaev}(2003)}]{kitaev2003fault}%
  \BibitemOpen
  \bibfield  {author} {\bibinfo {author} {\bibfnamefont {A.~Y.}\ \bibnamefont {Kitaev}},\ }\href {https://doi.org/10.1016/S0003-4916(02)00018-0} {\bibfield  {journal} {\bibinfo  {journal} {Ann. Phys.}\ }\textbf {\bibinfo {volume} {303}},\ \bibinfo {pages} {2} (\bibinfo {year} {2003})}\BibitemShut {NoStop}%
\bibitem [{\citenamefont {Eastin}\ and\ \citenamefont {Knill}(2009)}]{eastin2009restriction}%
  \BibitemOpen
  \bibfield  {author} {\bibinfo {author} {\bibfnamefont {B.}~\bibnamefont {Eastin}}\ and\ \bibinfo {author} {\bibfnamefont {E.}~\bibnamefont {Knill}},\ }\href {https://doi.org/10.1103/PhysRevLett.102.110502} {\bibfield  {journal} {\bibinfo  {journal} {Phys. Rev. Lett.}\ }\textbf {\bibinfo {volume} {102}},\ \bibinfo {pages} {110502} (\bibinfo {year} {2009})}\BibitemShut {NoStop}%
\bibitem [{\citenamefont {Bravyi}\ and\ \citenamefont {Kitaev}(2005)}]{bravyi2005universal}%
  \BibitemOpen
  \bibfield  {author} {\bibinfo {author} {\bibfnamefont {S.}~\bibnamefont {Bravyi}}\ and\ \bibinfo {author} {\bibfnamefont {A.}~\bibnamefont {Kitaev}},\ }\href {https://doi.org/10.1103/PhysRevA.71.022316} {\bibfield  {journal} {\bibinfo  {journal} {Phys. Rev. A}\ }\textbf {\bibinfo {volume} {71}},\ \bibinfo {pages} {022316} (\bibinfo {year} {2005})}\BibitemShut {NoStop}%
\bibitem [{\citenamefont {Veitch}\ \emph {et~al.}(2014)\citenamefont {Veitch}, \citenamefont {Mousavian}, \citenamefont {Gottesman},\ and\ \citenamefont {Emerson}}]{veitch2014resource}%
  \BibitemOpen
  \bibfield  {author} {\bibinfo {author} {\bibfnamefont {V.}~\bibnamefont {Veitch}}, \bibinfo {author} {\bibfnamefont {S.~H.}\ \bibnamefont {Mousavian}}, \bibinfo {author} {\bibfnamefont {D.}~\bibnamefont {Gottesman}},\ and\ \bibinfo {author} {\bibfnamefont {J.}~\bibnamefont {Emerson}},\ }\href {https://doi.org/10.1088/1367-2630/16/1/013009} {\bibfield  {journal} {\bibinfo  {journal} {New J. Phys.}\ }\textbf {\bibinfo {volume} {16}},\ \bibinfo {pages} {013009} (\bibinfo {year} {2014})}\BibitemShut {NoStop}%
\bibitem [{\citenamefont {Bluvstein}\ \emph {et~al.}(2024)\citenamefont {Bluvstein}, \citenamefont {Evered}, \citenamefont {Geim}, \citenamefont {Li}, \citenamefont {Zhou}, \citenamefont {Manovitz}, \citenamefont {Ebadi}, \citenamefont {Cain}, \citenamefont {Kalinowski}, \citenamefont {Hangleiter} \emph {et~al.}}]{bluvstein2024logical}%
  \BibitemOpen
  \bibfield  {author} {\bibinfo {author} {\bibfnamefont {D.}~\bibnamefont {Bluvstein}}, \bibinfo {author} {\bibfnamefont {S.~J.}\ \bibnamefont {Evered}}, \bibinfo {author} {\bibfnamefont {A.~A.}\ \bibnamefont {Geim}}, \bibinfo {author} {\bibfnamefont {S.~H.}\ \bibnamefont {Li}}, \bibinfo {author} {\bibfnamefont {H.}~\bibnamefont {Zhou}}, \bibinfo {author} {\bibfnamefont {T.}~\bibnamefont {Manovitz}}, \bibinfo {author} {\bibfnamefont {S.}~\bibnamefont {Ebadi}}, \bibinfo {author} {\bibfnamefont {M.}~\bibnamefont {Cain}}, \bibinfo {author} {\bibfnamefont {M.}~\bibnamefont {Kalinowski}}, \bibinfo {author} {\bibfnamefont {D.}~\bibnamefont {Hangleiter}}, \emph {et~al.},\ }\href {https://doi.org/10.1038/s41586-023-06927-3} {\bibfield  {journal} {\bibinfo  {journal} {Nature}\ }\textbf {\bibinfo {volume} {626}},\ \bibinfo {pages} {58} (\bibinfo {year} {2024})}\BibitemShut {NoStop}%
\bibitem [{\citenamefont {Acharya}\ \emph {et~al.}(2024)\citenamefont {Acharya}, \citenamefont {Aghababaie-Beni}, \citenamefont {Aleiner}, \citenamefont {Andersen}, \citenamefont {Ansmann}, \citenamefont {Arute}, \citenamefont {Arya}, \citenamefont {Asfaw}, \citenamefont {Astrakhantsev}, \citenamefont {Atalaya} \emph {et~al.}}]{acharya2024quantum}%
  \BibitemOpen
  \bibfield  {author} {\bibinfo {author} {\bibfnamefont {R.}~\bibnamefont {Acharya}}, \bibinfo {author} {\bibfnamefont {L.}~\bibnamefont {Aghababaie-Beni}}, \bibinfo {author} {\bibfnamefont {I.}~\bibnamefont {Aleiner}}, \bibinfo {author} {\bibfnamefont {T.~I.}\ \bibnamefont {Andersen}}, \bibinfo {author} {\bibfnamefont {M.}~\bibnamefont {Ansmann}}, \bibinfo {author} {\bibfnamefont {F.}~\bibnamefont {Arute}}, \bibinfo {author} {\bibfnamefont {K.}~\bibnamefont {Arya}}, \bibinfo {author} {\bibfnamefont {A.}~\bibnamefont {Asfaw}}, \bibinfo {author} {\bibfnamefont {N.}~\bibnamefont {Astrakhantsev}}, \bibinfo {author} {\bibfnamefont {J.}~\bibnamefont {Atalaya}}, \emph {et~al.},\ }\href {https://arxiv.org/abs/2408.13687} {\bibfield  {journal} {\bibinfo  {journal} {arXiv:2408.13687}\ } (\bibinfo {year} {2024})}\BibitemShut {NoStop}%
\bibitem [{\citenamefont {Preskill}(1998)}]{preskill1998fault}%
  \BibitemOpen
  \bibfield  {author} {\bibinfo {author} {\bibfnamefont {J.}~\bibnamefont {Preskill}},\ }in\ \href@noop {} {\emph {\bibinfo {booktitle} {Introduction to quantum computation and information}}}\ (\bibinfo  {publisher} {World Scientific},\ \bibinfo {year} {1998})\ pp.\ \bibinfo {pages} {213--269}\BibitemShut {NoStop}%
\bibitem [{\citenamefont {Shor}(1996)}]{shor1996fault}%
  \BibitemOpen
  \bibfield  {author} {\bibinfo {author} {\bibfnamefont {P.~W.}\ \bibnamefont {Shor}},\ }in\ \href@noop {} {\emph {\bibinfo {booktitle} {Proceedings of 37th conference on foundations of computer science}}}\ (\bibinfo {organization} {IEEE},\ \bibinfo {year} {1996})\ pp.\ \bibinfo {pages} {56--65}\BibitemShut {NoStop}%
\bibitem [{\citenamefont {Litinski}(2019)}]{litinski2019magic}%
  \BibitemOpen
  \bibfield  {author} {\bibinfo {author} {\bibfnamefont {D.}~\bibnamefont {Litinski}},\ }\href {https://doi.org/10.22331/q-2019-12-02-205} {\bibfield  {journal} {\bibinfo  {journal} {Quantum}\ }\textbf {\bibinfo {volume} {3}},\ \bibinfo {pages} {205} (\bibinfo {year} {2019})}\BibitemShut {NoStop}%
\bibitem [{\citenamefont {Orts}\ \emph {et~al.}(2023)\citenamefont {Orts}, \citenamefont {Ortega}, \citenamefont {Combarro}, \citenamefont {R{\'u}a}, \citenamefont {Puertas},\ and\ \citenamefont {Garz{\'o}n}}]{orts2023efficient}%
  \BibitemOpen
  \bibfield  {author} {\bibinfo {author} {\bibfnamefont {F.}~\bibnamefont {Orts}}, \bibinfo {author} {\bibfnamefont {G.}~\bibnamefont {Ortega}}, \bibinfo {author} {\bibfnamefont {E.~F.}\ \bibnamefont {Combarro}}, \bibinfo {author} {\bibfnamefont {I.~F.}\ \bibnamefont {R{\'u}a}}, \bibinfo {author} {\bibfnamefont {A.~M.}\ \bibnamefont {Puertas}},\ and\ \bibinfo {author} {\bibfnamefont {E.~M.}\ \bibnamefont {Garz{\'o}n}},\ }\href {https://doi.org/10.1007/s11227-023-05162-x} {\bibfield  {journal} {\bibinfo  {journal} {J. Supercomp.}\ }\textbf {\bibinfo {volume} {79}},\ \bibinfo {pages} {12656} (\bibinfo {year} {2023})}\BibitemShut {NoStop}%
\bibitem [{\citenamefont {Howard}\ and\ \citenamefont {Campbell}(2017)}]{howard2017application}%
  \BibitemOpen
  \bibfield  {author} {\bibinfo {author} {\bibfnamefont {M.}~\bibnamefont {Howard}}\ and\ \bibinfo {author} {\bibfnamefont {E.}~\bibnamefont {Campbell}},\ }\href {https://doi.org/10.1103/PhysRevLett.118.090501} {\bibfield  {journal} {\bibinfo  {journal} {Phys. Rev. Lett.}\ }\textbf {\bibinfo {volume} {118}},\ \bibinfo {pages} {090501} (\bibinfo {year} {2017})}\BibitemShut {NoStop}%
\bibitem [{\citenamefont {White}\ \emph {et~al.}(2021)\citenamefont {White}, \citenamefont {Cao},\ and\ \citenamefont {Swingle}}]{white2021conformal}%
  \BibitemOpen
  \bibfield  {author} {\bibinfo {author} {\bibfnamefont {C.~D.}\ \bibnamefont {White}}, \bibinfo {author} {\bibfnamefont {C.}~\bibnamefont {Cao}},\ and\ \bibinfo {author} {\bibfnamefont {B.}~\bibnamefont {Swingle}},\ }\href {https://doi.org/10.1103/PhysRevB.103.075145} {\bibfield  {journal} {\bibinfo  {journal} {Phys. Rev. B}\ }\textbf {\bibinfo {volume} {103}},\ \bibinfo {pages} {075145} (\bibinfo {year} {2021})}\BibitemShut {NoStop}%
\bibitem [{\citenamefont {Ellison}\ \emph {et~al.}(2021)\citenamefont {Ellison}, \citenamefont {Kato}, \citenamefont {Liu},\ and\ \citenamefont {Hsieh}}]{ellison2021symmetry}%
  \BibitemOpen
  \bibfield  {author} {\bibinfo {author} {\bibfnamefont {T.~D.}\ \bibnamefont {Ellison}}, \bibinfo {author} {\bibfnamefont {K.}~\bibnamefont {Kato}}, \bibinfo {author} {\bibfnamefont {Z.-W.}\ \bibnamefont {Liu}},\ and\ \bibinfo {author} {\bibfnamefont {T.~H.}\ \bibnamefont {Hsieh}},\ }\href {https://doi.org/10.22331/q-2021-12-28-612} {\bibfield  {journal} {\bibinfo  {journal} {Quantum}\ }\textbf {\bibinfo {volume} {5}},\ \bibinfo {pages} {612} (\bibinfo {year} {2021})}\BibitemShut {NoStop}%
\bibitem [{\citenamefont {Sarkar}\ \emph {et~al.}(2020)\citenamefont {Sarkar}, \citenamefont {Mukhopadhyay},\ and\ \citenamefont {Bayat}}]{sarkar2020characterization}%
  \BibitemOpen
  \bibfield  {author} {\bibinfo {author} {\bibfnamefont {S.}~\bibnamefont {Sarkar}}, \bibinfo {author} {\bibfnamefont {C.}~\bibnamefont {Mukhopadhyay}},\ and\ \bibinfo {author} {\bibfnamefont {A.}~\bibnamefont {Bayat}},\ }\href {https://doi.org/10.1088/1367-2630/aba919} {\bibfield  {journal} {\bibinfo  {journal} {New J. Phys.}\ }\textbf {\bibinfo {volume} {22}},\ \bibinfo {pages} {083077} (\bibinfo {year} {2020})}\BibitemShut {NoStop}%
\bibitem [{\citenamefont {Sewell}\ and\ \citenamefont {White}(2022)}]{sewell2022mana}%
  \BibitemOpen
  \bibfield  {author} {\bibinfo {author} {\bibfnamefont {T.~J.}\ \bibnamefont {Sewell}}\ and\ \bibinfo {author} {\bibfnamefont {C.~D.}\ \bibnamefont {White}},\ }\href {https://doi.org/10.1103/PhysRevB.106.125130} {\bibfield  {journal} {\bibinfo  {journal} {Phys. Rev. B}\ }\textbf {\bibinfo {volume} {106}},\ \bibinfo {pages} {125130} (\bibinfo {year} {2022})}\BibitemShut {NoStop}%
\bibitem [{\citenamefont {Oliviero}\ \emph {et~al.}(2022)\citenamefont {Oliviero}, \citenamefont {Leone},\ and\ \citenamefont {Hamma}}]{oliviero2022magic}%
  \BibitemOpen
  \bibfield  {author} {\bibinfo {author} {\bibfnamefont {S.~F.~E.}\ \bibnamefont {Oliviero}}, \bibinfo {author} {\bibfnamefont {L.}~\bibnamefont {Leone}},\ and\ \bibinfo {author} {\bibfnamefont {A.}~\bibnamefont {Hamma}},\ }\href {https://doi.org/10.1103/PhysRevA.106.042426} {\bibfield  {journal} {\bibinfo  {journal} {Phys. Rev. A}\ }\textbf {\bibinfo {volume} {106}},\ \bibinfo {pages} {042426} (\bibinfo {year} {2022})}\BibitemShut {NoStop}%
\bibitem [{\citenamefont {Liu}\ and\ \citenamefont {Winter}(2022)}]{liu2022many}%
  \BibitemOpen
  \bibfield  {author} {\bibinfo {author} {\bibfnamefont {Z.-W.}\ \bibnamefont {Liu}}\ and\ \bibinfo {author} {\bibfnamefont {A.}~\bibnamefont {Winter}},\ }\href {https://doi.org/10.1103/PRXQuantum.3.020333} {\bibfield  {journal} {\bibinfo  {journal} {PRX Quantum}\ }\textbf {\bibinfo {volume} {3}},\ \bibinfo {pages} {020333} (\bibinfo {year} {2022})}\BibitemShut {NoStop}%
\bibitem [{\citenamefont {Haug}\ and\ \citenamefont {Piroli}(2023)}]{haug2022quantifying}%
  \BibitemOpen
  \bibfield  {author} {\bibinfo {author} {\bibfnamefont {T.}~\bibnamefont {Haug}}\ and\ \bibinfo {author} {\bibfnamefont {L.}~\bibnamefont {Piroli}},\ }\href {https://doi.org/10.1103/PhysRevB.107.035148} {\bibfield  {journal} {\bibinfo  {journal} {Phys. Rev. B}\ }\textbf {\bibinfo {volume} {107}},\ \bibinfo {pages} {035148} (\bibinfo {year} {2023})}\BibitemShut {NoStop}%
\bibitem [{\citenamefont {Tarabunga}\ \emph {et~al.}(2023)\citenamefont {Tarabunga}, \citenamefont {Tirrito}, \citenamefont {Chanda},\ and\ \citenamefont {Dalmonte}}]{tarabunga2023many}%
  \BibitemOpen
  \bibfield  {author} {\bibinfo {author} {\bibfnamefont {P.~S.}\ \bibnamefont {Tarabunga}}, \bibinfo {author} {\bibfnamefont {E.}~\bibnamefont {Tirrito}}, \bibinfo {author} {\bibfnamefont {T.}~\bibnamefont {Chanda}},\ and\ \bibinfo {author} {\bibfnamefont {M.}~\bibnamefont {Dalmonte}},\ }\href {https://doi.org/10.1103/PRXQuantum.4.040317} {\bibfield  {journal} {\bibinfo  {journal} {PRX Quantum}\ }\textbf {\bibinfo {volume} {4}},\ \bibinfo {pages} {040317} (\bibinfo {year} {2023})}\BibitemShut {NoStop}%
\bibitem [{\citenamefont {Tarabunga}(2024)}]{tarabunga2024critical}%
  \BibitemOpen
  \bibfield  {author} {\bibinfo {author} {\bibfnamefont {P.~S.}\ \bibnamefont {Tarabunga}},\ }\href {https://doi.org/10.22331/q-2024-07-17-1413} {\bibfield  {journal} {\bibinfo  {journal} {Quantum}\ }\textbf {\bibinfo {volume} {8}},\ \bibinfo {pages} {1413} (\bibinfo {year} {2024})}\BibitemShut {NoStop}%
\bibitem [{\citenamefont {Falcão}\ \emph {et~al.}(2024)\citenamefont {Falcão}, \citenamefont {Tarabunga}, \citenamefont {Frau}, \citenamefont {Tirrito}, \citenamefont {Zakrzewski},\ and\ \citenamefont {Dalmonte}}]{falcao2024nonstabilizerness}%
  \BibitemOpen
  \bibfield  {author} {\bibinfo {author} {\bibfnamefont {P.~R.~N.}\ \bibnamefont {Falcão}}, \bibinfo {author} {\bibfnamefont {P.~S.}\ \bibnamefont {Tarabunga}}, \bibinfo {author} {\bibfnamefont {M.}~\bibnamefont {Frau}}, \bibinfo {author} {\bibfnamefont {E.}~\bibnamefont {Tirrito}}, \bibinfo {author} {\bibfnamefont {J.}~\bibnamefont {Zakrzewski}},\ and\ \bibinfo {author} {\bibfnamefont {M.}~\bibnamefont {Dalmonte}},\ }\href {https://arxiv.org/abs/2409.01789} {\bibfield  {journal} {\bibinfo  {journal} {arXiv:2409.01789}\ } (\bibinfo {year} {2024})}\BibitemShut {NoStop}%
\bibitem [{\citenamefont {Tarabunga}\ \emph {et~al.}(2024)\citenamefont {Tarabunga}, \citenamefont {Frau}, \citenamefont {Haug}, \citenamefont {Tirrito},\ and\ \citenamefont {Piroli}}]{tarabunga2024nonstabilizerness2}%
  \BibitemOpen
  \bibfield  {author} {\bibinfo {author} {\bibfnamefont {P.~S.}\ \bibnamefont {Tarabunga}}, \bibinfo {author} {\bibfnamefont {M.}~\bibnamefont {Frau}}, \bibinfo {author} {\bibfnamefont {T.}~\bibnamefont {Haug}}, \bibinfo {author} {\bibfnamefont {E.}~\bibnamefont {Tirrito}},\ and\ \bibinfo {author} {\bibfnamefont {L.}~\bibnamefont {Piroli}},\ }\href {https://arxiv.org/abs/2411.05766} {\bibfield  {journal} {\bibinfo  {journal} {arXiv:2411.05766}\ } (\bibinfo {year} {2024})}\BibitemShut {NoStop}%
\bibitem [{\citenamefont {Leone}\ \emph {et~al.}(2022)\citenamefont {Leone}, \citenamefont {Oliviero},\ and\ \citenamefont {Hamma}}]{leone2022stabilizer}%
  \BibitemOpen
  \bibfield  {author} {\bibinfo {author} {\bibfnamefont {L.}~\bibnamefont {Leone}}, \bibinfo {author} {\bibfnamefont {S.~F.~E.}\ \bibnamefont {Oliviero}},\ and\ \bibinfo {author} {\bibfnamefont {A.}~\bibnamefont {Hamma}},\ }\href {https://doi.org/10.1103/PhysRevLett.128.050402} {\bibfield  {journal} {\bibinfo  {journal} {Phys. Rev. Lett.}\ }\textbf {\bibinfo {volume} {128}},\ \bibinfo {pages} {050402} (\bibinfo {year} {2022})}\BibitemShut {NoStop}%
\bibitem [{\citenamefont {Leone}\ \emph {et~al.}(2021)\citenamefont {Leone}, \citenamefont {Oliviero}, \citenamefont {Zhou},\ and\ \citenamefont {Hamma}}]{leone2021quantum}%
  \BibitemOpen
  \bibfield  {author} {\bibinfo {author} {\bibfnamefont {L.}~\bibnamefont {Leone}}, \bibinfo {author} {\bibfnamefont {S.~F.}\ \bibnamefont {Oliviero}}, \bibinfo {author} {\bibfnamefont {Y.}~\bibnamefont {Zhou}},\ and\ \bibinfo {author} {\bibfnamefont {A.}~\bibnamefont {Hamma}},\ }\href {https://arxiv.org/abs/2102.08406} {\bibfield  {journal} {\bibinfo  {journal} {Quantum}\ }\textbf {\bibinfo {volume} {5}},\ \bibinfo {pages} {453} (\bibinfo {year} {2021})}\BibitemShut {NoStop}%
\bibitem [{\citenamefont {Haferkamp}(2022)}]{haferkamp2022random}%
  \BibitemOpen
  \bibfield  {author} {\bibinfo {author} {\bibfnamefont {J.}~\bibnamefont {Haferkamp}},\ }\href {https://arxiv.org/abs/2203.16571} {\bibfield  {journal} {\bibinfo  {journal} {arXiv:2203.16571}\ } (\bibinfo {year} {2022})}\BibitemShut {NoStop}%
\bibitem [{\citenamefont {Haug}\ \emph {et~al.}(2024)\citenamefont {Haug}, \citenamefont {Aolita},\ and\ \citenamefont {Kim}}]{haug2024probing}%
  \BibitemOpen
  \bibfield  {author} {\bibinfo {author} {\bibfnamefont {T.}~\bibnamefont {Haug}}, \bibinfo {author} {\bibfnamefont {L.}~\bibnamefont {Aolita}},\ and\ \bibinfo {author} {\bibfnamefont {M.}~\bibnamefont {Kim}},\ }\href {https://arxiv.org/abs/2406.04190} {\bibfield  {journal} {\bibinfo  {journal} {arXiv:2406.04190}\ } (\bibinfo {year} {2024})}\BibitemShut {NoStop}%
\bibitem [{\citenamefont {L{\'o}pez}\ and\ \citenamefont {Kos}(2024)}]{lopez2024exact}%
  \BibitemOpen
  \bibfield  {author} {\bibinfo {author} {\bibfnamefont {J.~A.~M.}\ \bibnamefont {L{\'o}pez}}\ and\ \bibinfo {author} {\bibfnamefont {P.}~\bibnamefont {Kos}},\ }\href {https://arxiv.org/abs/2405.04448} {\bibfield  {journal} {\bibinfo  {journal} {arXiv:2405.04448}\ } (\bibinfo {year} {2024})}\BibitemShut {NoStop}%
\bibitem [{\citenamefont {Turkeshi}\ \emph {et~al.}(2024)\citenamefont {Turkeshi}, \citenamefont {Tirrito},\ and\ \citenamefont {Sierant}}]{turkeshi2024magic}%
  \BibitemOpen
  \bibfield  {author} {\bibinfo {author} {\bibfnamefont {X.}~\bibnamefont {Turkeshi}}, \bibinfo {author} {\bibfnamefont {E.}~\bibnamefont {Tirrito}},\ and\ \bibinfo {author} {\bibfnamefont {P.}~\bibnamefont {Sierant}},\ }\href {https://arxiv.org/abs/2407.03929} {\bibfield  {journal} {\bibinfo  {journal} {arXiv:2407.03929}\ } (\bibinfo {year} {2024})}\BibitemShut {NoStop}%
\bibitem [{\citenamefont {Dowling}\ \emph {et~al.}(2024)\citenamefont {Dowling}, \citenamefont {Kos},\ and\ \citenamefont {Turkeshi}}]{dowling2024magic}%
  \BibitemOpen
  \bibfield  {author} {\bibinfo {author} {\bibfnamefont {N.}~\bibnamefont {Dowling}}, \bibinfo {author} {\bibfnamefont {P.}~\bibnamefont {Kos}},\ and\ \bibinfo {author} {\bibfnamefont {X.}~\bibnamefont {Turkeshi}},\ }\href {https://arxiv.org/abs/2408.16047} {\bibfield  {journal} {\bibinfo  {journal} {arXiv:2408.16047}\ } (\bibinfo {year} {2024})}\BibitemShut {NoStop}%
\bibitem [{\citenamefont {Tirrito}\ \emph {et~al.}(2024)\citenamefont {Tirrito}, \citenamefont {Tarabunga}, \citenamefont {Lami}, \citenamefont {Chanda}, \citenamefont {Leone}, \citenamefont {Oliviero}, \citenamefont {Dalmonte}, \citenamefont {Collura},\ and\ \citenamefont {Hamma}}]{tirrito2024quantifying}%
  \BibitemOpen
  \bibfield  {author} {\bibinfo {author} {\bibfnamefont {E.}~\bibnamefont {Tirrito}}, \bibinfo {author} {\bibfnamefont {P.~S.}\ \bibnamefont {Tarabunga}}, \bibinfo {author} {\bibfnamefont {G.}~\bibnamefont {Lami}}, \bibinfo {author} {\bibfnamefont {T.}~\bibnamefont {Chanda}}, \bibinfo {author} {\bibfnamefont {L.}~\bibnamefont {Leone}}, \bibinfo {author} {\bibfnamefont {S.~F.}\ \bibnamefont {Oliviero}}, \bibinfo {author} {\bibfnamefont {M.}~\bibnamefont {Dalmonte}}, \bibinfo {author} {\bibfnamefont {M.}~\bibnamefont {Collura}},\ and\ \bibinfo {author} {\bibfnamefont {A.}~\bibnamefont {Hamma}},\ }\href {https://doi.org/10.1103/PhysRevA.109.L040401} {\bibfield  {journal} {\bibinfo  {journal} {Phys. Rev. A}\ }\textbf {\bibinfo {volume} {109}},\ \bibinfo {pages} {L040401} (\bibinfo {year} {2024})}\BibitemShut {NoStop}%
\bibitem [{\citenamefont {Gu}\ \emph {et~al.}(2024)\citenamefont {Gu}, \citenamefont {Oliviero},\ and\ \citenamefont {Leone}}]{gu2024magic}%
  \BibitemOpen
  \bibfield  {author} {\bibinfo {author} {\bibfnamefont {A.}~\bibnamefont {Gu}}, \bibinfo {author} {\bibfnamefont {S.~F.}\ \bibnamefont {Oliviero}},\ and\ \bibinfo {author} {\bibfnamefont {L.}~\bibnamefont {Leone}},\ }\href {https://arxiv.org/abs/2403.19610} {\bibfield  {journal} {\bibinfo  {journal} {arXiv:2403.19610}\ } (\bibinfo {year} {2024})}\BibitemShut {NoStop}%
\bibitem [{\citenamefont {Fux}\ \emph {et~al.}(2023)\citenamefont {Fux}, \citenamefont {Tirrito}, \citenamefont {Dalmonte},\ and\ \citenamefont {Fazio}}]{fux2023entanglement}%
  \BibitemOpen
  \bibfield  {author} {\bibinfo {author} {\bibfnamefont {G.~E.}\ \bibnamefont {Fux}}, \bibinfo {author} {\bibfnamefont {E.}~\bibnamefont {Tirrito}}, \bibinfo {author} {\bibfnamefont {M.}~\bibnamefont {Dalmonte}},\ and\ \bibinfo {author} {\bibfnamefont {R.}~\bibnamefont {Fazio}},\ }\href {https://arxiv.org/abs/2312.02039} {\bibfield  {journal} {\bibinfo  {journal} {arXiv:2312.02039}\ } (\bibinfo {year} {2023})}\BibitemShut {NoStop}%
\bibitem [{\citenamefont {Frau}\ \emph {et~al.}(2024{\natexlab{a}})\citenamefont {Frau}, \citenamefont {Tarabunga}, \citenamefont {Collura}, \citenamefont {Dalmonte},\ and\ \citenamefont {Tirrito}}]{frau2024nonstabilizerness}%
  \BibitemOpen
  \bibfield  {author} {\bibinfo {author} {\bibfnamefont {M.}~\bibnamefont {Frau}}, \bibinfo {author} {\bibfnamefont {P.~S.}\ \bibnamefont {Tarabunga}}, \bibinfo {author} {\bibfnamefont {M.}~\bibnamefont {Collura}}, \bibinfo {author} {\bibfnamefont {M.}~\bibnamefont {Dalmonte}},\ and\ \bibinfo {author} {\bibfnamefont {E.}~\bibnamefont {Tirrito}},\ }\href {https://doi.org/10.1103/PhysRevB.110.045101} {\bibfield  {journal} {\bibinfo  {journal} {Phys. Rev. B}\ }\textbf {\bibinfo {volume} {110}},\ \bibinfo {pages} {045101} (\bibinfo {year} {2024}{\natexlab{a}})}\BibitemShut {NoStop}%
\bibitem [{\citenamefont {Bejan}\ \emph {et~al.}(2024)\citenamefont {Bejan}, \citenamefont {McLauchlan},\ and\ \citenamefont {B{\'e}ri}}]{bejan2024dynamical}%
  \BibitemOpen
  \bibfield  {author} {\bibinfo {author} {\bibfnamefont {M.}~\bibnamefont {Bejan}}, \bibinfo {author} {\bibfnamefont {C.}~\bibnamefont {McLauchlan}},\ and\ \bibinfo {author} {\bibfnamefont {B.}~\bibnamefont {B{\'e}ri}},\ }\href {https://doi.org/10.1103/PRXQuantum.5.030332} {\bibfield  {journal} {\bibinfo  {journal} {PRX Quantum}\ }\textbf {\bibinfo {volume} {5}},\ \bibinfo {pages} {030332} (\bibinfo {year} {2024})}\BibitemShut {NoStop}%
\bibitem [{\citenamefont {Tarabunga}\ and\ \citenamefont {Tirrito}(2024{\natexlab{a}})}]{tarabunga2024magictransition}%
  \BibitemOpen
  \bibfield  {author} {\bibinfo {author} {\bibfnamefont {P.~S.}\ \bibnamefont {Tarabunga}}\ and\ \bibinfo {author} {\bibfnamefont {E.}~\bibnamefont {Tirrito}},\ }\href {https://arxiv.org/abs/2407.15939} {\bibfield  {journal} {\bibinfo  {journal} {arXiv:2407.15939}\ } (\bibinfo {year} {2024}{\natexlab{a}})}\BibitemShut {NoStop}%
\bibitem [{\citenamefont {Iannotti}\ \emph {et~al.}(2025)\citenamefont {Iannotti}, \citenamefont {Esposito}, \citenamefont {Venuti},\ and\ \citenamefont {Hamma}}]{iannotti2025entanglement}%
  \BibitemOpen
  \bibfield  {author} {\bibinfo {author} {\bibfnamefont {D.}~\bibnamefont {Iannotti}}, \bibinfo {author} {\bibfnamefont {G.}~\bibnamefont {Esposito}}, \bibinfo {author} {\bibfnamefont {L.~C.}\ \bibnamefont {Venuti}},\ and\ \bibinfo {author} {\bibfnamefont {A.}~\bibnamefont {Hamma}},\ }\href@noop {} {\bibfield  {journal} {\bibinfo  {journal} {arXiv preprint arXiv:2501.19261}\ } (\bibinfo {year} {2025})}\BibitemShut {NoStop}%
\bibitem [{\citenamefont {Lami}\ \emph {et~al.}(2024)\citenamefont {Lami}, \citenamefont {Haug},\ and\ \citenamefont {De~Nardis}}]{lami2024quantum}%
  \BibitemOpen
  \bibfield  {author} {\bibinfo {author} {\bibfnamefont {G.}~\bibnamefont {Lami}}, \bibinfo {author} {\bibfnamefont {T.}~\bibnamefont {Haug}},\ and\ \bibinfo {author} {\bibfnamefont {J.}~\bibnamefont {De~Nardis}},\ }\href {https://arxiv.org/abs/2404.18751} {\bibfield  {journal} {\bibinfo  {journal} {arXiv:2404.18751}\ } (\bibinfo {year} {2024})}\BibitemShut {NoStop}%
\bibitem [{\citenamefont {Turkeshi}\ \emph {et~al.}(2023)\citenamefont {Turkeshi}, \citenamefont {Schir{\`o}},\ and\ \citenamefont {Sierant}}]{turkeshi2023measuring}%
  \BibitemOpen
  \bibfield  {author} {\bibinfo {author} {\bibfnamefont {X.}~\bibnamefont {Turkeshi}}, \bibinfo {author} {\bibfnamefont {M.}~\bibnamefont {Schir{\`o}}},\ and\ \bibinfo {author} {\bibfnamefont {P.}~\bibnamefont {Sierant}},\ }\href {https://doi.org/10.1103/PhysRevA.108.042408} {\bibfield  {journal} {\bibinfo  {journal} {Physical Review A}\ }\textbf {\bibinfo {volume} {108}},\ \bibinfo {pages} {042408} (\bibinfo {year} {2023})}\BibitemShut {NoStop}%
\bibitem [{\citenamefont {Leone}\ \emph {et~al.}(2023{\natexlab{a}})\citenamefont {Leone}, \citenamefont {Oliviero},\ and\ \citenamefont {Hamma}}]{leone2023nonstabilizerness}%
  \BibitemOpen
  \bibfield  {author} {\bibinfo {author} {\bibfnamefont {L.}~\bibnamefont {Leone}}, \bibinfo {author} {\bibfnamefont {S.~F.~E.}\ \bibnamefont {Oliviero}},\ and\ \bibinfo {author} {\bibfnamefont {A.}~\bibnamefont {Hamma}},\ }\href {https://doi.org/10.1103/PhysRevA.107.022429} {\bibfield  {journal} {\bibinfo  {journal} {Phys. Rev. A}\ }\textbf {\bibinfo {volume} {107}},\ \bibinfo {pages} {022429} (\bibinfo {year} {2023}{\natexlab{a}})}\BibitemShut {NoStop}%
\bibitem [{\citenamefont {Leone}\ \emph {et~al.}(2023{\natexlab{b}})\citenamefont {Leone}, \citenamefont {Oliviero}, \citenamefont {Esposito},\ and\ \citenamefont {Hamma}}]{leone2023phase}%
  \BibitemOpen
  \bibfield  {author} {\bibinfo {author} {\bibfnamefont {L.}~\bibnamefont {Leone}}, \bibinfo {author} {\bibfnamefont {S.~F.}\ \bibnamefont {Oliviero}}, \bibinfo {author} {\bibfnamefont {G.}~\bibnamefont {Esposito}},\ and\ \bibinfo {author} {\bibfnamefont {A.}~\bibnamefont {Hamma}},\ }\href {https://arxiv.org/abs/2302.07895} {\bibfield  {journal} {\bibinfo  {journal} {arXiv:2302.07895}\ } (\bibinfo {year} {2023}{\natexlab{b}})}\BibitemShut {NoStop}%
\bibitem [{\citenamefont {Garcia}\ \emph {et~al.}(2023)\citenamefont {Garcia}, \citenamefont {Bu},\ and\ \citenamefont {Jaffe}}]{garcia2023resource}%
  \BibitemOpen
  \bibfield  {author} {\bibinfo {author} {\bibfnamefont {R.~J.}\ \bibnamefont {Garcia}}, \bibinfo {author} {\bibfnamefont {K.}~\bibnamefont {Bu}},\ and\ \bibinfo {author} {\bibfnamefont {A.}~\bibnamefont {Jaffe}},\ }\href@noop {} {\bibfield  {journal} {\bibinfo  {journal} {Proceedings of the National Academy of Sciences}\ }\textbf {\bibinfo {volume} {120}},\ \bibinfo {pages} {e2217031120} (\bibinfo {year} {2023})}\BibitemShut {NoStop}%
\bibitem [{\citenamefont {Turkeshi}\ \emph {et~al.}(2025)\citenamefont {Turkeshi}, \citenamefont {Dymarsky},\ and\ \citenamefont {Sierant}}]{turkeshi2025pauli}%
  \BibitemOpen
  \bibfield  {author} {\bibinfo {author} {\bibfnamefont {X.}~\bibnamefont {Turkeshi}}, \bibinfo {author} {\bibfnamefont {A.}~\bibnamefont {Dymarsky}},\ and\ \bibinfo {author} {\bibfnamefont {P.}~\bibnamefont {Sierant}},\ }\href {https://doi.org/10.1103/PhysRevB.111.054301} {\bibfield  {journal} {\bibinfo  {journal} {Phys. Rev. B}\ }\textbf {\bibinfo {volume} {111}},\ \bibinfo {pages} {054301} (\bibinfo {year} {2025})}\BibitemShut {NoStop}%
\bibitem [{\citenamefont {Bera}\ and\ \citenamefont {Schir{\`o}}(2025)}]{bera2025non}%
  \BibitemOpen
  \bibfield  {author} {\bibinfo {author} {\bibfnamefont {S.}~\bibnamefont {Bera}}\ and\ \bibinfo {author} {\bibfnamefont {M.}~\bibnamefont {Schir{\`o}}},\ }\href {https://arxiv.org/abs/2502.01582} {\bibfield  {journal} {\bibinfo  {journal} {arXiv:2502.01582}\ } (\bibinfo {year} {2025})}\BibitemShut {NoStop}%
\bibitem [{\citenamefont {Frau}\ \emph {et~al.}(2024{\natexlab{b}})\citenamefont {Frau}, \citenamefont {Tarabunga}, \citenamefont {Collura}, \citenamefont {Tirrito},\ and\ \citenamefont {Dalmonte}}]{frau2024stabilizer}%
  \BibitemOpen
  \bibfield  {author} {\bibinfo {author} {\bibfnamefont {M.}~\bibnamefont {Frau}}, \bibinfo {author} {\bibfnamefont {P.~S.}\ \bibnamefont {Tarabunga}}, \bibinfo {author} {\bibfnamefont {M.}~\bibnamefont {Collura}}, \bibinfo {author} {\bibfnamefont {E.}~\bibnamefont {Tirrito}},\ and\ \bibinfo {author} {\bibfnamefont {M.}~\bibnamefont {Dalmonte}},\ }\href {https://arxiv.org/abs/2411.11720} {\bibfield  {journal} {\bibinfo  {journal} {arXiv:2411.11720}\ } (\bibinfo {year} {2024}{\natexlab{b}})}\BibitemShut {NoStop}%
\bibitem [{\citenamefont {Tarabunga}\ and\ \citenamefont {Tirrito}(2024{\natexlab{b}})}]{tarabunga2024magic}%
  \BibitemOpen
  \bibfield  {author} {\bibinfo {author} {\bibfnamefont {P.~S.}\ \bibnamefont {Tarabunga}}\ and\ \bibinfo {author} {\bibfnamefont {E.}~\bibnamefont {Tirrito}},\ }\href {https://arxiv.org/abs/2407.15939} {\bibfield  {journal} {\bibinfo  {journal} {arXiv:2407.15939}\ } (\bibinfo {year} {2024}{\natexlab{b}})}\BibitemShut {NoStop}%
\bibitem [{\citenamefont {Hamma}\ \emph {et~al.}(2005{\natexlab{a}})\citenamefont {Hamma}, \citenamefont {Ionicioiu},\ and\ \citenamefont {Zanardi}}]{hamma2005ground}%
  \BibitemOpen
  \bibfield  {author} {\bibinfo {author} {\bibfnamefont {A.}~\bibnamefont {Hamma}}, \bibinfo {author} {\bibfnamefont {R.}~\bibnamefont {Ionicioiu}},\ and\ \bibinfo {author} {\bibfnamefont {P.}~\bibnamefont {Zanardi}},\ }\href {https://doi.org/10.1016/j.physleta.2005.01.060} {\bibfield  {journal} {\bibinfo  {journal} {Phys. Lett. A}\ }\textbf {\bibinfo {volume} {337}},\ \bibinfo {pages} {22} (\bibinfo {year} {2005}{\natexlab{a}})}\BibitemShut {NoStop}%
\bibitem [{\citenamefont {Levin}\ and\ \citenamefont {Wen}(2006)}]{levin2006detecting}%
  \BibitemOpen
  \bibfield  {author} {\bibinfo {author} {\bibfnamefont {M.}~\bibnamefont {Levin}}\ and\ \bibinfo {author} {\bibfnamefont {X.-G.}\ \bibnamefont {Wen}},\ }\href {https://doi.org/10.1103/PhysRevLett.96.110405} {\bibfield  {journal} {\bibinfo  {journal} {Phys. Rev. Lett.}\ }\textbf {\bibinfo {volume} {96}},\ \bibinfo {pages} {110405} (\bibinfo {year} {2006})}\BibitemShut {NoStop}%
\bibitem [{\citenamefont {Kitaev}\ and\ \citenamefont {Preskill}(2006)}]{kitaev2006topological}%
  \BibitemOpen
  \bibfield  {author} {\bibinfo {author} {\bibfnamefont {A.}~\bibnamefont {Kitaev}}\ and\ \bibinfo {author} {\bibfnamefont {J.}~\bibnamefont {Preskill}},\ }\href {https://doi.org/10.1103/PhysRevLett.96.110404} {\bibfield  {journal} {\bibinfo  {journal} {Phys. Rev. Lett.}\ }\textbf {\bibinfo {volume} {96}},\ \bibinfo {pages} {110404} (\bibinfo {year} {2006})}\BibitemShut {NoStop}%
\bibitem [{\citenamefont {Zeng}\ \emph {et~al.}(2015)\citenamefont {Zeng}, \citenamefont {Chen}, \citenamefont {Zhou},\ and\ \citenamefont {Wen}}]{zeng2015quantum}%
  \BibitemOpen
  \bibfield  {author} {\bibinfo {author} {\bibfnamefont {B.}~\bibnamefont {Zeng}}, \bibinfo {author} {\bibfnamefont {X.}~\bibnamefont {Chen}}, \bibinfo {author} {\bibfnamefont {D.-L.}\ \bibnamefont {Zhou}},\ and\ \bibinfo {author} {\bibfnamefont {X.-G.}\ \bibnamefont {Wen}},\ }\href {https://arxiv.org/abs/1508.02595} {\bibfield  {journal} {\bibinfo  {journal} {arXiv preprint arXiv:1508.02595}\ } (\bibinfo {year} {2015})}\BibitemShut {NoStop}%
\bibitem [{\citenamefont {Zeng}\ and\ \citenamefont {Zhou}(2016)}]{zeng2016topological}%
  \BibitemOpen
  \bibfield  {author} {\bibinfo {author} {\bibfnamefont {B.}~\bibnamefont {Zeng}}\ and\ \bibinfo {author} {\bibfnamefont {D.-L.}\ \bibnamefont {Zhou}},\ }\href {https://doi.org/10.1209/0295-5075/113/56001} {\bibfield  {journal} {\bibinfo  {journal} {Europhys. Lett.}\ }\textbf {\bibinfo {volume} {113}},\ \bibinfo {pages} {56001} (\bibinfo {year} {2016})}\BibitemShut {NoStop}%
\bibitem [{\citenamefont {Fromholz}\ \emph {et~al.}(2020)\citenamefont {Fromholz}, \citenamefont {Magnifico}, \citenamefont {Vitale}, \citenamefont {Mendes-Santos},\ and\ \citenamefont {Dalmonte}}]{fromholz2020entanglement}%
  \BibitemOpen
  \bibfield  {author} {\bibinfo {author} {\bibfnamefont {P.}~\bibnamefont {Fromholz}}, \bibinfo {author} {\bibfnamefont {G.}~\bibnamefont {Magnifico}}, \bibinfo {author} {\bibfnamefont {V.}~\bibnamefont {Vitale}}, \bibinfo {author} {\bibfnamefont {T.}~\bibnamefont {Mendes-Santos}},\ and\ \bibinfo {author} {\bibfnamefont {M.}~\bibnamefont {Dalmonte}},\ }\href {https://doi.org/10.1103/PhysRevB.101.085136} {\bibfield  {journal} {\bibinfo  {journal} {Phys. Rev. B}\ }\textbf {\bibinfo {volume} {101}},\ \bibinfo {pages} {085136} (\bibinfo {year} {2020})}\BibitemShut {NoStop}%
\bibitem [{\citenamefont {Perez-Garcia}\ \emph {et~al.}(2007)\citenamefont {Perez-Garcia}, \citenamefont {Verstraete}, \citenamefont {Wolf},\ and\ \citenamefont {Cirac}}]{perez2007matrix}%
  \BibitemOpen
  \bibfield  {author} {\bibinfo {author} {\bibfnamefont {D.}~\bibnamefont {Perez-Garcia}}, \bibinfo {author} {\bibfnamefont {F.}~\bibnamefont {Verstraete}}, \bibinfo {author} {\bibfnamefont {M.}~\bibnamefont {Wolf}},\ and\ \bibinfo {author} {\bibfnamefont {J.}~\bibnamefont {Cirac}},\ }\href {https://arxiv.org/abs/quant-ph/0608197} {\bibfield  {journal} {\bibinfo  {journal} {Quantum Inf. Comp.}\ }\textbf {\bibinfo {volume} {7}},\ \bibinfo {pages} {401} (\bibinfo {year} {2007})}\BibitemShut {NoStop}%
\bibitem [{\citenamefont {Cirac}\ \emph {et~al.}(2017{\natexlab{a}})\citenamefont {Cirac}, \citenamefont {Perez-Garcia}, \citenamefont {Schuch},\ and\ \citenamefont {Verstraete}}]{cirac2017matrix}%
  \BibitemOpen
  \bibfield  {author} {\bibinfo {author} {\bibfnamefont {J.~I.}\ \bibnamefont {Cirac}}, \bibinfo {author} {\bibfnamefont {D.}~\bibnamefont {Perez-Garcia}}, \bibinfo {author} {\bibfnamefont {N.}~\bibnamefont {Schuch}},\ and\ \bibinfo {author} {\bibfnamefont {F.}~\bibnamefont {Verstraete}},\ }\href {https://doi.org/10.1016/j.aop.2016.12.030} {\bibfield  {journal} {\bibinfo  {journal} {Ann. Phys.}\ }\textbf {\bibinfo {volume} {378}},\ \bibinfo {pages} {100} (\bibinfo {year} {2017}{\natexlab{a}})}\BibitemShut {NoStop}%
\bibitem [{\citenamefont {Cirac}\ \emph {et~al.}(2021)\citenamefont {Cirac}, \citenamefont {Perez-Garcia}, \citenamefont {Schuch},\ and\ \citenamefont {Verstraete}}]{cirac2020matrix}%
  \BibitemOpen
  \bibfield  {author} {\bibinfo {author} {\bibfnamefont {J.~I.}\ \bibnamefont {Cirac}}, \bibinfo {author} {\bibfnamefont {D.}~\bibnamefont {Perez-Garcia}}, \bibinfo {author} {\bibfnamefont {N.}~\bibnamefont {Schuch}},\ and\ \bibinfo {author} {\bibfnamefont {F.}~\bibnamefont {Verstraete}},\ }\href {https://doi.org/10.1103/RevModPhys.93.045003} {\bibfield  {journal} {\bibinfo  {journal} {Rev. Mod. Phys.}\ }\textbf {\bibinfo {volume} {93}},\ \bibinfo {pages} {045003} (\bibinfo {year} {2021})}\BibitemShut {NoStop}%
\bibitem [{\citenamefont {Cao}\ \emph {et~al.}(2024)\citenamefont {Cao}, \citenamefont {Cheng}, \citenamefont {Hamma}, \citenamefont {Leone}, \citenamefont {Munizzi},\ and\ \citenamefont {Oliviero}}]{cao2024gravitational}%
  \BibitemOpen
  \bibfield  {author} {\bibinfo {author} {\bibfnamefont {C.}~\bibnamefont {Cao}}, \bibinfo {author} {\bibfnamefont {G.}~\bibnamefont {Cheng}}, \bibinfo {author} {\bibfnamefont {A.}~\bibnamefont {Hamma}}, \bibinfo {author} {\bibfnamefont {L.}~\bibnamefont {Leone}}, \bibinfo {author} {\bibfnamefont {W.}~\bibnamefont {Munizzi}},\ and\ \bibinfo {author} {\bibfnamefont {S.~F.}\ \bibnamefont {Oliviero}},\ }\href {https://arxiv.org/abs/2403.07056} {\bibfield  {journal} {\bibinfo  {journal} {arXiv:2403.07056}\ } (\bibinfo {year} {2024})}\BibitemShut {NoStop}%
\bibitem [{\citenamefont {Cao}(2024)}]{cao2024non}%
  \BibitemOpen
  \bibfield  {author} {\bibinfo {author} {\bibfnamefont {C.}~\bibnamefont {Cao}},\ }\href {https://doi.org/10.1007/JHEP11(2024)105} {\bibfield  {journal} {\bibinfo  {journal} {JHEP}\ }\textbf {\bibinfo {volume} {2024}}\bibinfo  {number} { (11)},\ \bibinfo {pages} {1}}\BibitemShut {NoStop}%
\bibitem [{\citenamefont {Qian}\ and\ \citenamefont {Wang}(2025)}]{qian2025quantum}%
  \BibitemOpen
\bibfield  {number} {  }\bibfield  {author} {\bibinfo {author} {\bibfnamefont {D.}~\bibnamefont {Qian}}\ and\ \bibinfo {author} {\bibfnamefont {J.}~\bibnamefont {Wang}},\ }\href {https://arxiv.org/abs/2502.06393} {\bibfield  {journal} {\bibinfo  {journal} {arXiv:2502.06393}\ } (\bibinfo {year} {2025})}\BibitemShut {NoStop}%
\bibitem [{\citenamefont {Mele}(2024)}]{mele2024introduction}%
  \BibitemOpen
  \bibfield  {author} {\bibinfo {author} {\bibfnamefont {A.~A.}\ \bibnamefont {Mele}},\ }\href {https://doi.org/10.22331/q-2024-05-08-1340} {\bibfield  {journal} {\bibinfo  {journal} {Quantum}\ }\textbf {\bibinfo {volume} {8}},\ \bibinfo {pages} {1340} (\bibinfo {year} {2024})}\BibitemShut {NoStop}%
\bibitem [{\citenamefont {Singal}\ \emph {et~al.}(2023)\citenamefont {Singal}, \citenamefont {Chiang}, \citenamefont {Hsu}, \citenamefont {Kim}, \citenamefont {Goan},\ and\ \citenamefont {Hsieh}}]{singal2023counting}%
  \BibitemOpen
  \bibfield  {author} {\bibinfo {author} {\bibfnamefont {T.}~\bibnamefont {Singal}}, \bibinfo {author} {\bibfnamefont {C.}~\bibnamefont {Chiang}}, \bibinfo {author} {\bibfnamefont {E.}~\bibnamefont {Hsu}}, \bibinfo {author} {\bibfnamefont {E.}~\bibnamefont {Kim}}, \bibinfo {author} {\bibfnamefont {H.-S.}\ \bibnamefont {Goan}},\ and\ \bibinfo {author} {\bibfnamefont {M.-H.}\ \bibnamefont {Hsieh}},\ }\href {https://doi.org/10.22331/q-2023-07-06-1048} {\bibfield  {journal} {\bibinfo  {journal} {Quantum}\ }\textbf {\bibinfo {volume} {7}},\ \bibinfo {pages} {1048} (\bibinfo {year} {2023})}\BibitemShut {NoStop}%
\bibitem [{\citenamefont {Page}(1993)}]{page1993average}%
  \BibitemOpen
  \bibfield  {author} {\bibinfo {author} {\bibfnamefont {D.~N.}\ \bibnamefont {Page}},\ }\href {https://doi.org/10.1103/PhysRevLett.71.1291} {\bibfield  {journal} {\bibinfo  {journal} {Phys. Rev. Lett.}\ }\textbf {\bibinfo {volume} {71}},\ \bibinfo {pages} {1291} (\bibinfo {year} {1993})}\BibitemShut {NoStop}%
\bibitem [{\citenamefont {Piroli}\ and\ \citenamefont {Cirac}(2020)}]{piroli2020quantum}%
  \BibitemOpen
  \bibfield  {author} {\bibinfo {author} {\bibfnamefont {L.}~\bibnamefont {Piroli}}\ and\ \bibinfo {author} {\bibfnamefont {J.~I.}\ \bibnamefont {Cirac}},\ }\href {https://doi.org/10.1103/PhysRevLett.125.190402} {\bibfield  {journal} {\bibinfo  {journal} {Phys. Rev. Lett.}\ }\textbf {\bibinfo {volume} {125}},\ \bibinfo {pages} {190402} (\bibinfo {year} {2020})}\BibitemShut {NoStop}%
\bibitem [{\citenamefont {Chen}\ \emph {et~al.}(2010)\citenamefont {Chen}, \citenamefont {Gu},\ and\ \citenamefont {Wen}}]{chen2010local}%
  \BibitemOpen
  \bibfield  {author} {\bibinfo {author} {\bibfnamefont {X.}~\bibnamefont {Chen}}, \bibinfo {author} {\bibfnamefont {Z.-C.}\ \bibnamefont {Gu}},\ and\ \bibinfo {author} {\bibfnamefont {X.-G.}\ \bibnamefont {Wen}},\ }\href {https://doi.org/10.1103/PhysRevB.82.155138} {\bibfield  {journal} {\bibinfo  {journal} {Phys. Rev. B}\ }\textbf {\bibinfo {volume} {82}},\ \bibinfo {pages} {155138} (\bibinfo {year} {2010})}\BibitemShut {NoStop}%
\bibitem [{\citenamefont {Hastings}(2013)}]{hastings2013classifying}%
  \BibitemOpen
  \bibfield  {author} {\bibinfo {author} {\bibfnamefont {M.~B.}\ \bibnamefont {Hastings}},\ }\href {https://doi.org/10.1103/PhysRevB.88.165114} {\bibfield  {journal} {\bibinfo  {journal} {Phys. Rev. B}\ }\textbf {\bibinfo {volume} {88}},\ \bibinfo {pages} {165114} (\bibinfo {year} {2013})}\BibitemShut {NoStop}%
\bibitem [{\citenamefont {Zeng}\ and\ \citenamefont {Wen}(2015)}]{zeng2015gapped}%
  \BibitemOpen
  \bibfield  {author} {\bibinfo {author} {\bibfnamefont {B.}~\bibnamefont {Zeng}}\ and\ \bibinfo {author} {\bibfnamefont {X.-G.}\ \bibnamefont {Wen}},\ }\href {https://doi.org/10.1103/PhysRevB.91.125121} {\bibfield  {journal} {\bibinfo  {journal} {Phys. Rev. B}\ }\textbf {\bibinfo {volume} {91}},\ \bibinfo {pages} {125121} (\bibinfo {year} {2015})}\BibitemShut {NoStop}%
\bibitem [{\citenamefont {Chiu}\ \emph {et~al.}(2016)\citenamefont {Chiu}, \citenamefont {Teo}, \citenamefont {Schnyder},\ and\ \citenamefont {Ryu}}]{chiu2016classification}%
  \BibitemOpen
  \bibfield  {author} {\bibinfo {author} {\bibfnamefont {C.-K.}\ \bibnamefont {Chiu}}, \bibinfo {author} {\bibfnamefont {J.~C.~Y.}\ \bibnamefont {Teo}}, \bibinfo {author} {\bibfnamefont {A.~P.}\ \bibnamefont {Schnyder}},\ and\ \bibinfo {author} {\bibfnamefont {S.}~\bibnamefont {Ryu}},\ }\href {https://doi.org/10.1103/RevModPhys.88.035005} {\bibfield  {journal} {\bibinfo  {journal} {Rev. Mod. Phys.}\ }\textbf {\bibinfo {volume} {88}},\ \bibinfo {pages} {035005} (\bibinfo {year} {2016})}\BibitemShut {NoStop}%
\bibitem [{\citenamefont {Verstraete}\ and\ \citenamefont {Cirac}(2006)}]{verstraete2006matrix}%
  \BibitemOpen
  \bibfield  {author} {\bibinfo {author} {\bibfnamefont {F.}~\bibnamefont {Verstraete}}\ and\ \bibinfo {author} {\bibfnamefont {J.~I.}\ \bibnamefont {Cirac}},\ }\href {https://doi.org/10.1103/PhysRevB.73.094423} {\bibfield  {journal} {\bibinfo  {journal} {Phys. Rev. B}\ }\textbf {\bibinfo {volume} {73}},\ \bibinfo {pages} {094423} (\bibinfo {year} {2006})}\BibitemShut {NoStop}%
\bibitem [{\citenamefont {Hastings}(2007)}]{hastings2007area}%
  \BibitemOpen
  \bibfield  {author} {\bibinfo {author} {\bibfnamefont {M.~B.}\ \bibnamefont {Hastings}},\ }\href {https://doi.org/10.1088/1742-5468/2007/08/P08024} {\bibfield  {journal} {\bibinfo  {journal} {JSTAT}\ }\textbf {\bibinfo {volume} {2007}}\bibinfo  {number} { (08)},\ \bibinfo {pages} {P08024}}\BibitemShut {NoStop}%
\bibitem [{\citenamefont {Arad}\ \emph {et~al.}(2013)\citenamefont {Arad}, \citenamefont {Kitaev}, \citenamefont {Landau},\ and\ \citenamefont {Vazirani}}]{arad2013area}%
  \BibitemOpen
\bibfield  {number} {  }\bibfield  {author} {\bibinfo {author} {\bibfnamefont {I.}~\bibnamefont {Arad}}, \bibinfo {author} {\bibfnamefont {A.}~\bibnamefont {Kitaev}}, \bibinfo {author} {\bibfnamefont {Z.}~\bibnamefont {Landau}},\ and\ \bibinfo {author} {\bibfnamefont {U.}~\bibnamefont {Vazirani}},\ }\href {https://arxiv.org/abs/1301.1162} {\bibfield  {journal} {\bibinfo  {journal} {arXiv:1301.1162}\ } (\bibinfo {year} {2013})}\BibitemShut {NoStop}%
\bibitem [{\citenamefont {Huang}(2014)}]{huang2014area}%
  \BibitemOpen
  \bibfield  {author} {\bibinfo {author} {\bibfnamefont {Y.}~\bibnamefont {Huang}},\ }\href {https://arxiv.org/abs/1403.0327} {\bibfield  {journal} {\bibinfo  {journal} {arXiv:1403.0327}\ } (\bibinfo {year} {2014})}\BibitemShut {NoStop}%
\bibitem [{\citenamefont {Piroli}\ \emph {et~al.}(2021)\citenamefont {Piroli}, \citenamefont {Styliaris},\ and\ \citenamefont {Cirac}}]{piroli2021quantum}%
  \BibitemOpen
  \bibfield  {author} {\bibinfo {author} {\bibfnamefont {L.}~\bibnamefont {Piroli}}, \bibinfo {author} {\bibfnamefont {G.}~\bibnamefont {Styliaris}},\ and\ \bibinfo {author} {\bibfnamefont {J.~I.}\ \bibnamefont {Cirac}},\ }\href {https://doi.org/10.1103/PhysRevLett.127.220503} {\bibfield  {journal} {\bibinfo  {journal} {Phys. Rev. Lett.}\ }\textbf {\bibinfo {volume} {127}},\ \bibinfo {pages} {220503} (\bibinfo {year} {2021})}\BibitemShut {NoStop}%
\bibitem [{\citenamefont {Malz}\ \emph {et~al.}(2024)\citenamefont {Malz}, \citenamefont {Styliaris}, \citenamefont {Wei},\ and\ \citenamefont {Cirac}}]{malz2024preparation}%
  \BibitemOpen
  \bibfield  {author} {\bibinfo {author} {\bibfnamefont {D.}~\bibnamefont {Malz}}, \bibinfo {author} {\bibfnamefont {G.}~\bibnamefont {Styliaris}}, \bibinfo {author} {\bibfnamefont {Z.-Y.}\ \bibnamefont {Wei}},\ and\ \bibinfo {author} {\bibfnamefont {J.~I.}\ \bibnamefont {Cirac}},\ }\href {https://doi.org/10.1103/PhysRevLett.132.040404} {\bibfield  {journal} {\bibinfo  {journal} {Phys. Rev. Lett.}\ }\textbf {\bibinfo {volume} {132}},\ \bibinfo {pages} {040404} (\bibinfo {year} {2024})}\BibitemShut {NoStop}%
\bibitem [{Note1()}]{Note1}%
  \BibitemOpen
  \bibinfo {note} {The RG fixed points are typically defined on chains of qudits of dimension $d$. However, we can always view a local $d$-dimensional space as that of $\lceil \log _2(d)\rceil $ qubits. Note that any shallow circuit made of $2$-qudit gates on the qudit chain can always be expressed as a shallow local circuit made of $2$-qubit gates in the associated qubit chain, possibly up to a polylogarithmic overhead.}\BibitemShut {Stop}%
\bibitem [{Note2()}]{Note2}%
  \BibitemOpen
  \bibinfo {note} {One could ask if $|\protect \tilde {\phi }[\alpha ,\beta ]\rangle $ can be represented as a TI MPS (with $N$-independent tensors and without boundary tensors), for any choice of $\alpha ,\beta \in \protect \mathbb {C}$, or at least for a dense set of them. This is indeed the case, possibly restricting to some values of $N$. For example, by summing $n$ product states $\mathinner {|{0}\rangle }^{\otimes N}$ and $m$ product states $\mathinner {|{1}\rangle }^{\otimes N}$, we obtain a TI MPS of the form~\protect \eqref {eq:ghz} with $\alpha =n/\protect \sqrt {n^2+m^2}$ and $\beta =m/\protect \sqrt {n^2+m^2}$, which can approximate any pair of positive real numbers. When $\alpha $ is complex (we can always assume that $\beta \in \protect \mathbb {R}^+$ up to a global phase), we can sum product states defined by local tensors with an additional phase $e^{i\varphi }$, yielding the state $e^{i N\varphi }\mathinner {|{0}\rangle }^{\otimes N}$. Choosing $\varphi =2\pi r/s$ with integers $r, s$ and
  restricting to system sizes $N\equiv 1$ (mod $s$), we obtain states of the form~\protect \eqref {eq:ghz} where $\alpha $ has a phase angle $\varphi =2\pi r/s$\label {footnote}}\BibitemShut {NoStop}%
\bibitem [{\citenamefont {Hamma}\ \emph {et~al.}(2005{\natexlab{b}})\citenamefont {Hamma}, \citenamefont {Ionicioiu},\ and\ \citenamefont {Zanardi}}]{hamma2005bipartite}%
  \BibitemOpen
  \bibfield  {author} {\bibinfo {author} {\bibfnamefont {A.}~\bibnamefont {Hamma}}, \bibinfo {author} {\bibfnamefont {R.}~\bibnamefont {Ionicioiu}},\ and\ \bibinfo {author} {\bibfnamefont {P.}~\bibnamefont {Zanardi}},\ }\href {https://doi.org/10.1103/PhysRevA.71.022315} {\bibfield  {journal} {\bibinfo  {journal} {Phys. Rev. A}\ }\textbf {\bibinfo {volume} {71}},\ \bibinfo {pages} {022315} (\bibinfo {year} {2005}{\natexlab{b}})}\BibitemShut {NoStop}%
\bibitem [{SM()}]{SM}%
  \BibitemOpen
  \href@noop {} {}\bibinfo {note} {See Supplemental Material for further details.}\BibitemShut {Stop}%
\bibitem [{\citenamefont {Audenaert}(2007)}]{audenaert2007sharp}%
  \BibitemOpen
  \bibfield  {author} {\bibinfo {author} {\bibfnamefont {K.~M.}\ \bibnamefont {Audenaert}},\ }\href {https://doi.org/10.1088/1751-8113/40/28/S18} {\bibfield  {journal} {\bibinfo  {journal} {J. Phys. A: Math. Theor.}\ }\textbf {\bibinfo {volume} {40}},\ \bibinfo {pages} {8127} (\bibinfo {year} {2007})}\BibitemShut {NoStop}%
\bibitem [{\citenamefont {Sang}\ \emph {et~al.}(2021)\citenamefont {Sang}, \citenamefont {Li}, \citenamefont {Zhou}, \citenamefont {Chen}, \citenamefont {Hsieh},\ and\ \citenamefont {Fisher}}]{sang2021entanglement}%
  \BibitemOpen
  \bibfield  {author} {\bibinfo {author} {\bibfnamefont {S.}~\bibnamefont {Sang}}, \bibinfo {author} {\bibfnamefont {Y.}~\bibnamefont {Li}}, \bibinfo {author} {\bibfnamefont {T.}~\bibnamefont {Zhou}}, \bibinfo {author} {\bibfnamefont {X.}~\bibnamefont {Chen}}, \bibinfo {author} {\bibfnamefont {T.~H.}\ \bibnamefont {Hsieh}},\ and\ \bibinfo {author} {\bibfnamefont {M.~P.}\ \bibnamefont {Fisher}},\ }\href {https://doi.org/10.1103/PRXQuantum.2.030313} {\bibfield  {journal} {\bibinfo  {journal} {PRX Quantum}\ }\textbf {\bibinfo {volume} {2}},\ \bibinfo {pages} {030313} (\bibinfo {year} {2021})}\BibitemShut {NoStop}%
\bibitem [{\citenamefont {Lu}\ and\ \citenamefont {Grover}(2020)}]{lu2020entanglement}%
  \BibitemOpen
  \bibfield  {author} {\bibinfo {author} {\bibfnamefont {T.-C.}\ \bibnamefont {Lu}}\ and\ \bibinfo {author} {\bibfnamefont {T.}~\bibnamefont {Grover}},\ }\href {https://doi.org/10.1103/PhysRevB.102.235110} {\bibfield  {journal} {\bibinfo  {journal} {Phys. Rev. B}\ }\textbf {\bibinfo {volume} {102}},\ \bibinfo {pages} {235110} (\bibinfo {year} {2020})}\BibitemShut {NoStop}%
\bibitem [{\citenamefont {Pollmann}\ \emph {et~al.}(2010)\citenamefont {Pollmann}, \citenamefont {Turner}, \citenamefont {Berg},\ and\ \citenamefont {Oshikawa}}]{pollmann2010entanglement}%
  \BibitemOpen
  \bibfield  {author} {\bibinfo {author} {\bibfnamefont {F.}~\bibnamefont {Pollmann}}, \bibinfo {author} {\bibfnamefont {A.~M.}\ \bibnamefont {Turner}}, \bibinfo {author} {\bibfnamefont {E.}~\bibnamefont {Berg}},\ and\ \bibinfo {author} {\bibfnamefont {M.}~\bibnamefont {Oshikawa}},\ }\href {https://doi.org/10.1103/PhysRevB.81.064439} {\bibfield  {journal} {\bibinfo  {journal} {Phys. Rev. B}\ }\textbf {\bibinfo {volume} {81}},\ \bibinfo {pages} {064439} (\bibinfo {year} {2010})}\BibitemShut {NoStop}%
\bibitem [{\citenamefont {Chen}\ \emph {et~al.}(2011)\citenamefont {Chen}, \citenamefont {Gu},\ and\ \citenamefont {Wen}}]{chen2011classification}%
  \BibitemOpen
  \bibfield  {author} {\bibinfo {author} {\bibfnamefont {X.}~\bibnamefont {Chen}}, \bibinfo {author} {\bibfnamefont {Z.-C.}\ \bibnamefont {Gu}},\ and\ \bibinfo {author} {\bibfnamefont {X.-G.}\ \bibnamefont {Wen}},\ }\href {https://doi.org/10.1103/PhysRevB.83.035107} {\bibfield  {journal} {\bibinfo  {journal} {Phys. Rev. B}\ }\textbf {\bibinfo {volume} {83}},\ \bibinfo {pages} {035107} (\bibinfo {year} {2011})}\BibitemShut {NoStop}%
\bibitem [{\citenamefont {Schuch}\ \emph {et~al.}(2011)\citenamefont {Schuch}, \citenamefont {P\'erez-Garc\'{\i}a},\ and\ \citenamefont {Cirac}}]{schuch2011classifying}%
  \BibitemOpen
  \bibfield  {author} {\bibinfo {author} {\bibfnamefont {N.}~\bibnamefont {Schuch}}, \bibinfo {author} {\bibfnamefont {D.}~\bibnamefont {P\'erez-Garc\'{\i}a}},\ and\ \bibinfo {author} {\bibfnamefont {I.}~\bibnamefont {Cirac}},\ }\href {https://doi.org/10.1103/PhysRevB.84.165139} {\bibfield  {journal} {\bibinfo  {journal} {Phys. Rev. B}\ }\textbf {\bibinfo {volume} {84}},\ \bibinfo {pages} {165139} (\bibinfo {year} {2011})}\BibitemShut {NoStop}%
\bibitem [{\citenamefont {Gross}\ \emph {et~al.}(2012)\citenamefont {Gross}, \citenamefont {Nesme}, \citenamefont {Vogts},\ and\ \citenamefont {Werner}}]{gross2012index}%
  \BibitemOpen
  \bibfield  {author} {\bibinfo {author} {\bibfnamefont {D.}~\bibnamefont {Gross}}, \bibinfo {author} {\bibfnamefont {V.}~\bibnamefont {Nesme}}, \bibinfo {author} {\bibfnamefont {H.}~\bibnamefont {Vogts}},\ and\ \bibinfo {author} {\bibfnamefont {R.~F.}\ \bibnamefont {Werner}},\ }\href {https://doi.org/10.1007/s00220-012-1423-1} {\bibfield  {journal} {\bibinfo  {journal} {Comm. Math. Phys.}\ }\textbf {\bibinfo {volume} {310}},\ \bibinfo {pages} {419} (\bibinfo {year} {2012})}\BibitemShut {NoStop}%
\bibitem [{\citenamefont {Cirac}\ \emph {et~al.}(2017{\natexlab{b}})\citenamefont {Cirac}, \citenamefont {Perez-Garcia}, \citenamefont {Schuch},\ and\ \citenamefont {Verstraete}}]{cirac2017MPUs}%
  \BibitemOpen
  \bibfield  {author} {\bibinfo {author} {\bibfnamefont {J.~I.}\ \bibnamefont {Cirac}}, \bibinfo {author} {\bibfnamefont {D.}~\bibnamefont {Perez-Garcia}}, \bibinfo {author} {\bibfnamefont {N.}~\bibnamefont {Schuch}},\ and\ \bibinfo {author} {\bibfnamefont {F.}~\bibnamefont {Verstraete}},\ }\href {https://doi.org/10.1088/1742-5468/aa7e55} {\bibfield  {journal} {\bibinfo  {journal} {JSTAT}\ }\textbf {\bibinfo {volume} {2017}}\bibinfo  {number} { (8)},\ \bibinfo {pages} {083105}}\BibitemShut {NoStop}%
\bibitem [{\citenamefont {Şahinoğlu}\ \emph {et~al.}(2018)\citenamefont {Şahinoğlu}, \citenamefont {Shukla}, \citenamefont {Bi},\ and\ \citenamefont {Chen}}]{sahinoglu2018matrix}%
  \BibitemOpen
\bibfield  {number} {  }\bibfield  {author} {\bibinfo {author} {\bibfnamefont {M.~B.}\ \bibnamefont {Şahinoğlu}}, \bibinfo {author} {\bibfnamefont {S.~K.}\ \bibnamefont {Shukla}}, \bibinfo {author} {\bibfnamefont {F.}~\bibnamefont {Bi}},\ and\ \bibinfo {author} {\bibfnamefont {X.}~\bibnamefont {Chen}},\ }\href {https://doi.org/10.1103/PhysRevB.98.245122} {\bibfield  {journal} {\bibinfo  {journal} {Phys. Rev. B}\ }\textbf {\bibinfo {volume} {98}},\ \bibinfo {pages} {245122} (\bibinfo {year} {2018})}\BibitemShut {NoStop}%
\bibitem [{\citenamefont {Tantivasadakarn}\ \emph {et~al.}(2023)\citenamefont {Tantivasadakarn}, \citenamefont {Vishwanath},\ and\ \citenamefont {Verresen}}]{tantivasadakarn2023hierarchy}%
  \BibitemOpen
  \bibfield  {author} {\bibinfo {author} {\bibfnamefont {N.}~\bibnamefont {Tantivasadakarn}}, \bibinfo {author} {\bibfnamefont {A.}~\bibnamefont {Vishwanath}},\ and\ \bibinfo {author} {\bibfnamefont {R.}~\bibnamefont {Verresen}},\ }\href {https://doi.org/10.1103/PRXQuantum.4.020339} {\bibfield  {journal} {\bibinfo  {journal} {PRX Quantum}\ }\textbf {\bibinfo {volume} {4}},\ \bibinfo {pages} {020339} (\bibinfo {year} {2023})}\BibitemShut {NoStop}%
\bibitem [{\citenamefont {Potter}\ and\ \citenamefont {Vasseur}(2016)}]{potter2016symmetry}%
  \BibitemOpen
  \bibfield  {author} {\bibinfo {author} {\bibfnamefont {A.~C.}\ \bibnamefont {Potter}}\ and\ \bibinfo {author} {\bibfnamefont {R.}~\bibnamefont {Vasseur}},\ }\href {https://doi.org/10.1103/PhysRevB.94.224206} {\bibfield  {journal} {\bibinfo  {journal} {Phys. Rev. B}\ }\textbf {\bibinfo {volume} {94}},\ \bibinfo {pages} {224206} (\bibinfo {year} {2016})}\BibitemShut {NoStop}%
\bibitem [{\citenamefont {Klappenecker}\ and\ \citenamefont {Rotteler}(2002)}]{klappenecker2002beyond}%
  \BibitemOpen
  \bibfield  {author} {\bibinfo {author} {\bibfnamefont {A.}~\bibnamefont {Klappenecker}}\ and\ \bibinfo {author} {\bibfnamefont {M.}~\bibnamefont {Rotteler}},\ }\href {https://doi.org/10.1109/TIT/2002.800473} {\bibfield  {journal} {\bibinfo  {journal} {IEEE Trans. Inf. Theory}\ }\textbf {\bibinfo {volume} {48}},\ \bibinfo {pages} {2396} (\bibinfo {year} {2002})}\BibitemShut {NoStop}%
\bibitem [{\citenamefont {Verstraete}\ \emph {et~al.}(2005)\citenamefont {Verstraete}, \citenamefont {Cirac}, \citenamefont {Latorre}, \citenamefont {Rico},\ and\ \citenamefont {Wolf}}]{verstraete2005renormalization}%
  \BibitemOpen
  \bibfield  {author} {\bibinfo {author} {\bibfnamefont {F.}~\bibnamefont {Verstraete}}, \bibinfo {author} {\bibfnamefont {J.~I.}\ \bibnamefont {Cirac}}, \bibinfo {author} {\bibfnamefont {J.~I.}\ \bibnamefont {Latorre}}, \bibinfo {author} {\bibfnamefont {E.}~\bibnamefont {Rico}},\ and\ \bibinfo {author} {\bibfnamefont {M.~M.}\ \bibnamefont {Wolf}},\ }\href {https://doi.org/10.1103/PhysRevLett.94.140601} {\bibfield  {journal} {\bibinfo  {journal} {Phys. Rev. Lett.}\ }\textbf {\bibinfo {volume} {94}},\ \bibinfo {pages} {140601} (\bibinfo {year} {2005})}\BibitemShut {NoStop}%
\end{thebibliography}%
	\let\addcontentsline\oldaddcontentsline

\onecolumngrid
\newpage

\appendix
\setcounter{equation}{0}
\setcounter{figure}{0}
\renewcommand{\thetable}{S\arabic{table}}
\renewcommand{\theequation}{S\thesection.\arabic{equation}}

\setcounter{defn}{0}
\setcounter{thm}{0}
\setcounter{figure}{0}

\setcounter{secnumdepth}{2}

\begin{center}
    {\Large \bf Supplemental Material}
\end{center}

Here we provide additional details on the results stated in the main text.

\tableofcontents

\section{Stabilizer states}\label{sec:stabilizer states}

In this section, we review the main properties of stabilizer states and Clifford operators~\citep{aaronson2004improved,nielsen2011quantum}.

We consider a Hilbert space $\HH_N = \otimes_{i\in \Lambda} \HH_i $ 
on a one-dimensional lattice $\Lambda$ of length $|\Lambda| = N$ with local Hilbert spaces $\HH_i = \CC^2$, \emph{i.e.} $N$ qubits. We denote the Pauli-matrices as $\sigma^{\alpha}$, defined as 
\be
\sigma^{0} = \begin{pmatrix}
1 & 0 \\
0 & 1    
\end{pmatrix}\,,\quad
\sigma^1=\begin{pmatrix}
0 & 1 \\
1 & 0
\end{pmatrix}\,,\quad 
\sigma^2=\begin{pmatrix}
0 & -i \\
i & 0
\end{pmatrix}\,, \quad
\sigma^3=\begin{pmatrix}
1 & 0 \\
0 & -1
\end{pmatrix}.
\ee

First, we define the Pauli group $\mathcal{P}(\Lambda)$ as the finite group of the tensor products of Pauli matrices  on $\Lambda$ with a global phase $\phi \in [\pm 1, \pm i]$, i.e.\ an element  $g  \in \mathcal{P}(\Lambda)$ is 
\be
g = \phi \sigma^{\alpha_1}_1 \otimes \sigma^{\alpha_2}_2 \otimes \dots \otimes \sigma^{\alpha_N}_N,\quad \phi = \pm 1,\pm i.
\ee
The stabilizer group $G =\mathrm{Stab}(\ket{\psi})$ of a state $\ket{\psi} \in \HH_N $ is

\begin{align}
\mathrm{Stab}(\ket{\psi}) &= \{P\in \mathcal{P}(\Lambda):P\ket{\psi} = \ket{\psi}\}\,.
\end{align}
Note that elements of $G = \mathrm{Stab}(\ket{\psi})$ can only have global phases $\pm 1$. Furthermore, $G$ is an abelian subgroup of $\mc{P}(\Lambda)$.

A state $\ket{S}$ is a stabilizer state if and only if $
|\mathrm{Stab}(\ket{S})| = 2^N$, and in this case we can write~\cite{klappenecker2002beyond}
\be\label{eq:stab_state}
\ket{S} \bra{S} = \frac{1}{2^N}\sum_{g \in \mathrm{Stab}(\ket{S})} g\,,
\ee
In addition, the reduced density matrix arising from a stabilizer state $\rho_R = \tr_{\bar{R}} \ket{S} \bra{S}$ is of the form
\be\label{eq:reduced_density_matrix}
\rho_R=\frac{1}{2^{|R|} }\sum_{h \in H_R} h\,,
\ee
where $H_R$ is an abelian subgroup of $\mathcal{P}(R)$. $H_R$ contains the Paulis for which the trace in Eq.~\eqref{eq:stab_state} is nonzero, namely those of the form $g = h_R \otimes \openone_{\bar{R}}$.

Finally, we recall the following well known result~\cite{hamma2005bipartite}
\begin{lem}\label{lemma:stab_entanglement_spectrum}
The von Neumann entanglement entropy $S(\rho) = -\tr[\rho \log_2(\rho)]$ of the stabilizer state~\eqref{eq:reduced_density_matrix} reads
\begin{equation}
    S(\rho_R)=|R|-\log_{2}|H_R|\,.
\end{equation}
where we denoted by $|R|$ the number of qubits in $R$ and by $|H_R|$ the size of the subgroup $H_R$. In addition, $|H_R|=2^{\ell}$ for some integer $\ell$, so that $S(\rho_R)$ is an integer.
\end{lem}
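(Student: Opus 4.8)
The plan is to exploit the fact that the reduced density matrix~\eqref{eq:reduced_density_matrix} of a stabilizer state is, up to normalization, an orthogonal projector, so that its entanglement spectrum is completely flat and the entropy is read off immediately. \textbf{Step 1: $\rho_R$ is proportional to a projector.} I would first introduce $P_R := \frac{1}{|H_R|}\sum_{h\in H_R} h$ and show it is an orthogonal projector. Every $h\in H_R$ arises as the nontrivial part of a stabilizer element $h\otimes\openone_{\bar R}\in\mathrm{Stab}(\ket{S})$, whose global phase is $\pm 1$; hence $h$ is Hermitian and $h^2=\openone$. Combined with the group structure of $H_R$, the rearrangement $\sum_{h,h'\in H_R} h h' = |H_R|\sum_{g\in H_R} g$ gives $P_R^\dagger = P_R$ and $P_R^2 = P_R$, and then $\rho_R = \frac{|H_R|}{2^{|R|}}\,P_R$.

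\textbf{Step 2: rank and entropy.} Next I would compute $\mathrm{rank}(P_R)=\tr P_R$. Since $\tr(h)=0$ for every nonidentity Pauli string and $\openone\in H_R$ (as $H_R$ is a subgroup), only the identity term survives, so $\tr P_R = 2^{|R|}/|H_R| =: r$, a positive integer. Thus $\rho_R$ is a normalized rank-$r$ projector whose nonzero eigenvalues all equal $|H_R|/2^{|R|}$ with multiplicity $r$, giving
\begin{equation}
S(\rho_R) = -\,r\,\frac{|H_R|}{2^{|R|}}\log_2\frac{|H_R|}{2^{|R|}} = -\log_2\frac{|H_R|}{2^{|R|}} = |R| - \log_2|H_R|\,.
\end{equation}

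\textbf{Step 3: integrality.} Finally, from $|H_R| = 2^{|R|}/r$ with $r\in\mathbb{N}$, I conclude that $|H_R|$ divides $2^{|R|}$, hence $|H_R| = 2^\ell$ for some integer $0\le \ell\le |R|$ (equivalently, one may note directly that $H_R$ is a finite abelian group all of whose elements are involutions, i.e.\ an elementary abelian $2$-group). Therefore $S(\rho_R) = |R| - \ell \in \mathbb{N}$.

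The argument is essentially routine; the only points requiring a little care are Step 1, namely verifying that $H_R$ really is a group of Hermitian involutions — which is exactly where the restriction to global phases $\pm 1$ enters — and observing in Step 2 that $\tr P_R$ is a \emph{positive integer}, without which the divisibility argument of Step 3 would not go through.
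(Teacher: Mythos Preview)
Your proof is correct and is the standard argument. Note, however, that the paper does not actually prove this lemma: it is stated as a well-known result with a citation to Hamma, Ionicioiu, and Zanardi, so there is no in-paper proof to compare against. Your Steps~1--3 are exactly the usual route (projector structure of $\rho_R$, flat spectrum, divisibility/elementary-abelian-$2$-group argument for integrality), and the care you flag about phases $\pm 1$ and positivity of $\tr P_R$ is appropriate.
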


\section{Canonical forms for MPSs}
\label{sec:canonical forms}
We briefly review the canonical forms for TI MPS and the renormalization group (RG) fixed points introduced in \cite{verstraete2005renormalization}, following the presentation in \citep{cirac2017matrix}.

A tensor $A^i_{\alpha,\beta}$, $i=1,\dots d$, $\alpha,\beta = 1,\ldots ,\chi$, generates a familiy $\mathcal{V}(A)=\{\ket{v^{(N)}(A)}\}_{N \in \mathbb{N}}$ of translation invariant MPSs
\be
\ket{v^{(N)}(A)} =\sum_{i_1, \ldots, i_N=1}^d \operatorname{tr}\left(A^{i_1} \ldots A^{i_N}\right)\left|i_1\right\rangle \otimes \ldots \otimes\left|i_N\right\rangle \in \CC_d^{\otimes N}\,.
\ee
Here we view the $A^i$ as matrices. Note that we do not consider normalized MPS. We will make use of a standard graphical notation and identify $A^{i}_{\alpha,\beta}=
\begin{array}{c}
    \begin{tikzpicture}[scale=.4, baseline={([yshift=-5.5ex]current bounding box.center)}, thick]
    	\ATensor{0,0}{$A$}
    	\draw (0,1.35) node {\scriptsize $i$};
    	\draw (1.35,0) node {\scriptsize $\beta$};
    	\draw (-1.35,0) node {\scriptsize $\alpha$};
    \end{tikzpicture}
\end{array}$, where each leg corresponds to an index. The associated completely positive map (CPM) is
\be \label{eq:cpm}
\mathcal{E}_{A}(X)=\sum_{i=1}^d A^i X A^{i \dagger} = 
    \begin{array}{c}
		\begin{tikzpicture}[scale=.5,thick,baseline={([yshift=1ex]current bounding box.center)}]
			\ATensor{0,0}{$A$}
			\ADaggTensor{0,1.7}{$A^*$}
            \draw[shift={(2, 0)}] (-1, 1.7) -- (-0.5, 1.7) -- (-0.5, 0) -- (-1, 0);
			\filldraw[color=black, fill=white, thick](1.5, 0.85) circle (0.5);
			\draw (1.5, 0.85  ) node {\scriptsize X};
		\end{tikzpicture}
	\end{array}
\ee
The transfer matrix  $\mathbb{E}_A$ is the matrix representation of the CPM Eq.\eqref{eq:cpm}:
\be
	\mathbb{E}_A = \sum_{i=1}^d A^{i} \otimes A^{\ast i}
	=
  	\begin{array}{c}
		\begin{tikzpicture}[scale=.5,thick,baseline={([yshift=1ex]current bounding box.center)}]
			\ATensor{0,0}{$A$}
			\ADaggTensor{0,1.7}{$A^*$}
		\end{tikzpicture}
	\end{array},\quad 
\quad \mathbb{E}_A\ket{X} = \ket{\mc{E}_A(X)}.
\ee
The spectrum of $\mathbb{E}_A$ is identical to that of $\mc{E}_A$ and
the transfer matrix $\mathbb{E}_A$ determines the tensor $A$ up to a unitary acting on the physical index, which clearly leaves $\mathbb{E}_A$ invariant. The norm of the MPS is 
\be
\braket{v^{N}(A)|v^{N}(A)} = \tr [\mathbb{E}_A^N].
\ee

Given an MPS defined by the tensor $A^i$, one can construct a new tensor by blocking $q$ sites, such that the corresponding matrices are obtained by the product of original ones, $B^i=A^{i_1}\cdots A^{i_q}$, where $i$ denotes all possible sets of indices $\{i_1,\ldots i_q\}$. Graphically:
\begin{equation}\label{eq:blocking}
  	\begin{array}{c}
		\begin{tikzpicture}[scale=.45,thick,baseline={([yshift=-3ex]current bounding box.center)}]
			\BTensor{0,0}{$B$}
		\end{tikzpicture}
	\end{array}
	= 
  	\begin{array}{c}
	\begin{tikzpicture}[scale=.45, baseline={([yshift=5.5ex]current bounding box.center)}, thick]
		\draw[shift={(0,0)},dotted] (0,0) -- (4,0);
		\ATensor{0,0}{$A$}
		\ATensor{4,0}{$A$}
		\draw [decorate,
    	decoration = {calligraphic brace,mirror}] (0,-0.8) --  (4,-0.8);
		\draw (2,-1.5) node {\scriptsize $q$};
	\end{tikzpicture}
	\end{array}
\end{equation}
The new tensor $B$ has physical dimension $d^q$, the same bond dimension $\chi$ and transfer matrix $\mathbb{E}_B=\mathbb{E}_A^q$.

In general, the map $A \to \mc{V}(A)$ is not injective and different tensors $A,B$ may generate states which are proportional to each other. An example is $B^i = e^{i\phi} X A^i X^{-1}$, where the phase leads to a global phase of the MPS. Another source of redundancies are block upper matrices
\be
A^i=\left(\begin{array}{cc}
A_1^i & A_o^i \\
0 & A_2^i
\end{array}\right),
\ee
which for any choice of $A_o^i$ generate the same MPS. This leads to the following canonical form (CF), 

\begin{defn}\label{defn:CF}
We say that a tensor, $A$, is in a canonical form (CF), if
\be
    A^i=\oplus_{k=1}^r \mu_k A_k^i,
    \ee
where each $A_k$ is a normal tensor and the $\mu_k$ are complex numbers. A tensor $A_k$ is normal if (i) it is irreducible, i.e.~ if $A^i_k$ have no non-trivial invariant subspace, and (ii) it is such that the associated CPM 
\be
\mathcal{E}_{A_k}(X)=\sum_{i=1}^d A_k^i X A_k^{i \dagger}
\ee
has a unique eigenvalue of magnitude and value equal to its spectral radius. The $\mu_k$ are chosen such that the spectral radius equals one. Without loss of generality $|\mu_k|\leq 1$ and at least one of them is equal to one. (This ensures that $\ket{v^{(N)}(A)}$ has bounded norm for $N\to \infty$.)
\end{defn} 
After blocking, for any tensor there exists a tensor in CF generating the same MPS \cite[Proposition 2.4]{cirac2017matrix}. 
 The states generated by $A$ in CF are
\be
\ket{v^{(N)}(A)}=\sum_{k=1}^r \mu_k^N\ket{v^{(N)}(A_k)}.
\ee
Note that different blocks $A_r$ can generate the same states, as we will discuss later.

Next, let us discuss the MPS renormalization group procedure. A single RG step is a map $A \to A'$ defined through the blocking of two sites followed by a projection onto a new effective site:
\begin{align}
\sum_\gamma A^{i_1}_{\alpha \gamma} A^{i_2}_{\gamma \beta} &= \sum_{j=1}^{d'} V^{i_1\, i_2}_{j} A'^j_{\alpha \beta}, \quad d' = \rm{rank}(A^2:\CC^{\chi^2}  \to \CC^{d^2} ) \leq \chi^2, \\
\mathbb{E}_A^2 &= \mathbb{E}_{A'}.
\end{align}
Here $V:\CC^{d'} \to \CC^d \otimes \CC^d$ is an isometry $V^\dagger V = \Id$. $V^\dagger$ maps the two spins of dimension $d$ into an effective spin of dimension $d'$, which is the rank of $A^2$ thought of as a map from the virtual to physical space, or equivalently the rank of $\mathbb{E}^2_A$ as a map from top to bottom. $VV^\dagger$ is a projector $P$ onto the image of $A^2$, $PA^2 = VA' = A^2$. While the projector $P$ is unique, the factorization $A^2 = VA'$ is not. The redundancy can be lifted by requiring $A'$ to be positive, in which case $VA'$ is the polar decomposition of $A^2$. 

The RG step is exact in the sense that we can transform e.g.~ two-site observables $O'=V^\dagger O V $ and $\braket{O'}_{A'} = \braket{O}_{A}$.
Two subsequent RG steps $\mathbb{E}_A^2 = \mathbb{E}_{A'}, \mathbb{E}_{A'}^2 = \mathbb{E}_{A''}$ are equivalent to a single step blocking four sites $\mathbb{E}_A^4 = \mathbb{E}_{A''} $. Indeed, it can be easily seen that $\mathbb{E}_A^4$ and $\mathbb{E}_{A''}$ have the same spectrum. 

For a normal tensor, $A$ the spectral radius of the transfer matrix equals one and the associated eigenvector is unique
\begin{equation}
	\mathbb{E}_A = |R)(L| + S=
  	\begin{array}{c}
		\begin{tikzpicture}[scale=.4,thick,baseline={([yshift=1ex]current bounding box.center)}]
			\draw[shift={(-0.3, 0)}] (-1, 1) -- (-0.5, 1) -- (-0.5, -1) -- (-1, -1);
			\draw[shift={(0, 0)}] (+1, 1) -- (+0.5, 1) -- (+0.5, -1) -- (+1, -1);
			\filldraw[color=black, fill=white, thick](-0.8, 0) circle (0.5);
			\draw (-0.8, 0) node {\scriptsize R};
            \filldraw[color=black, fill=white, thick](0.4, 0) circle (0.5);
			\draw (0.4, 0) node {\scriptsize L};
		\end{tikzpicture}
    \end{array}
    + S
	,
	\label{eq:Ek_decomp}
\end{equation}
where $S$ has spectral radius strictly smaller then one and  $|R,L)$ are the right and left singular vectors of $\mathbb{E}_A$, corresponding to the subspace of eigenvalue one,
which satisfy $(L|R) =1$. The transfer matrix of a RG fixed point satisfies 
\be
\lim_{n \to \infty}\mathbb{E}_A^n = |R)(L| =
  	\begin{array}{c}
		\begin{tikzpicture}[scale=.4,thick,baseline={([yshift=1ex]current bounding box.center)}]
			\draw[shift={(-0.3, 0)}] (-1, 1) -- (-0.5, 1) -- (-0.5, -1) -- (-1, -1);
			\draw[shift={(0, 0)}] (+1, 1) -- (+0.5, 1) -- (+0.5, -1) -- (+1, -1);
			\filldraw[color=black, fill=white, thick](-0.8, 0) circle (0.5);
			\draw (-0.8, 0) node {\scriptsize R};
            \filldraw[color=black, fill=white, thick](0.4, 0) circle (0.5);
			\draw (0.4, 0) node {\scriptsize L};
		\end{tikzpicture}
    \end{array}.
\ee

Using a gauge transformation $A \to X^{-1} A X$ we can achieve 
\begin{equation}
\left| L \right)  =  \sum_{\alpha} \left |\alpha, \alpha\right) ,\quad 
\left| R \right) =  \sum_\alpha \Lambda_{\alpha,\alpha} \left |\alpha, \alpha\right),
\end{equation}
where $\Lambda >0$ is diagonal with trace one. This is called CF II in \citep{cirac2017matrix}.  

The RG fixed point can also be written in terms of projected entangled pair states 
\be
\ket{w}_{R_i L_{i+1}} = \left(\sqrt{\Lambda} \otimes \Id \right)  \sum_{\alpha=1}^{\chi}\ket{\alpha \alpha}_{R_i L_{i+1}},\quad \ket{v^{(N)}(A)} = (\otimes_i V_i) \otimes_{i=1}^N \ket{w}_{R_i L_{i+1}}
\ee
where $V_i$ acts on the sites $L_i,R_i$ (denoting the left and right qudits at each site of the fixed-point lattice). Note that the RG fixed points have zero correlation length $\xi$, where $1/\xi = -\log(|\lambda_2|)$ and $\lambda_2$ is the second-largest eigenvalue of $\mathbb{E}_A$.

We now describe the RG for tensors $\tilde{A}$ in CF with more then one block, $\tilde{A} = \oplus_{k=1}^r \mu_k \tilde{A}_k^i$. Obviously, only the blocks with $|\mu_k| =1$ flow to a non-zero tensor during the RG process. Therefore, we assume for simplicity of notation that $|\mu_k| =1$ for all $k$, i.e.~ compared to \eqref{eq:Ek_decomp} we have
\begin{align}
\mathbb{E}_{\tilde{A}} &= \sum_{k=1}^r |R_k)(L_k| + S = \begin{array}{c}
		\begin{tikzpicture}[scale=.4,thick,baseline={([yshift=1ex]current bounding box.center)}]
            \draw (-0.6,0) -- (0.2,0);
			\draw[shift={(-0.3, 0)}] (-1, 1) -- (-0.5, 1) -- (-0.5, -1) -- (-1, -1);
			\draw[shift={(0.2, 0)}] (+1, 1) -- (+0.5, 1) -- (+0.5, -1) -- (+1, -1);
			\filldraw[color=black, fill=white, thick](-0.8, 0) circle (0.5);
			\draw (-0.8, 0) node {\scriptsize R};
            \filldraw[color=black, fill=white, thick](0.6, 0) circle (0.5);
			\draw (0.6, 0) node {\scriptsize L};
		\end{tikzpicture}
    \end{array} + S, \quad (R_m|L_n) = \delta_{mn}\\
    \lim_{n \to \infty} \mathbb{E}^n_{\tilde{A}} &= \sum_{k=1}^r |R_k)(L_k| = \begin{array}{c}
		\begin{tikzpicture}[scale=.4,thick,baseline={([yshift=1ex]current bounding box.center)}]
            \draw (-0.6,0) -- (0.2,0);
			\draw[shift={(-0.3, 0)}] (-1, 1) -- (-0.5, 1) -- (-0.5, -1) -- (-1, -1);
			\draw[shift={(0.2, 0)}] (+1, 1) -- (+0.5, 1) -- (+0.5, -1) -- (+1, -1);
			\filldraw[color=black, fill=white, thick](-0.8, 0) circle (0.5);
			\draw (-0.8, 0) node {\scriptsize R};
            \filldraw[color=black, fill=white, thick](0.6, 0) circle (0.5);
			\draw (0.6, 0) node {\scriptsize L};
		\end{tikzpicture}
    \end{array} 
\end{align}

Consider two normal tensors $\tilde{A}$ and $\tilde{B}$ and their RG fixed points $A,B$. Then,
\be
|\braket{v^{(N)}(B) | v^{(N)}(A)}|^2 = 0,1
\ee
In the first case,  $A$ and $B$ (and the corresponding MPS) are locally orthogonal (LO)
\be\label{eq:local_ortho}
\sum_{i} A^i \otimes B^{* i} = 0
\ee
In the latter case, $A$ and $B$ are related through a gauge transformation
\be\label{eq:gauge_trafo}
A = e^{i \phi} X B X^{-1} \Rightarrow \ket{v^{(N)}(A)} = e^{i N \phi} \ket{v^{(N)}(B)}
\ee

If the RG fixed points $A_k$ of the $\tilde{A}_k$ are all pairwise locally orthogonal, then the RG fixed point $A$ of $\tilde{A}$ is simply
\be\label{eq:rfp_lo}
A = \bigoplus_{k=1}^r \mu_k A_k^i, \quad
\ket{v^{(N)}(A)} = \bigoplus_{k=1}^r  \mu_k^N \ket{v^{(N)}(A_k)}.
\ee
Consider now the case where some of the $A_k$ are not locally orthogonal and related through a gauge transformation as in Eq.~\eqref{eq:gauge_trafo}. Then, Eq.~\eqref{eq:rfp_lo} can be reduced to
\begin{align}
 \ket{v^{(N)}(A)} & = \bigoplus_{k=1}^g  \left(\sum_{j_k} e^{i \phi_{j_k} N} \right)\ket{v^{(N)}(A_k)} \\
&=\bigoplus_{k=1}^g  \alpha^{(N)}_k \ket{v^{(N)}(A_k)}
 , \quad g < r.
\end{align}
Here we collected the proportional MPSs and combined the phases from the $\mu_k$ and the gauge transformation Eq.~\eqref{eq:gauge_trafo}. With some abuse of notation, the $A_k$ left in the decomposition are a basis of normal tensors. Finally, note that the phases $\phi_{j_k}$ and thus weights $\alpha_k$ can be obtained from the spectral decomposition of the transfer matrix.

As an example, consider an MPS of bond dimension $\chi=3$ generated by $A$ in CF whose non-zero elements are 
\be 
A^0 = \begin{pmatrix}
    1 & 0 & 0 \\
    0 & 0 & 0 \\
    0 & 0 & 0 \\
\end{pmatrix}, A^1 = \begin{pmatrix}
    0 & 0 & 0 \\
    0 & e^{i\phi} & 0 \\
    0 & 0 & e^{-i\phi} \\
\end{pmatrix}.
\ee
This generates the states
\be
\ket{v^{(N)}(A)} = \ket{0}^{\otimes N} + e^{i\phi N} \ket{1}^{\otimes N} + e^{-i\phi N} \ket{1}^{\otimes N}= \ket{0}^{\otimes N}+2\cos(\phi N)\ket{1}^{\otimes N}\,.
\ee


\section{The Main Theorem}\label{sec:main_theorem}
In this section we prove our main result, namely Theorem~\ref{th:main_result} in the main text. We begin with a few preliminary lemmas.

\begin{lem}
\label{lemma:mut_info_stab}
    Consider a pure stabilizer state $\ket{S}$ and a tripartition $\Lambda = A \cup C \cup B$. Then the mutual information of $\rho_{AB} = \mathrm{Tr}_C \ket{S} \bra{S}$ is $I_{A,B}[S] =  S(\rho_A) + S(\rho_B) - S(\rho_{AB})=r $ for some positive integer $r$, where $S(\rho_R) = -\tr{\rho_R \log_2(\rho_R})$.
\end{lem}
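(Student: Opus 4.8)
The plan is to derive the claim as an immediate corollary of Lemma~\ref{lemma:stab_entanglement_spectrum}. First I would note that each of the three density matrices entering the mutual information is a reduced density matrix of the pure stabilizer state $\ket{S}$: indeed $\rho_A = \tr_{B\cup C}\ket{S}\bra{S}$, $\rho_B = \tr_{A\cup C}\ket{S}\bra{S}$, and $\rho_{AB} = \tr_{C}\ket{S}\bra{S}$ are all obtained from $\ket{S}\bra{S}$ by a partial trace. Hence each takes the form~\eqref{eq:reduced_density_matrix}, with an associated abelian Pauli subgroup $H_A\subseteq\mathcal{P}(A)$, $H_B\subseteq\mathcal{P}(B)$, and $H_{AB}\subseteq\mathcal{P}(A\cup B)$, whose orders are powers of two.

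Applying Lemma~\ref{lemma:stab_entanglement_spectrum} to each region gives $S(\rho_A)=|A|-\log_2|H_A|$, $S(\rho_B)=|B|-\log_2|H_B|$, and $S(\rho_{AB})=|A\cup B|-\log_2|H_{AB}| = |A|+|B|-\log_2|H_{AB}|$, the last equality using that $A$ and $B$ are disjoint. Substituting into $I_{A,B}[S]=S(\rho_A)+S(\rho_B)-S(\rho_{AB})$, the region-size terms cancel and one is left with $I_{A,B}[S]=\log_2\big(|H_{AB}|/(|H_A|\,|H_B|)\big)$, which is an integer because every $|H|$ is a power of two.

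It remains only to check non-negativity, i.e.\ that $r\ge 0$ (with $r=0$ precisely when $A$ and $B$ are uncorrelated). This follows from subadditivity of the von Neumann entropy; alternatively, and more in the spirit of the stabilizer formalism, it follows from the observation that if $h_A\otimes\openone_{BC}$ and $h_B\otimes\openone_{AC}$ lie in $\mathrm{Stab}(\ket{S})$ then so does their product $h_A\otimes h_B\otimes\openone_C$, so that $H_A\times H_B$ embeds into $H_{AB}$ and $|H_A|\,|H_B|$ divides $|H_{AB}|$. I do not expect any genuine obstacle here: the only point worth spelling out is that $\rho_{AB}$, although supported on a composite region, is still a stabilizer reduced density matrix and hence covered by Eq.~\eqref{eq:reduced_density_matrix} and Lemma~\ref{lemma:stab_entanglement_spectrum}.
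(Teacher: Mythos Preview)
Your proposal is correct and takes essentially the same approach as the paper, which simply says ``This follows immediately from Lemma~\ref{lemma:stab_entanglement_spectrum}.'' You have merely spelled out the cancellation of the region-size terms and added the non-negativity check (which the paper omits, tacitly reading ``positive integer'' as ``non-negative integer'').
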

\begin{proof}
    This follows immediately from Lemma \ref{lemma:stab_entanglement_spectrum}
\end{proof}

\begin{figure}[!htb]
    \centering
    \includegraphics[width=1\linewidth]{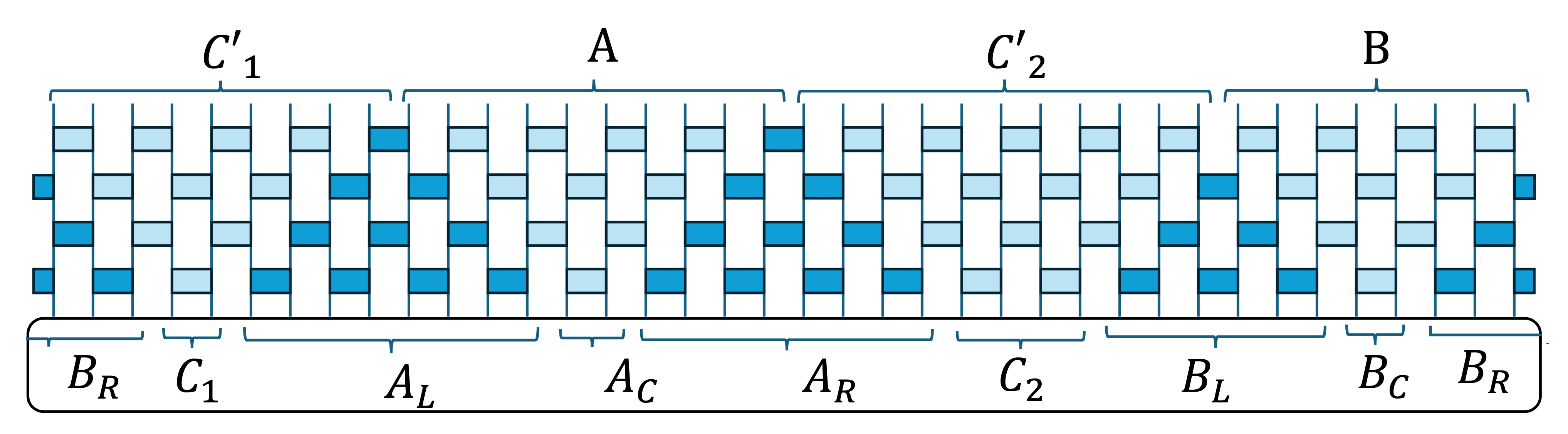}
    \caption{Graphical proof of Lemma~\ref{lemma:main_lemma_tripartion}. The figure shows a QC of depth $D=4$. For each output region, we identify a backward causal cone. For $C_1$ and $C_2$, the gates are removed by taking the trace, while for $A$ and $B$ the gates can be removed by the unitaries $U_A$ and $U_B$. The size of the causal cones depends on whether the boundaries fall in between two unitaries or not. The two cases are displayed for $A/B$. Region $A$ covers $2D+2$, whereas region $B$ covers $2D$ sites. Note the periodic boundary conditions.}
    \label{fig:tripartition}
\end{figure}

\begin{lem}\label{lemma:main_lemma_tripartion}
    Let $\ket{\psi
    }\in \HH_N$,
    $Q_D$ a shallow QC of depth $D$ and $\rho = Q_D \ket{\psi} \bra{\psi} Q^\dagger_D$. Let $A\cup C'_1 \cup C'_2 \cup B = \Lambda$ a partition as shown in Fig.~\ref{fig:tripartition},
    such that each region is of size $|R|\geq 2D+2$.
    Then there exists local unitaries $U_{A/B}$ and a parition of $\Lambda$ into disjoints regions $C_{1/2},A_{L,C,R},B_{L,C,R}$, such that
    \be \label{eq:sigma_ab}
    \sigma_{AB} = U_A \otimes U_B (\tr_{C'
}\rho ) U_A^\dagger \otimes U_B^\dagger = \left( \Id_{A_C} \otimes \mc{E}_{A_L} \otimes \mc{E}_{A_R} \otimes \Id_{B_C} \otimes \mc{E}_{B_L} \otimes \mc{E}_{B_R}
    \otimes Tr_{C} \right)\rho
    \ee 
    where $C'=C'_1\cup C'_2$, $C=C_1\cup C_2$ and the $\mc{E}$ are quantum channels, cf. Fig.~\ref{fig:quantum_channel},
    \be
    \mc{E}_{A_L}: M(A_L) \to M(A\cap  A_L) 
    \ee
    where $M(R)$ denotes the set of density matrices with support in $R$, and analogously for the other regions.
\end{lem}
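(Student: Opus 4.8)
The statement is a causal-cone (light-cone) bookkeeping result for shallow circuits, and I would prove it by re-organizing the gates of $Q_D$ around the four region boundaries.

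\emph{Step 1: a seam/bulk decomposition of $Q_D$.} Write $\sigma_1,\dots,\sigma_4$ for the four boundaries between consecutive regions of $A,C_1',C_2',B$. Since every region has at least $2D+2>2D$ sites, consecutive boundaries are more than $2D$ apart. In a depth-$D$ nearest-neighbour circuit the forward causal cone of a single gate is an interval of width $\le 2D$, so it lies entirely inside one region (call the gate a \emph{bulk} gate) or crosses exactly one $\sigma_i$ (a \emph{seam} gate). The key combinatorial point is that a bulk gate whose support overlaps that of a seam gate must sit in a strictly later layer: otherwise its forward cone would contain the seam gate's forward cone and hence would itself cross the seam, a contradiction. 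Consequently the gates can be sorted, exchanging only gates with disjoint support, into
\[
Q_D = V_{\mathrm{bulk}}\,V_{\mathrm{seam}},\qquad V_{\mathrm{bulk}}=V_A\otimes V_{C_1'}\otimes V_{C_2'}\otimes V_B,\qquad V_{\mathrm{seam}}=V_{\sigma_1}V_{\sigma_2}V_{\sigma_3}V_{\sigma_4},
\]
where $V_R$ is supported on region $R$ and $V_{\sigma_i}$ on the width-$2D$ neighbourhood $N(\sigma_i)$ of $\sigma_i$ ($D$ sites on either side); these neighbourhoods are pairwise disjoint, which is exactly where the hypothesis $|R|\ge 2D+2$ is used.

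\emph{Step 2: regions and unitaries.} Set $A_L:=N(\sigma_i)$ for the $\sigma_i$ bordering $A$ on the left, $A_R:=N(\sigma_j)$ for the one on the right, $A_C:=A\setminus(A_L\cup A_R)$ (possibly empty), and likewise $B_L,B_R,B_C$; let $C_k$ be the part of $C_k'$ outside every $N(\sigma_i)$, which is nonempty since $|C_k'|\ge 2D+2$. These eight sets partition $\Lambda$, and $C'=C\cup(A_L\setminus A)\cup(A_R\setminus A)\cup(B_L\setminus B)\cup(B_R\setminus B)$. Take $U_A:=V_A^\dagger$ and $U_B:=V_B^\dagger$, unitaries supported on $A$ and $B$ respectively.

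\emph{Step 3: trace out $C'$ and read off the channels.} Because $V_{C_1'},V_{C_2'}$ are supported inside $C'$ they are annihilated by $\mathrm{Tr}_{C'}$, while $V_A,V_B$ are disjoint from $C'$ so conjugation by them commutes with $\mathrm{Tr}_{C'}$; hence $U_A\otimes U_B(\mathrm{Tr}_{C'}\rho)U_A^\dagger\otimes U_B^\dagger=\mathrm{Tr}_{C'}(V_{\mathrm{seam}}\,|\psi\rangle\langle\psi|\,V_{\mathrm{seam}}^\dagger)$. Now $V_{\mathrm{seam}}$ acts as the identity on $A_C,B_C,C_1,C_2$, so $\mathrm{Tr}_{C_1\cup C_2}$ commutes through it, and the remaining partial trace over $C'\setminus C=\bigcup_i (N(\sigma_i)\setminus(A\cup B))$ factorizes over the four disjoint neighbourhoods. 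Tracing out the $C'$-portion of $N(\sigma_i)$ turns conjugation by $V_{\sigma_i}$ into a quantum channel $\mathcal{E}\colon M(N(\sigma_i))\to M(N(\sigma_i)\cap(A\cup B))$; labelling these $\mathcal{E}_{A_L},\mathcal{E}_{A_R},\mathcal{E}_{B_L},\mathcal{E}_{B_R}$ and keeping the identity on the bulks $A_C,B_C$ and the full trace on $C=C_1\cup C_2$ gives precisely the announced product form (the factor $\rho$ being recovered by re-expressing $|\psi\rangle\langle\psi|$ through $Q_D^{-1}=V_{\mathrm{seam}}^\dagger V_{\mathrm{bulk}}^\dagger$, whose seam part is re-absorbed into the $\mathcal{E}$'s while its bulk part reappears as $U_A,U_B$ and, on $C'$, is killed by the trace).

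\emph{Main obstacle.} The only genuinely delicate step is Step 1: verifying that the reorganization into $V_{\mathrm{bulk}}V_{\mathrm{seam}}$ exchanges only commuting gates, and pinning down the supports precisely — the $\pm D$ spread of the cones, and the distinction (appearing in the figure as the ``$2D$'' versus ``$2D+2$'' cases) between a region boundary falling between two gates or through one. Everything after that is partial-trace bookkeeping; the only further thing to check is that $A_C,B_C$ are indeed disjoint from their seam neighbourhoods, which once more uses $|A|,|B|\ge 2D+2$.
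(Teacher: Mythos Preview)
Your argument is correct and is precisely an algebraic spelling-out of the paper's graphical proof: the paper simply points to the figure and says that bulk gates in $C'$ cancel under the trace, bulk gates in $A,B$ are undone by $U_{A/B}$, and the boundary gates become the channels --- exactly your seam/bulk decomposition via forward light cones. One minor point: the right-hand side of the displayed identity is meant to act on $|\psi\rangle\langle\psi|$ rather than on $\rho$ (this is how the paper itself applies the lemma in the proof of the next lemma, and it is what your Step~3 actually derives); your closing parenthetical attempting to ``recover $\rho$'' is therefore unnecessary and does not quite go through --- just drop it.
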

\begin{proof}
    The proof can be immediately carried out graphically, cf. Fig.~\ref{fig:tripartition}, taking into account the light-cone structure of brickwork quantum circuits. In particular, in the bulk of the regions $C'_1$ and $C'_2$ the unitaries cancel inside the trace. In the bulk of $A$ and $B$ we can choose $U_{A/B}$ to undo the unitaries of the QC. On the boundaries the combination of the circuit and trace results in a quantum channel, cf. Fig.~\ref{fig:quantum_channel}. 
\end{proof}
\begin{figure}
    \centering
    \includegraphics[width=0.25\linewidth]{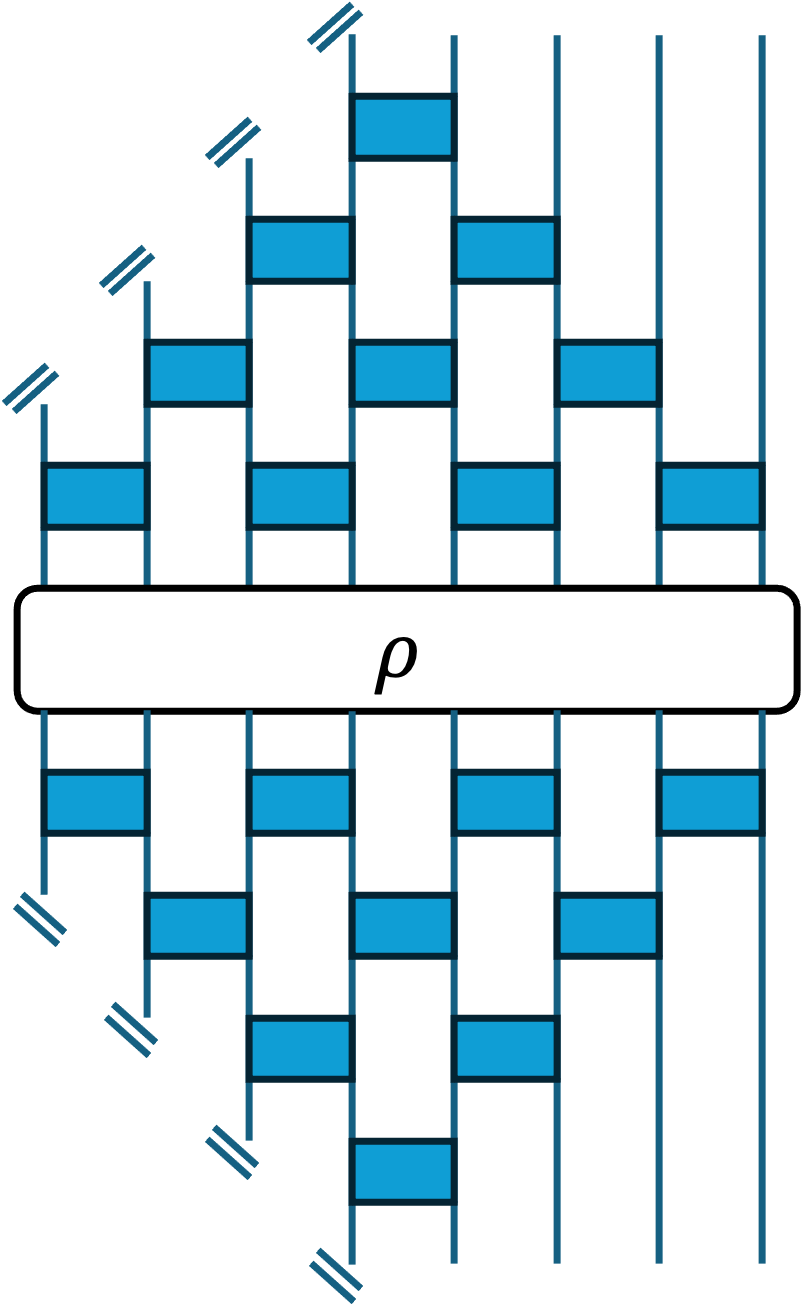}
    \caption{Graphical representation of the quantum channel $\mathcal{E}_A$ appearing in ~\eqref{eq:sigma_ab}. Here we use a tensor-network notation where lower and upper legs correspond to input and output degrees of freedom, respectively. The double lines in the upper and lower left legs denote that they are contracted, \emph{i.e.} the trace is taken over the corresponding qubits.}
    \label{fig:quantum_channel}
\end{figure}
\begin{lem}\label{lemma:mut_info_rfp}
Let \be 
\ket{v^{(N)}(A)} = \dfrac{1}{c_N}\bigoplus_{k=1}^g  \alpha^{(N)}_k \ket{v^{(N)}(A_k)}\,,
\ee
be the RG fixed point of a  translation invariant MPS, where $c_N$ is such that the states are normalized. 
Consider a shallow QC $Q_{D_N}$ [with $D_N=O(\operatorname{polylog}(N)$)] and a tripartition $\Lambda = A \cup C' \cup B = A\cup C'_1 \cup C'_2 \cup B$ such as in Lemma \ref{lemma:main_lemma_tripartion}. Let $\ket{\phi_N} = Q_{D_N} \ket{v^{(N)}(A)}$ Then, the mutual information is
\be
I_{A,B}[\phi_N] = H(\{p_k\}) = -\sum p_k \log_2(p_k)\,,
\ee
where 
\begin{equation}
    p_k = \left |\frac{\alpha_k^{(N)}}{c_N}\right|^2
\end{equation} while $H(\{p_k\})$ denotes the Shannon entropy. In particular, the action of the QC has not changed the mutual information, i.e.~$I_{A,B}[\phi_N] =I_{A,B}[v^{(N)}(A)]$.
\end{lem}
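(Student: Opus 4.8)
The plan is to compute the two-interval mutual information directly from the explicit projected-entangled-pair-state form of the RG fixed point and to show that the shallow circuit cannot change it because it leaves a protected bulk site inside each of $A$ and $B$ untouched. The conceptual core is that $\mathrm{Tr}_{C'}\ket{v^{(N)}(A)}\bra{v^{(N)}(A)}$ is \emph{block diagonal}, with blocks labelled by the canonical-form index $k$ and occurring with weights $p_k=|\alpha_k^{(N)}/c_N|^2$, and moreover \emph{product} across $A$ and $B$ within each block; the Shannon entropy of $\{p_k\}$ then survives in $S(\rho_A)$, $S(\rho_B)$ and $S(\rho_{AB})$ in exactly the pattern $H+H-H=H$.

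First I would treat the bare fixed point. Writing each normal block in the bond/PEPS representation as $\ket{v^{(N)}(A_k)}=(\otimes_i V_i)\bigotimes_b\ket{\omega_k}_b$ with $\ket{\omega_k}_b=\sum_m\sqrt{\lambda^{(k)}_m}\ket{m_k}\ket{m_k}$, the local orthogonality of the normal tensors, Eq.~\eqref{eq:local_ortho}, gives $\braket{\omega_k|\omega_{k'}}=\delta_{k k'}$, so tracing over $C'=C'_1\cup C'_2$ annihilates every cross term $k\neq k'$, since each $C'_i$ contains at least one complete internal bond (guaranteed by $|C'_i|=O(N)$), over which $\mathrm{Tr}(\ket{\omega_k}\bra{\omega_{k'}})=\delta_{k k'}$. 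This yields $\mathrm{Tr}_{C'}\ket{v^{(N)}(A)}\bra{v^{(N)}(A)}=\sum_k p_k\,\rho_A^{(k)}\otimes\rho_B^{(k)}$, the product factorization holding because no bond connects $A$ to $B$ (they are separated by $C'_1$ and $C'_2$, whose bonds are entirely traced). I would then note that on any single physical site $i\in A$ the reduced state of the $k$-th block is supported in the subspace $W_k^{(i)}=V_i\,(\mathrm{span}\{\ket{m_k}\}\otimes\mathrm{span}\{\ket{m_k}\})$, and that local orthogonality forces $W_k^{(i)}\perp W_{k'}^{(i)}$ for $k\neq k'$; hence the $\rho_A^{(k)}$, and likewise the $\rho_B^{(k)}$, have mutually orthogonal supports detectable by a \emph{single-site} projector. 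The bookkeeping $S(\sum_k p_k\rho^{(k)})=H(\{p_k\})+\sum_k p_k S(\rho^{(k)})$ for orthogonally supported blocks, together with $S(\rho_A^{(k)}\otimes\rho_B^{(k)})=S(\rho_A^{(k)})+S(\rho_B^{(k)})$, then gives $I_{A,B}[v^{(N)}(A)]=H(\{p_k\})$.

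Next I would handle a generic shallow $Q_{D_N}$ by invoking Lemma~\ref{lemma:main_lemma_tripartion}: up to local unitaries $U_A\otimes U_B$ (which leave $I_{A,B}$ invariant), $\rho_{AB}[\phi_N]$ equals $\mathrm{Tr}_C\ket{v^{(N)}(A)}\bra{v^{(N)}(A)}$ followed by channels $\mathcal{F}_{\hat A},\mathcal{F}_{\hat B}$ supported near the boundaries of $A$ and $B$ that act trivially on the bulk regions $A_C$, $B_C$, which are nonempty precisely because $|A|=|B|\geq 2D_N+2$, cf. Eq.~\eqref{eq:size_AB}. Re-running the previous paragraph with $\mathrm{Tr}_C$ in place of $\mathrm{Tr}_{C'}$ (still legitimate, as $C_1,C_2$ each contain an internal bond) gives $\mathrm{Tr}_C\ket{v^{(N)}(A)}\bra{v^{(N)}(A)}=\sum_k p_k\,\xi_{\hat A}^{(k)}\otimes\xi_{\hat B}^{(k)}$, still block diagonal with respect to a single-site projector sitting in $A_C$ (resp.\ $B_C$). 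Since those projectors are untouched by $\mathcal{F}_{\hat A}\otimes\mathcal{F}_{\hat B}$, the channels commute with the block decomposition and produce $\sigma_{AB}=U_A\otimes U_B\,\rho_{AB}[\phi_N]\,U_A^\dagger\otimes U_B^\dagger=\sum_k p_k\,\sigma_A^{(k)}\otimes\sigma_B^{(k)}$ with $\{\sigma_A^{(k)}\}_k$ and $\{\sigma_B^{(k)}\}_k$ still mutually orthogonal and still product within each block. The same entropy bookkeeping then gives $I_{A,B}[\phi_N]=I_{A,B}[\sigma_{AB}]=H(\{p_k\})$, and since the $D_N=0$ case reproduces $I_{A,B}[v^{(N)}(A)]=H(\{p_k\})$, the two coincide.

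I expect the main obstacle to be this second stage: making rigorous that the brickwork causal-cone structure of Lemma~\ref{lemma:main_lemma_tripartion} really leaves a protected bulk site inside each of $A$ and $B$ on which the circuit acted as the identity — which is exactly what the size window~\eqref{eq:size_AB} is designed to guarantee — and that the canonical-form block label of an RG fixed point is genuinely readable by a single-site projector, where the local orthogonality of the normal blocks is indispensable. The remaining steps (killing cross terms via a full internal bond, the $A$–$B$ product factorization, and the additivity of the von Neumann entropy on orthogonally supported blocks) are routine.
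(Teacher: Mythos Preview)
Your proposal is correct and follows essentially the same route as the paper: both use Lemma~\ref{lemma:main_lemma_tripartion} to reduce $\rho_{AB}[\phi_N]$ (up to local unitaries) to boundary channels acting on $\mathrm{Tr}_C\ket{v^{(N)}(A)}\bra{v^{(N)}(A)}$, invoke local orthogonality to kill the cross terms and the zero-correlation-length/PEPS structure to factorize each block across $A$ and $B$, and then observe that block orthogonality survives the channels because they act as the identity on $A_C,B_C$. Your explicit phrasing in terms of a single-site projector living in $A_C$ (resp.\ $B_C$) is exactly the mechanism the paper uses when it writes $\sigma_A^k\sigma_A^l\propto\delta_{kl}$ and notes that this holds because $|A|,|B|\geq 2D+2$.
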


\begin{proof}
    To ease notation, we assume $c_N = 1$ and drop all the dependencies on $N$ in the formulas.
    Using Lemma \ref{lemma:main_lemma_tripartion} and the invariance of $I_{A,B}$ under local unitaries acting on $A/B$ , the mutual information of $\rho_{AB}$ equals the mutual information of $\sigma_{AB}$ given in Eq. \eqref{eq:sigma_ab}. To apply the formula for $\sigma_{AB}$, we need to calculate $\tr_C \ket{v(A)}\bra{v(A)}$. Since $\ket{v(A)}$ is a sum of locally orthogonal terms, see Eq. \eqref{eq:local_ortho}, we have 
    \be
    \tr_C \ket{v(A)}\bra{v(A)} = \sum_k |\alpha_k|^2 \tr_C \ket{v(A_k)}\bra{v(A_k)} =: \sum_k |\alpha_k|^2 \, \rho^k_ {\bar{C}}\, ,
    \ee
     where we use the notation $\tr_R \ket{v(A)} \bra{v(A)} = \rho_{\bar{R}}$. Crucially, because $\ket{v(A_k)}$ is the fixed point of a normal MPS, it has zero correlation length. This implies that taking the trace of $\ket{v(A_k)}\bra{v(A_k)}$ over an interval factorizes the density matrices. In particular,
     \begin{equation}
         \rho_{\bar{C}}^k= \rho^{k}_{A_C\cup A_{L}\cup A_{R}}\otimes \rho^{k}_{B_C\cup B_L\cup B_{R}}\,.
     \end{equation}
    Therefore,
    \begin{align}\label{eq:sigma_ab_mps}
    \sigma_{AB} &= \sum_k |\alpha_k|^2 
    \underbrace{(\Id_{A_C}\otimes\mc{E}_{A_L}\otimes \mc{E}_{A_R})(\rho^{k}_{A_C\cup A_{L}\cup A_{R}})}_{\sigma^k_A}\otimes \underbrace{(\Id_{B_C}\otimes\mc{E}_{B_L}\otimes \mc{E}_{B_R})(\rho^{k}_{B_C\cup B_{L}\cup B_{R}})}_{\sigma^k_B}\\
    &= \sum_k |\alpha_k|^2 \sigma^k_A \otimes \sigma^k_B.
  \end{align}
where $\sigma^k_{A/B}$ are the terms with support on $A/B$. Since the RG fixed points $A_k$ are locally orthogonal, as defined in Eq.~\eqref{eq:local_ortho}, the (matrix) product of the density matrices satisfies $(\rho^{k}_{A_C\cup A_{L}\cup A_{R}})(\rho^{l}_{A_C\cup A_{L}\cup A_{R}}) \propto \delta_{k,l}$ and analogously for $B_C\cup B_{L}\cup B_{R}$. Since $\sigma^k_A$ and $\sigma^k_B$ are obtained by applying to them channels that act as the identity on $A_C$ and $B_C$, we also have
\be
\sigma^k_{A} \sigma^l_{A} =   \sigma^k_{B} \sigma^l_{B}\propto \delta_{k,l}.
\ee
Note that this is true because $A,B \geq 2D+2$, cf.~ Fig.\ref{fig:partition}. Physically, the regions $A$ and $B$ must be larger than the length scale on which the QC can affect the quantum correlations, thus preserving the local orthogonality of the RG fixed points.
    
   Using the orthogonality and standard identities for the von Neumann entanglement entropy $S(\rho) = -\tr \rho \log_2 (\rho)$, we get
   \begin{align}
       S(\sigma_{AB}) &= S\left(\sum_k |\alpha_k |^2\sigma^k_A \otimes \sigma^k_B \right) = H(\{|\alpha_k|^2\}) + \sum_k |\alpha_k |^2 S(\sigma_A^k) + \sum_k|\alpha_k |^2 S(\sigma_B^k) \\
       S(\sigma_A) &= H(\{|\alpha_k|^2\}) + \sum_k |\alpha_k |^2S(\sigma_A^k) \\
       S(\sigma_B) &= H(\{|\alpha_k|^2\}) + \sum_k|\alpha_k |^2 S(\sigma_B^k)  \\
       \Rightarrow I_{A,B}[\phi_N] &= S(\sigma_A) + S(\sigma_B) - S(\sigma_{AB}) = H(\{|\alpha_k|^2\})
   \end{align}
\end{proof}

Let us now recall the definition of LR nonstabilizerness introduced in the main text.
\begin{defn}
A family of states $\{\ket{\psi_N}\}_{N \in \mathbb{N}}$, has short-range (SR) nonstabilizerness if for all $\varepsilon_0>0$ and $\alpha>0$, there exists a local QC $Q_{D_N}$ of depth $D_N =O( {\rm polylog}(N))$ and a stabilizer state $\ket{S_N}$ such that for large enough $N$
\be \label{eq:sm_appendix}
\Delta(Q_{D_N} \ket{\psi_N},\ket{S_N})\leq  \frac{\varepsilon_0}{ N^\alpha} = \varepsilon_N,
\ee
where $\Delta(\ket{\psi},\ket{\phi})=\sqrt{1-|\braket{\psi|\phi}|^2}$ is the trace distance. If the sequence $\{\ket{\psi_N}\}_{N \in \mathbb{N}}$ does not have SR nonstabilizerness, we say that it has LR nonstabilizerness.
\end{defn}
We are finally in a position to prove our main result.
\begin{thm}
A sufficient condition for the RG fixed point of a translation invariant MPS \be 
\ket{v^{(N)}(A)} =\bigoplus_{k=1}^g  \alpha^{(N)}_k \ket{v^{(N)}(A_k)}\,,
\ee
to have LR nonstabilizerness according to Def.~\ref{def:sm} is that 
\begin{equation}\label{eq:main_eq_SM}
    \lim_{N\to \infty} H(\{|\alpha^{(N)}_j|^2\})\notin \mathbb{N}
\end{equation}
where we introduced the Shannon entropy $H(\{p_j\}) = -\sum_j p_j \log_2(p_j)$ and assumed $\sum_j|\alpha^{(N)}_j|^2=1$.
\end{thm}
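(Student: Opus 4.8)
The plan is to argue by contradiction, assuming the RG fixed point $\ket{v^{(N)}(A)}$ has SR nonstabilizerness, and then using the three ingredients already assembled (Lemma~\ref{lemma:mut_info_stab}, Lemma~\ref{lemma:main_lemma_tripartion}, and Lemma~\ref{lemma:mut_info_rfp}) together with a Fannes-type continuity bound to derive that $\lim_{N\to\infty} H(\{|\alpha_j^{(N)}|^2\})$ must be an integer, contradicting the hypothesis. Concretely, suppose that for some $\varepsilon_0>0$ and $\alpha>0$ there is a family of local QCs $Q_{D_N}$ with $D_N=O(\operatorname{polylog}(N))$ and stabilizer states $\ket{S_N}$ satisfying $\Delta(Q_{D_N}\ket{v^{(N)}(A)},\ket{S_N})\leq \varepsilon_N=\varepsilon_0/N^\alpha$.

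First I would fix, for each $N$, a tripartition $\Lambda = A\cup C_1'\cup C_2'\cup B$ of the periodic chain as in Fig.~\ref{fig:partition} and Lemma~\ref{lemma:main_lemma_tripartion}, choosing $|A|=|B|$ with $4D_N+4\leq |A\cup B|\leq 8D_N$ and the two connected components of $C'$ of size $\Theta(N)$ each; this is possible for large $N$ since $D_N=O(\operatorname{polylog}(N))$. Set $\ket{\phi_N}=Q_{D_N}\ket{v^{(N)}(A)}$, with reduced states $\rho_A,\rho_B,\rho_{AB}$ on the appropriate regions, and let $\tau_A,\tau_B,\tau_{AB}$ denote the corresponding reductions of $\ket{S_N}\bra{S_N}$. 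By contractivity of the trace distance under partial trace, $\Delta$ (hence the trace-norm distance) between $\rho_R$ and $\tau_R$ is at most $\varepsilon_N$ for $R\in\{A,B,AB\}$. Now apply the Fannes--Audenaert inequality~\eqref{eq:fannes}: $|S(\rho_R)-S(\tau_R)|\leq \varepsilon_N|R| + H_{\rm bin}(\varepsilon_N)$. Since $|A|,|B|,|AB|$ are all $O(D_N)=O(\operatorname{polylog}(N))$ while $\varepsilon_N=\varepsilon_0/N^\alpha$, the products $\varepsilon_N|R|\to 0$, and $H_{\rm bin}(\varepsilon_N)\to 0$ as well. Summing the three entropy differences with signs gives $|I_{A,B}[\phi_N]-I_{A,B}[S_N]|\to 0$ as $N\to\infty$.

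Next I would invoke Lemma~\ref{lemma:mut_info_rfp}, which (under the size condition $|A|,|B|\geq 2D_N+2$ guaranteed by~\eqref{eq:size_AB}) gives $I_{A,B}[\phi_N]=H(\{|\alpha_j^{(N)}|^2\})$ exactly, and Lemma~\ref{lemma:mut_info_stab}, which gives $I_{A,B}[S_N]=r_N$ for some $r_N\in\mathbb{N}$. Combining, $|H(\{|\alpha_j^{(N)}|^2\}) - r_N|\to 0$. Assuming $\lim_{N\to\infty} H(\{|\alpha_j^{(N)}|^2\})$ exists and equals $h$, the integers $r_N$ must converge to $h$, so $h\in\mathbb{N}$; this contradicts~\eqref{eq:main_eq_SM}. (If one does not want to assume convergence of the sequence, one can phrase the hypothesis via $\liminf$/$\limsup$: the estimate shows every subsequential limit of $H(\{|\alpha_j^{(N)}|^2\})$ is an integer, so if the limit exists it is an integer, which is the form used in the statement.) Therefore the fixed point has LR nonstabilizerness, and by the RG-equivalence argument in the main text the original MPS does too.

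The main obstacle — and the only step requiring genuine care rather than bookkeeping — is ensuring that all three size constraints are simultaneously compatible: the Fannes bound needs $|A\cup B|$ to grow slowly enough that $\varepsilon_N|A\cup B|\to 0$, while Lemmas~\ref{lemma:main_lemma_tripartion} and~\ref{lemma:mut_info_rfp} need $|A|,|B|\gtrsim 2D_N+2$ so that the light cones of $Q_{D_N}$ do not connect $A$ to $B$ through $C'$ and the local orthogonality of the $A_k$ is preserved. The window~\eqref{eq:size_AB}, namely $4D_N+4\leq |A\cup B|\leq 8D_N$, threads exactly this needle, and one should check it is nonempty for all large $N$ and leaves room for two $\Theta(N)$-sized components of $C'$; the rest is the continuity estimate and the quantization facts, both already available.
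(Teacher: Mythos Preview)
Your proposal is correct and follows essentially the same route as the paper's own proof: the contrapositive setup, the choice of tripartition with $|A|=|B|$ in the window $4D_N+4\le |A\cup B|\le 8D_N$, the Fannes--Audenaert bound combined with contractivity of the trace distance, and the invocation of Lemmas~\ref{lemma:mut_info_stab} and~\ref{lemma:mut_info_rfp} are exactly the paper's argument. The only cosmetic difference is that you phrase the SR-nonstabilizerness assumption as ``for some $\varepsilon_0,\alpha$'' rather than ``for all'', but since the definition guarantees the latter, your weaker hypothesis is automatically available and the proof is unaffected.
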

\begin{proof}
We will show that SR nonstabilizerness implies $   \lim_{N\to \infty} H(\{|\alpha^{(N)}_j|^2\})\in \mathbb{N}$. 
Let \be 
\ket{v^{(N)}(A)} = \bigoplus_{k=1}^g  \alpha^{(N)}_k \ket{v^{(N)}(A_k)}
\ee
be the normalized RG fixed point of a translation invariant MPS. Assume that for all $N$ there is a QC $Q_{D_N}$ such that for large enough $N$
\be\label{eq:assumption_contraditction_SM}
\Delta(\ket{\phi_N}, \ket{S_N}) \leq  \varepsilon_N\,,
\ee
where
\begin{equation}
\ket{\phi_N} = Q_{D_N}\ket{v^{(N)}(A)}\,.
\end{equation}
To calculate the mutual information between distant regions $A$ and $B$ we take a tripartition as in Lemma \ref{lemma:main_lemma_tripartion}. In particular, we take $|A|=|B|$ and
\begin{equation}\label{eq:assumption_size_R}
 4D_N+4\leq   |A\cup B| \leq 8 D_N\,.
\end{equation}
Define now $\rho=\ket{\phi_N}\bra{\phi_N}$ and $\omega=\ket{S_n}\bra{S_N}$ and
\begin{align}
    I_{A,B}[\phi_N] &=  S(\rho_A) + S(\rho_B) - S(\rho_{AB})\\
    I_{A,B}[S_N] &= S(\omega_A) + S(\omega_B) - S(\omega_{AB})\,,
\end{align}
    where $S(\rho)=-\tr [\rho\log_2\rho]$ is the von-Neumann entropy of the density matrix $\rho$ and $\rho_R = \tr_{\bar{R}} \rho $. Then, we can use the triangle inequality
    \begin{equation}\label{eq:triangle}
    |I_{A,B}[\phi_N] - I_{A,B}[
    S_N]| \leq \sum_{R \in (A,B,A\cup B)} | S(\rho_R) - S(\omega_R)|\,,
    \end{equation}
and  bound all terms on the right hand side. To this end, we use the Fannes inequality \citep{audenaert2007sharp}: given two density matrices $\sigma_R$ and $\tau_R$ on the region $R$, with trace distance $\delta = ||\rho-\sigma ||_1 /2$, then
\begin{equation}\label{eq:fannes_SM}
    |S(\sigma_R) - S(\tau_R)|\leq \delta |R| \log_2(2) + H_{\mathrm{bin}}(\delta)\,,
\end{equation}
for any region $R$, where $H_{\mathrm{bin}}(\varepsilon)=-\varepsilon\log_2 \varepsilon -(1-\varepsilon)\log_2 (1-\varepsilon)$.
Now, because the trace distance is contractive, Eq.~\eqref{eq:assumption_contraditction_SM} implies
\begin{equation}\label{eq:fannes_specific}
||\rho_R-\omega_R||_1/2\leq \Delta(\ket{\phi_N},\ket{S_N})=\varepsilon_N\,,
\end{equation}
for any region $R$. Thus, using Eqs.~\eqref{eq:triangle},~\eqref{eq:assumption_size_R}, and ~\eqref{eq:fannes_specific}, it follows that for all $\varepsilon_0>0$ and $\alpha>0$
  \begin{align} \label{eq:mut_info_limit_integer}
    |I_{A,B}[\phi_N] - I_{A,B}[
    S_N]| &\leq 3(\varepsilon_N |A\cup B|) +3H_{\mathrm{bin}}(\varepsilon_N)=O(\varepsilon_N D_N)\,,
\end{align}
so that
\begin{equation}
    \lim_{N\to \infty} |I_{A,B}[\phi_N] - I_{A,B}[
    S_N]|=0\,,
\end{equation}
where we also used $H_{\mathrm{bin}}(0) =0$. The conclusion $\lim_{N\to \infty} H(\{|\alpha^{(N)}_j|^2\})\in \mathbb{N}$ then follows straightforwardly using Lemmas~\ref{lemma:mut_info_stab} and ~\ref{lemma:mut_info_rfp}
\end{proof}

Finally, let us conclude by discussing an example suggesting that the condition of the above theorem is not necessary. Consider the state 
\begin{align}
\label{eq:counter_example_SM}
   \! \ket{\tilde{\phi}_N(t)}=&\alpha_1(t)\ket{00}^{\otimes N/2}+\alpha_2(t)\ket{01}^{\otimes N/2}\nonumber\\
   +&\alpha_3(t)\ket{10}^{\otimes N/2}+\alpha_4(t)\ket{11}^{\otimes N/2}\,,
\end{align}
where $\alpha^2_1(t)=10^{-1}$, $\alpha^2_2(t)=3^{-1/4}10^{-1}$, $\alpha^2_3(t)=t$ and $\alpha_4(t)^2=1-t-10^{-1}-3^{-1/4}10^{-1}$. By inspection, we see that there is a value $t^\ast\simeq 0.023$ for which $H[\{|\alpha_j(t^\ast)|^2\}]=1$. While we cannot show that $\ket{\psi_N(t^\ast)}$ has LR nonstabilizerness, we show that $\ket{\tilde{\phi}_N(t^\ast)}$ cannot be transformed into a stabilizer state exactly, by proving a more general theorem.
\begin{thm}\label{th:theorem_2}
Let \be 
\ket{v^{(N)}(A)} = \bigoplus_{k=1}^g  \alpha^{(N)}_k \ket{v^{(N)}(A_k)}
\ee
be a normalized RG fixed point of a  translation invariant MPS.
Let $Q_{D_N}$ be a shallow QC of depth $D_N=O(\operatorname{polylog}(N))$. Then, for large enough $N$, a necessary condition for $\ket{\phi} = Q_{D_N}\ket{v^{(N)}(A)}$ to be a stabilizer state is $|\alpha_i|^4/|\alpha_j|^4 \in \mathbb{Q}$ for all $i\neq j$. 
\end{thm}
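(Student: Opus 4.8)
The proof hinges on the special behavior of reduced density matrices of stabilizer states under partial transposition. The plan has three steps: (i) record that property for $\rho_{AB}=\tr_C\ket{\phi}\bra{\phi}$; (ii) transport it, via Lemmas~\ref{lemma:main_lemma_tripartion} and~\ref{lemma:mut_info_rfp}, onto the explicit sum-over-blocks form $\sigma_{AB}=\sum_k|\alpha_k|^2\,\sigma_A^k\otimes\sigma_B^k$ of that state; (iii) resolve the resulting operator identity block by block, which pins down the spectra of the blocks and forces the claimed rationality.

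Concretely, fix a tripartition $\Lambda=A\cup C\cup B$ of the kind used in Lemmas~\ref{lemma:main_lemma_tripartion}--\ref{lemma:mut_info_rfp}: $A$ and $B$ are intervals of equal length with $|A|,|B|\ge 2D_N+2$, well separated and surrounded by connected components of $C$ of size $\Theta(N)$, which is possible once $N$ is large. Assume $\ket{\phi}=Q_{D_N}\ket{v^{(N)}(A)}$ is a normalized stabilizer state and set $\rho_{AB}=\tr_C\ket{\phi}\bra{\phi}$. By Eq.~\eqref{eq:reduced_density_matrix} and the result of Ref.~\cite{sang2021entanglement} --- partial transposition turns $\rho_{AB}^{T_A}$ into a $\pm1$-weighted sum over the abelian group $H_{AB}$, and a short character-sum computation then collapses both its square and fourth power onto a common subgroup --- one has $(\rho_{AB}^{T_A})^4=\lambda\,(\rho_{AB}^{T_A})^2$ for a single scalar $\lambda\ge0$. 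Lemma~\ref{lemma:main_lemma_tripartion} exhibits local unitaries $U_A,U_B$ with $\sigma_{AB}:=(U_A\otimes U_B)\,\rho_{AB}\,(U_A\otimes U_B)^\dagger$ (the two displayed expressions for $\sigma_{AB}$ in that lemma being equal), and Lemma~\ref{lemma:mut_info_rfp} evaluates $\sigma_{AB}=\sum_k|\alpha_k|^2\,\sigma_A^k\otimes\sigma_B^k$, where each $\sigma_A^k,\sigma_B^k$ is a density matrix and, by local orthogonality Eq.~\eqref{eq:local_ortho}, $\sigma_A^k\sigma_A^l=0$ and $\sigma_B^k\sigma_B^l=0$ for $k\ne l$. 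Since $((U_A\otimes U_B)\,\cdot\,(U_A\otimes U_B)^\dagger)^{T_A}$ is conjugation by $\overline{U_A}\otimes U_B$, the identity above holds verbatim for $\sigma_{AB}^{T_A}$ with the same $\lambda$. Moreover the supports of the $(\sigma_A^k)^{T_A}$ remain pairwise orthogonal: the orthogonality of the $\sigma_A^k$ is localized on the untouched core region $A_C$ (the fixed-point reduced states $\rho^k_{A_C}$ being orthogonally supported there, and the boundary channels acting as the identity on $A_C$), and transposition sends orthogonal subspaces of $\mathcal{H}_{A_C}$ to their complex conjugates, which are again orthogonal.

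Using $\sigma_B^k\sigma_B^l=0$ and the orthogonality of $\mathrm{supp}\,(\sigma_A^k)^{T_A}$, all cross terms in powers of $\sigma_{AB}^{T_A}$ vanish, so
\[
(\sigma_{AB}^{T_A})^2=\sum_k|\alpha_k|^4\,[(\sigma_A^k)^{T_A}]^2\otimes[\sigma_B^k]^2,\qquad
(\sigma_{AB}^{T_A})^4=\sum_k|\alpha_k|^8\,[(\sigma_A^k)^{T_A}]^4\otimes[\sigma_B^k]^4,
\]
and since the $k$-th summands on the two sides of $(\sigma_{AB}^{T_A})^4=\lambda(\sigma_{AB}^{T_A})^2$ live in mutually orthogonal blocks they can be matched individually. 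For each $k$ with $\alpha_k\ne0$ this reads $Y_k^2=(\lambda/|\alpha_k|^4)\,Y_k$ with $Y_k:=[(\sigma_A^k)^{T_A}]^2\otimes[\sigma_B^k]^2\ge0$ and $Y_k\ne0$; hence $\lambda>0$ and every nonzero eigenvalue of $Y_k$ equals $\mu_k:=\lambda/|\alpha_k|^4$. A tensor product of positive semidefinite operators can have a single nonzero eigenvalue only if each factor does, so $[\sigma_B^k]^2$ has a unique nonzero eigenvalue $b_k$ and $[(\sigma_A^k)^{T_A}]^2$ a unique nonzero eigenvalue $a_k$, with $a_kb_k=\mu_k$. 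Since $\tr\sigma_B^k=1$, the state $\sigma_B^k$ is $1/n_k$ times a rank-$n_k$ projector and $b_k=1/n_k^2\in\mathbb{Q}$; and since $(\sigma_A^k)^{T_A}$ is Hermitian with spectrum in $\{0,\pm\sqrt{a_k}\}$ and $\tr(\sigma_A^k)^{T_A}=\tr\sigma_A^k=1$, denoting by $p_k>q_k\ge0$ the multiplicities of $\pm\sqrt{a_k}$ we get $(p_k-q_k)\sqrt{a_k}=1$, so $a_k=1/(p_k-q_k)^2\in\mathbb{Q}$. Therefore $\mu_k=a_kb_k\in\mathbb{Q}_{>0}$, whence $|\alpha_i|^4/|\alpha_j|^4=\mu_j/\mu_i\in\mathbb{Q}$ for all $i\ne j$, which is the assertion.

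I expect the main obstacle to be the bookkeeping in the middle step: one must verify carefully that passing from $\rho_{AB}$ to $\sigma_{AB}$ is genuinely a local-unitary conjugation (so that $\lambda$ is unchanged and the stabilizer identity really constrains the $\alpha_k$), and that the pairwise orthogonality of the blocks survives the partial transpose --- which works precisely because that orthogonality is carried by the core regions $A_C,B_C$, which $T_A$ transposes as a whole rather than entangling with the boundary factors. A lesser subtlety is that $(\sigma_A^k)^{T_A}$ is not positive semidefinite, so the rationality of $a_k$ must be extracted from its trace (equal to that of the genuine state $\sigma_A^k$, hence $1$) rather than from a flat-spectrum argument as for $\sigma_B^k$.
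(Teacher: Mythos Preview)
Your proof is correct and follows essentially the same route as the paper: the stabilizer partial-transpose identity $(\rho_{AB}^{T_A})^2\propto(\rho_{AB}^{T_A})^4$, transported by local unitaries to the block form $\sigma_{AB}=\sum_k|\alpha_k|^2\sigma_A^k\otimes\sigma_B^k$, and then resolved block by block via an eigenvalue count. The only cosmetic differences are that the paper applies its spectral lemma directly to the trace-one Hermitian operator $(\sigma_A^k)^T\otimes\sigma_B^k$ (obtaining $\lambda|\alpha_k|^4=(D_+^k-D_-^k)^2$ in one step), whereas you split the tensor factors and extract $a_k,b_k\in\mathbb{Q}$ separately; and your discussion of orthogonality surviving $T_A$ is more than needed, since $\sigma_B^k\sigma_B^l=0$ alone already kills the cross terms.
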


\begin{proof}
Let $\Lambda = A \cup C' \cup B$. Any stabilizer state $\ket{S}$ with reduced density matrix $\rho_{AB} = \tr_{C'} \ket{S} \bra{S}$ satisfies~\cite{sang2021entanglement} 
\be
(\rho_{AB}^{T_A})^2 \propto (\rho_{AB}^{T_A})^4.  
\ee
This is equivalent to 
\be
(\sigma_{AB}^{T_A})^2 \propto (\sigma_{AB}^{T_A})^4
\ee
for $\sigma_{AB} = U_A \otimes U_B \rho_{AB} U_A^\dagger \otimes U_B^\dagger$.
Consider now the state $\ket{\phi}$. For large enough $N$ we can take the tripartition in Lemma \ref{lemma:main_lemma_tripartion}, such that, as in Eq. \eqref{eq:sigma_ab_mps},
\begin{align}
\sigma_{AB} &= U_A \otimes U_B (\tr_{C'
}\ket{\phi} \bra{\phi} ) U_A^\dagger \otimes U_B^\dagger \\
&= \sum_k |\alpha_k|^2 \sigma^k_A \otimes \sigma^k_B
\end{align}
where $\sigma^k_{A/B}$ are locally orthogonal. Thus, $
(\sigma_{AB}^{T_A})^2= \lambda (\sigma_{AB}^{T_A})^4
 $ implies that for all $k$
\be
((\sigma_{A}^{k})^T \otimes \sigma^k_{B})^2=\lambda  |\alpha_k|^4 ((\sigma_{A}^{k})^T \otimes \sigma^k_{B})^4 
\ee 
Finally, Theorem~\ref{th:theorem_2} follows from the following result (whose proof is elementary).
   \begin{lem}
    Suppose $\sigma^2 = \nu \sigma^4$ for $\sigma$ a Hermitian matrix. Then all non-zero eigenvalues $\mu$ of $\sigma$ satisfy $\mu^2 \nu =1$. Furthermore, $\mathrm{Tr}(\sigma) =1 = (D_+ - D_-)|\mu|$ implies $\nu = (D_+ - D_-)^2$, where $D = D_+ + D_- = \mathrm{rank}(\sigma)$ and $D_{\pm}$ are the number of positive (negative) eigenvalues.
\end{lem}
\noindent Now, let $D^k_{\pm}$ be the number of positive (negative) eigenvalues of $(\sigma_{A}^{k})^T \otimes \sigma^k_B$. Then, the Lemma implies $\lambda|\alpha_k|^4  = (D^k_+ - D^k_-)^2 $ for all $k$. Thus, for $i\neq j$
\be
\frac{|\alpha_i|^4}{|\alpha_j|^4} = \frac{(D^i_+ - D^i_-)^2}{(D^j_+ - D^j_-)^2} \in \mathbb{Q}
\ee
which concludes the proof.
\end{proof}

Therefore, we see that there is no shallow QC $Q_{D_N}$ such that $Q_{D_N}\ket{\tilde{\phi}_N(t^\ast)}$ is a stabilizer state. Indeed, we constructed $\ket{\tilde{\phi}_N(t^\ast)}$ such that $|\alpha_1|^4/|\alpha_2|^4 = 1/\sqrt{3}$ is irrational.
\end{document}